%
%
%

\documentclass[10pt]{amsart}
\usepackage{amsmath,amssymb,amsthm,amsfonts,amsopn,cite,mathrsfs}

\usepackage{mathptmx}       
\usepackage{helvet}         
\usepackage{courier}        
\usepackage{type1cm}        
%
\usepackage{graphicx}        
\usepackage[position=top,aboveskip=0pt,captionskip=4pt]{subfig}			 
\usepackage{multicol}        
\usepackage[bottom]{footmisc}
\usepackage[hyphens]{url}
\usepackage{algorithm}
\usepackage{algpseudocode}

\usepackage[normalem]{ulem}

\DeclareMathAlphabet{\mathbfit}{OML}{cmm}{b}{it}


\makeindex             
                       
\newcommand{\bd}{\mathbf}
\newcommand{\bdi}{\mathbfit}
\def\lcm{\operatorname{lcm}}
\def\RR{\mathbb{R}}
    \def\CC{\mathbb{C}}
    \def\ZZ{\mathbb{Z}}
    
    \def\NN{\mathbb{N}}
    \def\TT{\mathbb{T}}
    

\newcommand{\Hil}[0]{
\mathcal{H} 
}
\newcommand{\norm}[2]{
\left\| #2 \right\|_{#1}
}

\def\<{\langle}
\def\>{\rangle}
\def\SSphi{\bd S_\Phi}

\newcommand{\dt}{$\Delta T$}
\newcommand{\dx}{$\Delta \xi$}

\theoremstyle{plain}
\newtheorem{Thm}{Theorem}
\newtheorem{Def}{Definition}

 \newtheorem{Cor}{Corollary}
 
 \newtheorem{Pro}{Proposition}
 \theoremstyle{definition}


\begin{document}

\title{Frame Theory for Signal Processing in Psychoacoustics}
\author{Peter Balazs, Nicki Holighaus, Thibaud Necciari, and Diana Stoeva}
\address{Acoustics Research Institute, Austrian Academy of Sciences, Wohllebengasse 12-14, 1040 Wien, Austria, e-mail: 
peter.balazs@oeaw.ac.at, 
nicki.holighaus@oeaw.ac.at, thibaud.necciari@oeaw.ac.at,\linebreak
diana.stoeva@oeaw.ac.at
}
%
%
\begin{abstract}
This review chapter
aims to strengthen the link between frame theory and signal processing tasks in psychoacoustics.
On the one side, the basic concepts of frame theory are presented and some proofs are provided to explain those concepts in some detail. 
The goal is to reveal to hearing scientists how this mathematical theory could be relevant for their research. In particular, we focus on frame theory in a filter bank approach, which is probably the most relevant view-point for audio signal processing. On the other side, basic psychoacoustic concepts are presented to stimulate mathematicians to apply their knowledge in this field.\end{abstract}

\maketitle

\section{Introduction}\label{Sec:intro0}

In the fields of audio signal processing and hearing research, continuous research efforts are dedicated to the development of optimal representations of sound signals, suited for particular applications. 
However, each application and each of these two disciplines has specific requirements with respect to \textit{optimality} of the transform. 

For researchers in audio signal processing, an optimal signal representation should allow to extract, process, and re-synthesize relevant information, and avoid any useless inflation of the data, while at the same time being easily interpretable. In addition, although not a formal requirement, but being motivated by the fact that most audio signals are targeted at humans, the representation should take human auditory perception into account.  Common tools used in signal processing are linear time-frequency analysis methods that are mostly implemented as filter banks.

For hearing scientists, an optimal signal representation should allow to extract the perceptually relevant information in order to better understand sound perception. In other terms, the representation should reflect the  peripheral ``internal''  representation of sounds in the human auditory system. The tools used in hearing research are computational models of the auditory system. Those models come in various flavors but their initial steps in the analysis process usually consist in several parallel bandpass filters followed by one or more nonlinear and signal-dependent processing stages. The first stage, implemented as a (linear) filter bank, aims to account for the spectro-temporal analysis performed in the cochlea.
 The subsequent nonlinear stages aim to account for the various nonlinearities that occur in the periphery (e.g. cochlear compression) and at more central processing stages of the nervous system (e.g. neural adaptation). A popular auditory model, for instance, is the compressive gammachirp filter bank (see Sec.~\ref{ssec:audfilters}). In this model, a linear prototype filter is followed by a nonlinear and level-dependent compensation filter to account for cochlear compression. Because auditory models are mostly intended as perceptual analysis tools, they do not feature a synthesis stage, i.e. they are not necessarily invertible. Note that a few models do allow for an approximate reconstruction, though.

It becomes clear that filter banks play a central role in hearing research and audio signal processing alike, although the requirements of the two disciplines differ. 
This divergence of the requirements, in particular the need for signal-dependent nonlinear processing in auditory models, may contrast with the needs of signal processing applications. 
But even within each of those fields, demands for the properties of transforms are diverse, as becoming evident by the many already existing methods.
Therefore, it can be expected that the perfect signal representation, i.e. one that would have all desired properties for arbitrary applications in one or even both fields, does not exist.

This manuscript demonstrates how \emph{frame theory} can be considered a particularly useful \emph{conceptual} background for scientists in both hearing and audio processing, and presents some first motivating applications. 
Frames provide the following general properties: \emph{perfect reconstruction}, \emph{stability}, \emph{redundancy}, and a \emph{signal-independent, linear inversion procedure}. In particular, frame theory can be used to analyze any filter bank, thereby providing useful insight into its structure and properties. In practice, if a filter bank construction (i.e. including both the analysis and synthesis filter banks) satisfies the frame condition (see Sec.~\ref{sec:erbfb}), it benefits from all the frame properties mentioned above. Why are those properties essential to researchers in audio signal processing and hearing science?

\textbf{Perfect reconstruction property:} With the possible exception of frequencies outside the audible range, a non-adaptive analysis filter bank {, i.e. one that is general, not signal-dependent,} has no means of determining and extracting exactly the perceptually relevant information. 
For such an extraction, signal-dependent information would be crucial. 
Therefore, the only way to ensure that a linear, signal-independent analysis stage\footnote{As given by any fixed analysis filter bank.}, possibly followed by a nonlinear processing stage, captures all \emph{perceptually relevant signal components} is to ensure that it does \emph{not lose any} information at all. 
This, in fact, is \emph{equivalent to being perfectly invertible}, i.e. having a perfect reconstruction property. 
Thus, this property benefits the user even when reconstruction is not intended per-se. Note that in general ``being perfectly invertible'' need not necessarily imply that a concrete inversion procedure is known. In the frame case, a constructive method exists, though.

\textbf{Stability:} 
For sound processing, 
 stability is essential in the sense that, for the analysis stage, when two signals are similar (i.e., their difference is small), the difference between their corresponding analysis coefficients should also be small. 
For the synthesis stage, a signal reconstructed from slightly distorted coefficients should be relatively close to the original signal, that is the one reconstructed from undistorted coefficients. 
 From an energy point of view, 
 signals which are similar in energy should provide analysis coefficients whose energy is also similar.
So the respective energies remain roughly proportional. In particular, considering a signal mixture, the combination of stability and linearity ensures that every signal component is represented and weighted according to its original energy. In other terms, individual signal components are represented proportional to their energy, which is very important for, e.g., visualization. Even in a perceptual analysis, where inaudible components should not be visualized equally to audible components having the same energy, this stability property is important. To illustrate this, recall that the nonlinear post-processing stages in auditory models are signal dependent. 
That is, also the inaudible information can be essential to properly characterize the nonlinearity. For instance, consider again the setup of the \emph{compressive gammachirp} model where an intermediate representation is obtained through the application of a linear analysis filter bank  to the input signal. 
The result of this linear transform determines the shape of the subsequent nonlinear compensation filter. Note that the \emph{whole} intermediate representation is used. Consequently, the proper estimation of the nonlinearity crucially relies on the signal representation being accurate, i.e. \emph{all} signal components being represented and appropriately weighted. This \emph{accurateness} comes for free if the analysis filter bank forms a frame.

\textbf{Signal-independent, linear inversion:} 
A consistent (i.e. signal-independent) inversion procedure is of great benefit in signal processing applications. 
It implies that 
a single algorithm/implementation can perform all the necessary synthesis tasks. 
For nonlinear representations, finding a signal-independent procedure which provides a stable reconstruction is a highly nontrivial affair, if it is at all possible. With linear representations, such a procedure is easier to determine and this can be seen as an advantage of the linearity. 
The linearity provided by the reconstruction algorithm also significantly simplifies separation tasks. 
In a linear representation, a separation in the coefficient (time-frequency) domain, i.e. before synthesis, is equivalent to a separation in the signal domain. 
Such a property is highly relevant, for instance, to computational auditory scene analysis systems that, to some extent, are sound source separators (see Sec.~\ref{sec:casa0}).

\textbf{Redundancy:} 
 Representations which are sampled at critical density are often unsuitable for visualization, since they lead to a low resolution, which may lead to many distinct signal components being integrated into a single coefficient of the transform.
Thus, the individual coefficients may contain information from a lot of different sources, which makes them hard to interpret. Still, the whole set of coefficients captures all the desired signal information if (and only if) the transform is invertible.
Redundancy provides higher resolution and so components that are separated in time or in frequency can be separated in the transform domain. 
Furthermore, redundant representations are smoother and therefore easier to read than their critically sampled counterparts.

Moreover, redundant representations provide some resistance against noise and errors. This is in contrast to non-redundant systems, where distortions can not be compensated for. This is used for de-noising approaches. In particular, if a signal is synthesized in a straight-forward way from noisy (redundant) coefficients, the synthesis process has the tendency to reduce the energy of the noise, i.e. there is some noise cancellation.
\\

Besides the above properties, which are direct consequences of the frame inequalities, 
the generality of frame theory enables the consideration of \emph{additional important properties}. In the setting of perceptually motivated audio signal analysis and processing, these include:

\textbf{Perceptual relevance:} 
We have stressed that the only way to ensure that all perceptually relevant information is kept is to accurately capture all the information by using a stable and perfectly invertible system for analysis. However, in an auditory model or in perceptually motivated signal processing, perceptually irrelevant components should be discarded at some point. 
If only a linear signal processing framework is desired, this can be achieved by applying a perceptual weighting\footnote{Different frequency ranges are given varying importance in the auditory system} and a masking model, see Sec.~\ref{sec:psychotheory}. 
If a nonlinear auditory model like the compressive gammachirp filter bank is used, recall that the nonlinear stage is mostly determined by the coefficients at the output of the linear stage. Therefore, all information should be kept up to the nonlinear stage. 
In other words, discarding information already in the analysis stage might falsify the estimation of the nonlinear stage, 
thereby resulting in an incorrect perceptual analysis. 
We want to stress here the importance of being able to \emph{selectively} discard unnecessary information, in contrast to information being \emph{involuntarily lost} during the analysis and/or synthesis procedures.

\textbf{A flexible signal processing framework:} All stable and invertible filter banks 
 form a frame and therefore benefit from the frame properties discussed above. In addition, using filter banks that are frames allows for flexibility. For instance, one can gradually tune the signal representation such as the \emph{time-frequency resolution}, analysis filters' \emph{shape} and \emph{bandwidth}, \emph{frequency scale}, \emph{sampling density} etc., while at the same time retaining the crucial frame properties. 
It can be tremendously useful to provide a single and adaptable framework that allows to switch model parameters and/or transition between them. By staying in the common general setting of filter bank frames, the linear filter bank analysis in an auditory model or signal processing scheme can be seen as an exchangeable, practically self-contained block in the scheme. Thus, the filter bank parameters, e.g. those mentioned before, can be tuned by scientists according to their preference, without the need to redesign the remainder of the model/scheme.  
Such a common background leads to results being more comparable across research projects and thus benefits not only the individual researcher, but the whole field. Two main advantages of a common background are the following: first, the properties and parameters of various models can be easily interpreted and compared across contributions; second, by the adaption of a linear model to obtain a nonlinear model the new model parameters 
 remain interpretable.

\textbf{Ease of integration:} Filter banks are already a common tool in both hearing science and signal processing. Integrating a filter bank frame into an existing analysis/processing framework will often only require minor modifications of existing approaches. Thus, frames provide a theoretically sound foundation without the need to fundamentally re-design the remainder of your analysis (or processing) framework.\\

\emph{In some cases, you might already implicitly use frames without knowing it. In that case, we provide here the conceptual background necessary to unlock the full potential of your method.} \\

The rest of this chapter is organized as follows: In Section \ref{sec:psychotheory}, we provide basic information about the human auditory system and introduce some psychoacoustic concepts. 
In Section \ref{sec:frameth} we present the basics of frame theory providing the main definitions and a few crucial mathematical statements.
In Section \ref{sec:erbfb} we provide some details on filter bank frames.
 The chapter concludes with Section \ref{sec:appli} where some examples are given for the application of frame theory to signal processing in psychoacoustics.

\section{\label{sec:psychotheory}The auditory analysis of sounds}

This section provides a brief introduction to the human auditory system. Important concepts that are relevant to the problems treated in this chapter are then introduced, namely auditory filtering and auditory masking. For a more complete description of the hearing organ, the interested reader is referred to e.g.~\cite{Fastl:2006a,Plack:2013b}.

\subsection{\label{ssec:anatomy}Ear's anatomy}

The human ear is a very sensitive and complex organ whose function is to transform pressure variations in the air into the percept of sound. To do so, sound waves must be converted into a form interpretable by the brain, specifically into neural action potentials. Fig.~\ref{fig:earanatomy} shows a simplified view of the ear's anatomy. Incoming sound waves are guided by the pinna into the ear canal and cause the eardrum to vibrate. Eardrum vibrations are then transmitted to the cochlea by three tiny bones that constitute the ossicular chain: the malleus, incus, and stapes. The ossicular chain acts as an impedance matcher. Its function is to ensure efficient transmission of pressure variations in the air into pressure variations in the fluids present in the cochlea. The cochlea is the most important part of the auditory system because it is where pressure variations are converted into neural action potentials.

The cochlea is a rolled-up tube filled with fluids and divided along its length by two membranes, the Reissner's membrane and basilar membrane (BM). A schematic view of the unrolled cochlea is shown in Fig.~\ref{fig:earanatomy} (the Reissner's membrane is not represented). It is the response of the BM to pressure variations transmitted through the ossicular chain that is of primary importance. Because the mechanical properties of the BM vary across its lengths (precisely, there is a gradation of stiffness from base to apex), BM stimulation results in a complex movement of the membrane. In case of a sinusoidal stimulation, this movement is described as a traveling wave. The position of the peak in the pattern of vibration depends on the frequency of the stimulation. High-frequency sounds produce maximum displacement of the BM near the base with little movement on the rest of the membrane. Low-frequency sounds rather produce a pattern of vibration which extends all the way along the BM but reaches a maximum 
before the apex. The frequency that gives the maximum response at a particular point on the BM is called the ``characteristic frequency'' (CF) of that point. In case of a broadband stimulation (e.g. an impulsive sound like a click), all points on the BM will oscillate. In short, the BM separates out the spectral components of a sound similar to a Fourier analyzer.

The last step of peripheral processing is the conversion of BM vibrations into neural action potentials. This is achieved by the inner hair cells that sit on top of the BM. There are about 3500~inner hair cells along the length of the cochlea ($\approx$35~mm in humans). The tip of each cell is covered with sensor hairs called stereocilia. The base of each cell directly connects to auditory nerve fibers. When the BM vibrates, the stereocilia are set in motion, which results in a bio-electrical process in the inner hair cells and, finally, in the initiation of action potentials in auditory nerve fibers. Those action potentials are then coded in the auditory nerve and conveyed to the central system where they are further processed to end up in a sound percept. Because the response of auditory nerve fibers is also frequency specific and the action potentials vary over time, the ``internal representation'' of a sound signal in the auditory nerve can be likened to a time-frequency representation.

\begin{figure}[ht]
	\begin{center}
	  \includegraphics[width=.75\textwidth]{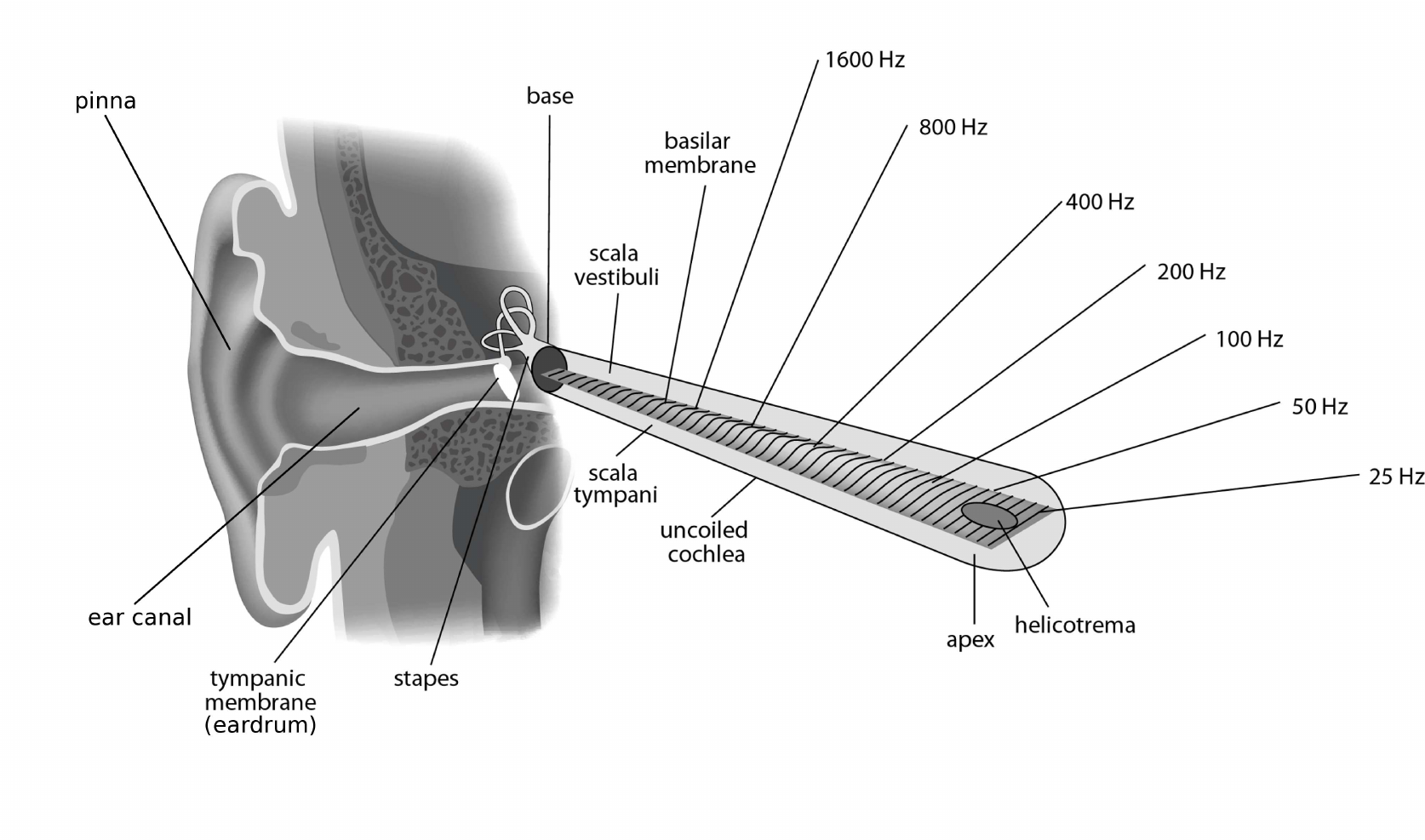}
	\end{center}
        \caption{Anatomy of the human ear with a schematic view of the unrolled cochlea. Adapted from~\cite{Kern:2008a}.}
	\label{fig:earanatomy}
\end{figure}

\subsection{\label{ssec:audfilters}The auditory filters concept}

Because of the frequency-to-place transformation (also called tonotopic organization) in the cochlea, and the transmission of time-dependent neural signals, the BM can be modeled in a first linear approximation as a bank of overlapping bandpass filters, named ``critical bands'' or ``auditory filters''. The center frequencies and bandwidth of the auditory filters, respectively, approximate the CF and width of excitation on the BM. Noteworthy, the width of excitation depends on level as well: patterns become wider and asymmetric as sound level increases (e.g.~\cite{Glasberg:1990a}).
Several auditory filter models have been proposed based on the results from psychoacoustics experiments on masking (see e.g.~\cite{Lyon:2010a} and Sec.~\ref{ssec:masking}). A popular auditory filter model is the gammatone filter \cite{Patterson:1992a} (see Fig~\ref{fig:gammatones}). Although gammatone filters 
do not capture the level dependency of the actual auditory filters, their ease of implementation made them popular in audio signal processing (e.g. \cite{Valero:2012a,Zhao:2012a}). More realistic auditory filter models are, for instance, the roex and gammachirp filters \cite{Glasberg:1990a,Unoki:2006a}. Other level-dependent and more complex auditory filter banks include for example the dual resonance non-linear filter bank \cite{Lopez-Poveda:2001a} or the dynamic compressive gammachirp filter bank \cite{Irino:2006b}. The two approaches in \cite{Lopez-Poveda:2001a,Irino:2006b} feature a linear filter bank followed by a signal-dependent nonlinear stage. As mentioned in the introduction, this is a particular way of describing a nonlinear system by modifying a linear system. Finally, it is worth noting that besides psychoacoustic-driven auditory models, mathematically founded models of the auditory periphery have been proposed. Those include, for instance, the wavelet auditory model \cite{bete94} or the 
``EarWig'' time-frequency distribution \cite{ODonovan:2005a}.

\begin{figure}
	\centering
	\includegraphics[scale=0.4]{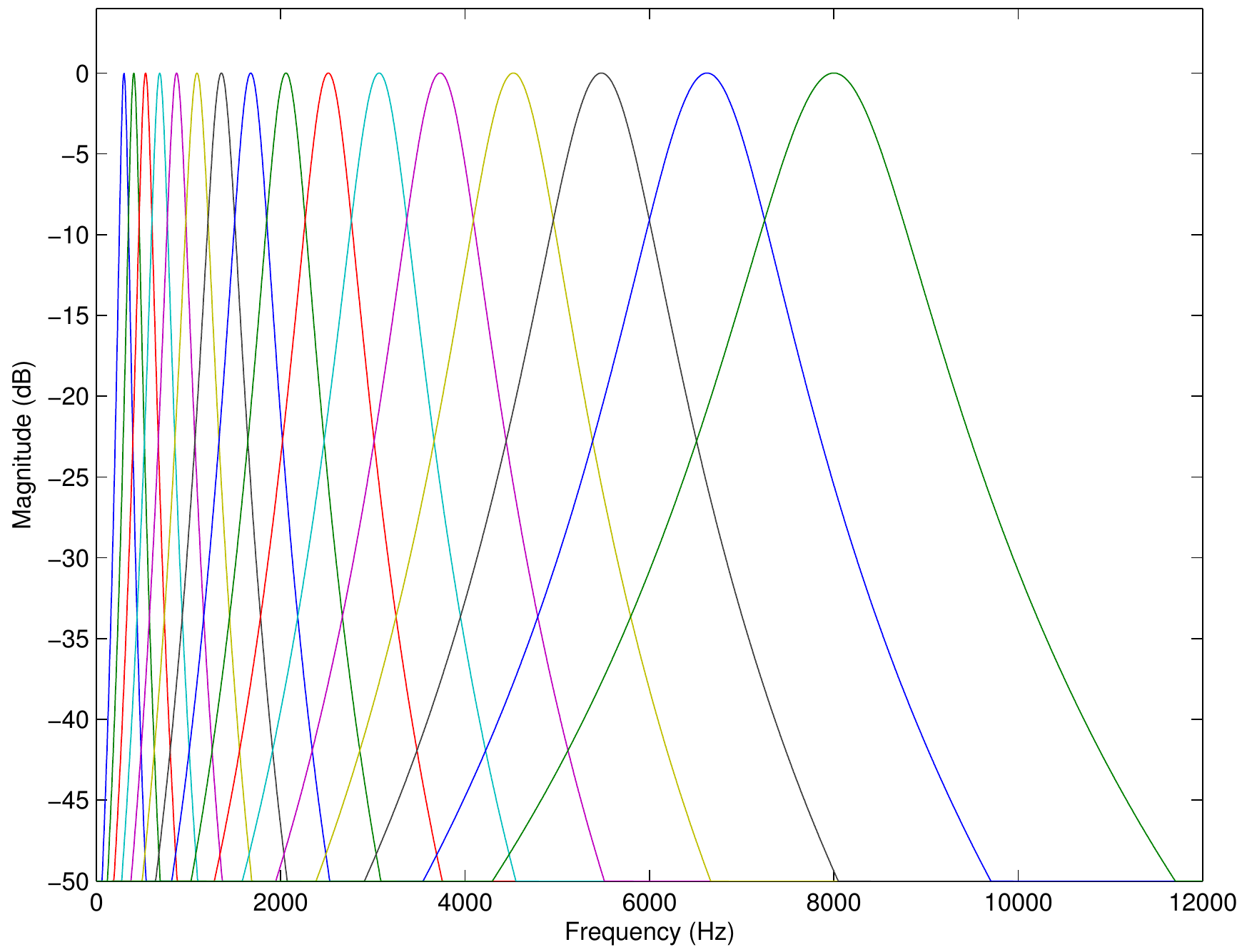}
	\caption{A popular auditory filter model: the gammatone filter bank. The magnitude responses (in~dB) of 16~gammatone filters in the frequency range 300-8000~Hz are represented on a linear frequency scale.}
	\label{fig:gammatones}
\end{figure}

The bandwidth of the auditory filters has been determined based on psychoacoustic experiments. The estimation of bandwidth based on loudness perception experiments gave rise to the concept of Bark bandwidth defined by~\cite{Zwicker:1980a}
\begin{equation}\label{eq:Barkbw}
	BW_{\mathrm{Bark}} = 25 + 75 \left( 1 + 1.4\times10^{-6} \xi^2\right)^{0.69}
\end{equation}

\noindent where $\xi$ denotes the frequency and $BW$ denotes the bandwidth, both in~Hz. Another popular concept is the equivalent rectangular bandwidth (ERB), that is the bandwidth of a rectangular filter having the same peak output and energy as the auditory filter. The estimations of ERBs are based on masking experiments. The ERB is given by~\cite{Glasberg:1990a}
\begin{equation}\label{eq:erbw}
	BW_{\mathrm{ERB}} = 24.7 + \frac{\xi}{9.265} \quad .
\end{equation}

\noindent $BW_{\mathrm{Bark}}$ and $BW_{\mathrm{ERB}}$ are commonly used in psychoacoustics and signal processing to approximate the auditory spectral resolution at low to moderate sound pressure levels (i.e. 30--70~dB) where the auditory filters' shape remains symmetric and constant. See for example \cite{Glasberg:1990a,Unoki:2006a} for the variation of $BW_{\mathrm{ERB}}$ with level.

Based on the concepts of Bark and ERB bandwidths, corresponding frequency scales have been proposed to represent and analyze data on a scale related to perception. To describe the different mappings between the linear frequency domain and the nonlinear perceptual domain we introduce the function $F_{\mathrm{AUD}}: \xi \rightarrow \mathrm{AUD}$ where $\mathrm{AUD}$ is an auditory unit that depends on the scale. The Bark scale is~\cite{Zwicker:1980a}
\begin{equation}\label{eq:Barkrate}
	F_\mathrm{Bark}(\xi) =  13 \arctan (0.00076 \xi) + 3.5 \arctan (\xi / 7500)^2
\end{equation}
and the ERB scale is~\cite{Glasberg:1990a}
\begin{equation}\label{eq:erbrate}
	F_\mathrm{ERB}(\xi) = 9.265 \ln \left( 1 + \frac{\xi}{228.8455}\right) \:.
\end{equation}
Both auditory scales are connected to the ear's anatomy. One AUD unit indeed corresponds to a constant distance along the BM. 1~$\mathrm{Bark}$ corresponds to 1.3~mm \cite{Fastl:2006a} while 1~$\mathrm{ERB}$ corresponds to 0.9~mm \cite{Glasberg:1990a,Greenwood:1990a}.

\subsection{\label{ssec:masking}Auditory masking}

The phenomenon of masking is highly related to the spectro-temporal resolution of the ear and has been the focus of many psychoacoustics studies over the last 70~years. Auditory masking refers to the increase in the detection threshold of a sound signal (referred to as the ``target'') due to the presence of another sound (the ``masker''). Masking is quantified by measuring the detection thresholds of the target in presence and absence of the masker; the difference in thresholds (in~dB) thus corresponds to the \textit{amount of masking}. In the literature, masking has been extensively investigated in the spectral or temporal domain. The results were used to develop models of spectral or temporal masking that are currently implemented in audio applications like perceptual coding (e.g. \cite{Painter:2000a,Ravelli:2008a}) or sound processing (e.g. \cite{Balazs:2010a,Gunawan:2010a}. Only a few studies investigated masking in the joint time-frequency domain. We present below some typical 
psychoacoustic results on spectral, temporal, and spectro-temporal masking. For more results and discussion on the origins of masking the interested reader is referred to e.g. \cite{Fastl:2006a,Necciari:2010a,Moore:2012a}.

In the following, we denote by $\xi_{\{M,T\}}$, $D_{\{M,T\}}$, and $L_{\{M,T\}}$ the frequency, duration, and level, respectively, of masker or target. Those signal parameters are fixed by the experimenter, i.e. they are known. The frequency shift between masker and target is $\Delta \xi = \xi_T - \xi_M$ and the time shift \dt~is defined as the onset delay between masker and target. Finally, $AM$ denotes the amount of masking in~dB.

\subsubsection{Spectral masking}\label{sec:SimMask0}

To study spectral masking, masker and target are presented simultaneously (since usually $D_M > D_T$, this is equivalent to saying that 0 $\leq \Delta T < D_M-D_T$) and \dx~is varied. There are two ways to vary \dx, either fix $\xi_T$ and vary $\xi_M$ or vice versa. Similarly, one can fix $L_M$ and vary $L_T$ or vice versa. In short, various types of masking curves can be obtained depending on the signal parameters. A common spectral masking curve is a masking pattern that represents $L_T$ or $AM$ as a function of $\xi_T$ or \dx$\,$ (see Fig.~\ref{fig:MPfreq}). To measure masking patterns, $\xi_M$ and $L_M$ are fixed and $AM$ is measured for various \dx. Under the assumption that $AM(\xi_T)$ corresponds to a certain ratio of masker-to-target energy at the output of the auditory filter centered at $\xi_T$, masking patterns measure the responses of the auditory filters centered at the individual $\xi_T$s. Thus, masking patterns can be used as indicator of the \textit{spectral spread of masking} of the 
masker or, in other terms, the spread of excitation of the masker on the BM. This spectral spread can in turn be used to derive a masking threshold, as used for example in audio codecs \cite{Painter:2000a}. See also Sec.~\ref{sec:Irrel0}.

Fig.~\ref{fig:MPfreq} shows typical masking patterns measured for narrow-band noise maskers of different levels ($L_M$ = 45, 65, and 85~dB SPL, as indicated by the different lines) and frequencies ($\xi_M$ = 0.25, 1, and 4~kHz, as indicated by the different vertical dashed lines). In this study, $D_M = D_T$ = 200~ms. The masker was a 80-Hz-wide band of Gaussian noise centered at $\xi_M$. The target was also a 80-Hz band of noise centered at $\xi_T$. The main properties to be observed here are:
\begin{enumerate}
\item[(i)] For a given masker (i.e. a pair of $\xi_M$ and $L_M$), $AM$ is maximum for \dx$\,$ = 0 and decreases as $|\Delta \xi|$ increases. This reflects the decay of masker excitation on the BM. 
\item[(ii)] Masking patterns broaden with increasing level. This reflects the broadening of auditory filters with increasing level \cite{Glasberg:1990a}. 
\item[(iii)] Masking patterns are broader at low than at high frequencies (see~\eqref{eq:Barkbw}-\eqref{eq:erbw}). This reflects the fact that the density of auditory filters is higher at low than at high frequencies. Consequently, a masker with a given bandwidth will excite more auditory filters at low frequencies. 
\end{enumerate}

\begin{figure}
	\centering
	\includegraphics[width=\textwidth]{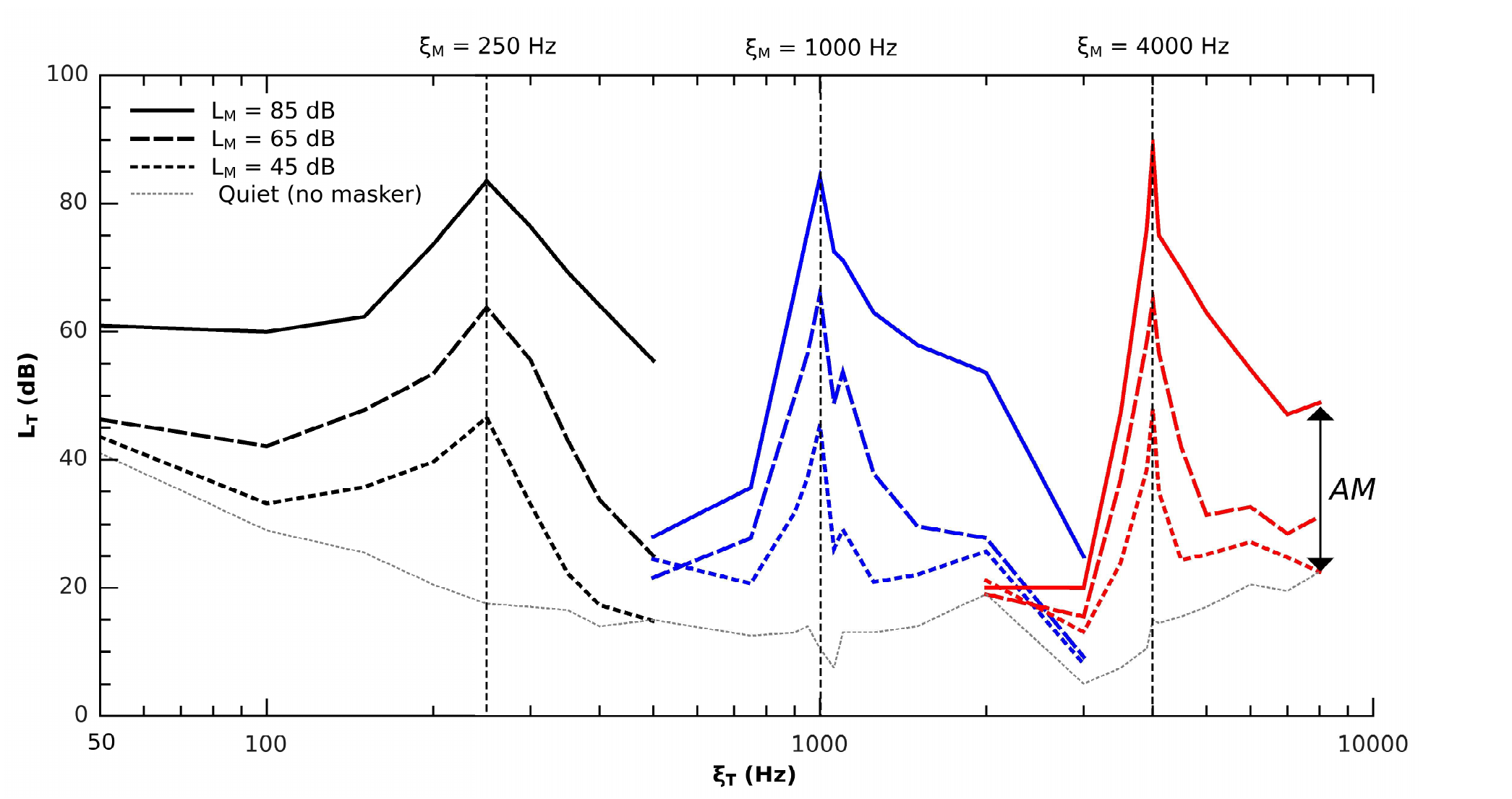}
	\caption{Masking patterns for narrow-band noise maskers of different levels and frequencies. $L_T$ (in dB SPL) is plotted as a function of $\xi_T$ (in Hz) on a logarithmic scale. The gray dotted curve indicates the threshold in quiet. The difference between any of the colored curves and the gray curve thus corresponds to $AM$, as indicated by the arrow. Source: mean data for listeners JA and AO in \cite[Experiment~3, Figs.~5-6]{Moore:1998a}.}
	\label{fig:MPfreq}
\end{figure}

\subsubsection{Temporal masking}

By analogy with spectral masking, temporal masking is measured by setting \dx~= 0 and varying \dt. \textit{Backward} masking is observed for $\Delta T <$ 0, that is when the target precedes the masker in time.  \textit{Forward} masking is observed for $\Delta T \geq D_M$, that is when the target follows the masker.
Backward masking is hardly observed for $\Delta T <$ -20~ms and is mainly thought to result from attentional effects \cite{Soderquist:1981a,Fastl:2006a}. 
In contrast, forward masking can be observed for $\Delta T \geq D_M$ + 200~ms. Therefore, in the following we focus on forward masking. 

Typical forward masking curves are represented in Fig.~\ref{fig:MPtime}. The left panel shows the effect of $L_M$ for $\xi_M = \xi_T$ = 4~kHz (mean data from \cite{Jesteadt:1982a}). In this study, masker and target were sinusoids ($D_M$ = 300~ms, $D_T$ = 20~ms). The main features to be observed here are (i) the temporal decay of forward masking is a linear function of $\log(\Delta T)$ and (ii) the rate of this decay strongly depends on $L_M$. The right panel shows the effect of $D_M$ for $\xi_T$ = 2~kHz and $L_M$ = 60~dB SPL (mean data from \cite{Zwicker:1984a}). In this study, the masker was a pulse of uniformly masking noise (i.e. a broad-band noise producing the same $AM$ at all frequencies in the range 0--20~kHz, see \cite{Fastl:2006a}). The target was a sinusoid with $D_T$ = 5~ms. It can be seen that the $AM$ (i.e. the difference between the connected symbols and the star) at a given \dt$\,$ increases with increasing $D_M$, at least for $\Delta T - D_M <$ 100~ms. Finally, a comparison of the two panels 
in Fig.~\ref{fig:MPtime} for $L_M$ = 60~dB indicates that, for $\Delta T - D_M \leq$ 50~ms, the 300-ms sinusoidal masker (empty diamonds left) produces more masking than the 200-ms broad-band noise masker (empty squares right). Despite the difference in $D_M$, increasing the duration of the noise masker to 300~ms is not expected to account for the difference in $AM$ of up to 20~dB observed here \cite{Fastl:2006a,Zwicker:1984a}. 

\begin{figure}
	\centering
	\includegraphics[width=\textwidth]{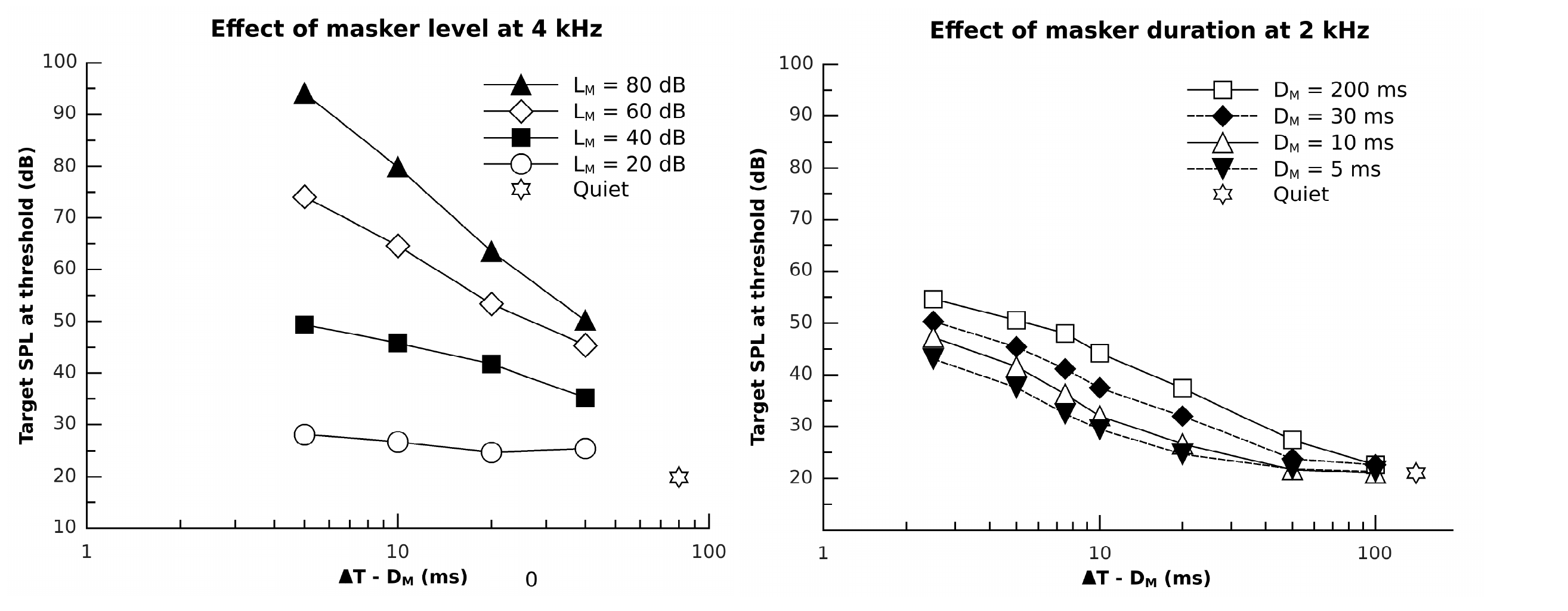}
	\caption{Temporal (forward) masking curves for sinusoidal (left) and broadband noise maskers (right). $L_T$ (in dB SPL) is plotted as a function of the temporal gap between masker offset and target onset, i.e. $\Delta T - D_M$ (in ms) on a logarithmic scale. Left panel: masking curves for various $L_M$s and $D_M$ = 300~ms (adapted from \cite{Jesteadt:1982a}). Right panel: masking curves for various $D_M$s and $L_M$ = 60~dB (adapted from \cite{Zwicker:1984a}). Stars indicate the target thresholds in quiet.}
	\label{fig:MPtime}
\end{figure}

\subsubsection{Time-frequency masking}\label{sec:tfmask0}
  
Only a few studies measured spectro-temporal masking patterns, that is \dt$\,$ and \dx$\,$ both systematically varied (e.g. \cite{Kidd:1981a,Soderquist:1981a}). Those studies mostly involved long ($D_M \geq$ 100~ms) sinusoidal maskers. In other words, those studies provide data on the time-frequency spread of masking for long and narrow-band maskers. In the context of time-frequency decompositions, a set of elementary functions, or ``atoms'', with good localization in the time-frequency domain (i.e. short and narrow-band) is usually chosen, see Sec.~\ref{sec:frameth}. To best predict masking in the time-frequency decompositions of sounds, it seems intuitive to have data on the time-frequency spread of masking for such elementary atoms, as this will provide a good match between the masking model and the sound decomposition. This has been investigated in \cite{Necciari:2010a}. Precisely, spectral, forward, and time-frequency masking have been measured using Gabor atoms of the form $s_i(t) = \sin (2 \pi \xi_
i t + \pi/4)e^{-\pi(\Gamma t)^2}$ with $\Gamma = 600~\mathrm{s}^{-1}$ as masker and target. According to the definition of Gabor atoms in~\eqref{Gabsystem}, the masker was defined by $s_M(t) = \Im\{e^{{\rm i}\pi/4}g_{\xi_M,0}\}$, where $\Im$ denotes the imaginary part, with a Gaussian window $\gamma(t)=e^{-\pi (\Gamma t)^2}$ and $\xi_M$ = 4~kHz. The masker level was fixed at $L_M$ = 80~dB. The target was defined by $s_T(t+\Delta T) = \Im\{e^{{\rm i}(\pi/4+2\pi \xi_T \Delta T)}\gamma_{\xi_T,-\Delta T}\}$ with $\xi_T = \xi_M + \Delta \xi$. The set of time-frequency conditions measured in \cite{Necciari:2010a} is illustrated in Fig.~\ref{fig:TFconds}. Because in this particular case we have $\xi_T \Delta T \in \mathbb{N}$, the target term reduces to $s_T(t+\Delta T) = \Im\{e^{{\rm i}(\pi/4)}\gamma_{\xi_T,-\Delta T}\}$. The mean masking data are summarized in Fig.~\ref{fig:MPTF}. These data, together with those collected by Laback et al on the additivity of spectral \cite{Laback:2013a} and temporal masking \cite{
Laback:2011a} for the same Gabor atoms, 
constitute a crucial basis for the development of an accurate time-frequency masking model to be used in audio applications like audio coding or audio processing (see Sec.~\ref{sec:appli}). 

\begin{figure}[!t]
	  \begin{center}
		 \subfloat[Experimental conditions]{%
		 	\label{fig:TFconds}
		 	\includegraphics[height=4cm]{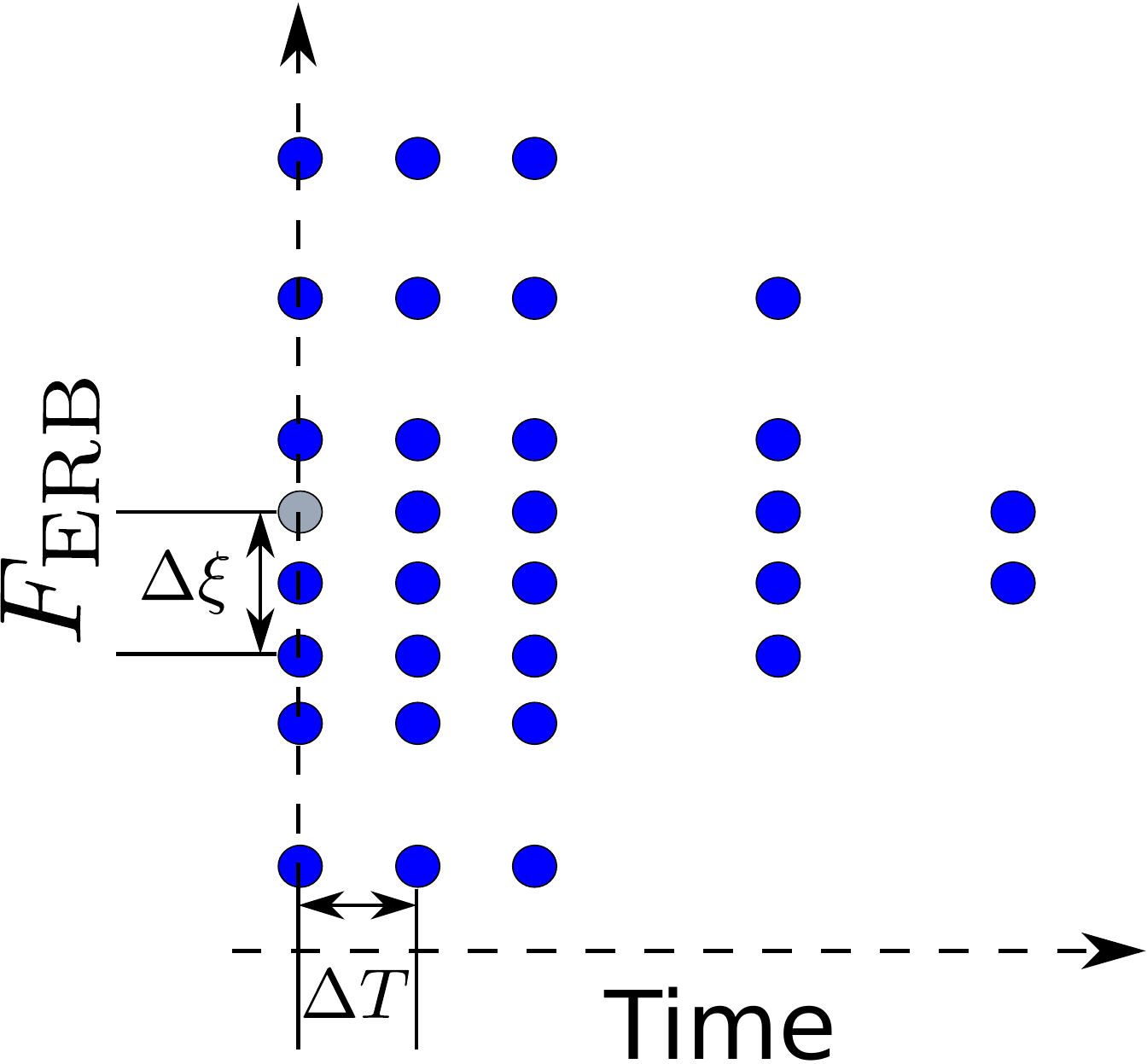}}
		 \subfloat[Mean results]{\vspace{-5pt}%
		 	\label{fig:MPTF}
		 	\includegraphics[height=5cm]{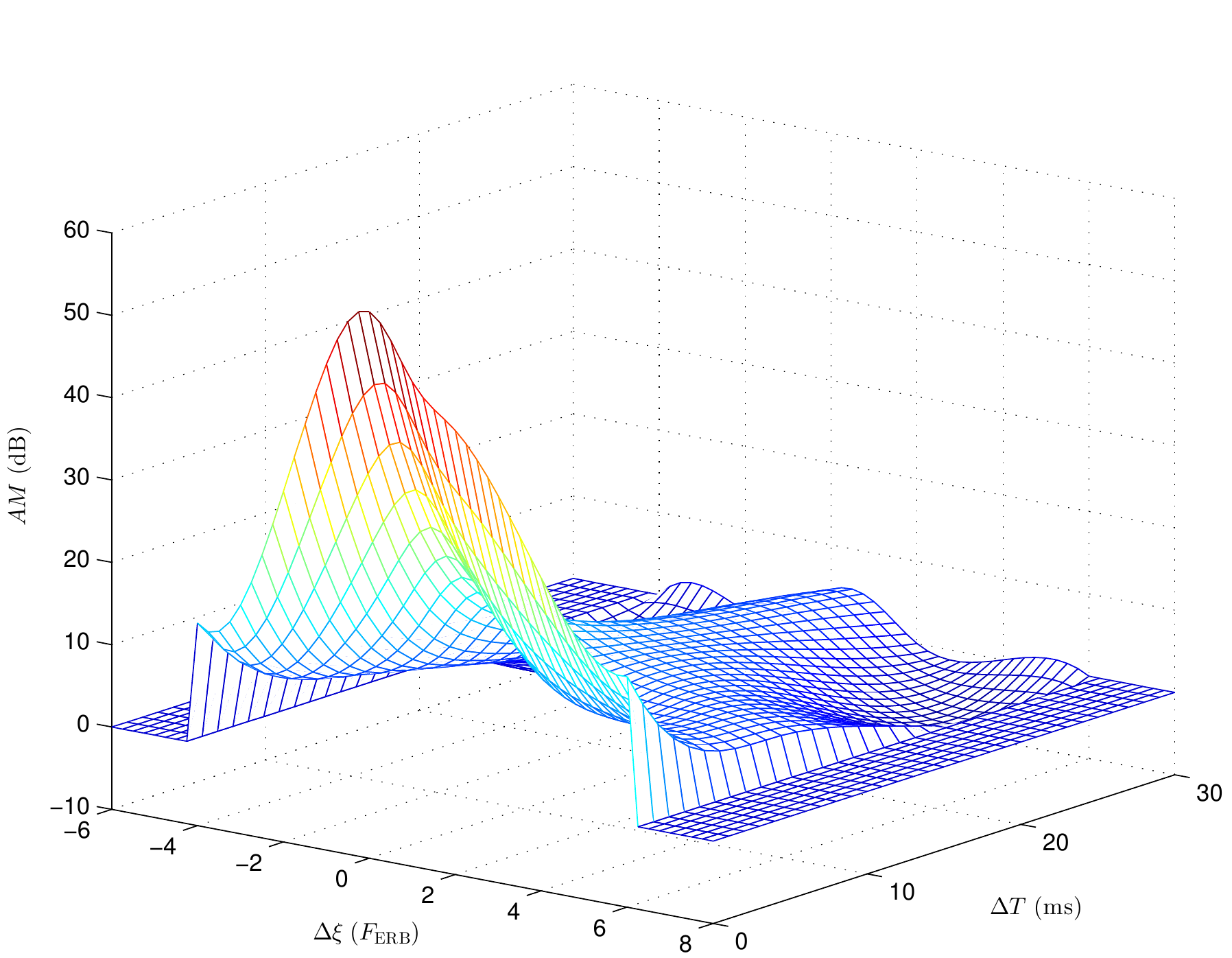}}
	  \end{center}
	  \caption{(a)~Conditions measured in \cite{Necciari:2010a} illustrated in the time-$F_{\mathrm{ERB}}$ plane. The gray circle symbolizes the masker atom $s_M(t)$. The blue circles symbolize the target atoms $s_T(t+\Delta T)$. The values of \dx$\,$ were -4, -2, -1, 0, +1, +2, +4, and +6~$F_{\mathrm{ERB}}$. The values of \dt$\,$ were 0, 5, 10, 20, and 30~ms. (b)~Mean data interpolated based on a cubic spline fit along the time-frequency plane. The \dt$\,$ axis was sampled at a step of 1~ms and the \dx$\,$ axis at a step of 0.25~$F_{\mathrm{ERB}}$. For \dx$\,$ coordinates outside the range of measurements a value of $AM$ = 0 was used.}
\end{figure}

\subsection{Computational auditory scene analysis} \label{sec:casa0}

The term auditory scene analysis (ASA), introduced by Bregman \cite{Bregman:1990a}, refers to the perceptual organization of auditory events into auditory streams. It is assumed that this perceptual organization constitutes the basis for the remarkable ability of the auditory system to separate sound sources, especially in noisy environments. A demonstration of this ability is the so-called ``cocktail party effect'', i.e. when one is able to concentrate on and follow a single speaker in a highly competing background (e.g. many concurring speakers combined with cutlery and glass sounds). The term computational auditory scene analysis (CASA) thus refers to the study of ASA by computational means \cite{wanbro06}. The CASA problem is closely related to the problem of source separation. Generally speaking, CASA systems can be considered as perceptually motivated sound source separators. The basic work flow of a CASA system is to first compute an auditory-based time-frequency transform (most systems use a 
gammatone filter bank, but any auditory representation that allows reconstruction can be used, see Sec.~\ref{sec:audlet0}). Second, some acoustic features like periodicity, pitch, amplitude and frequency modulations are extracted so as to build the perceptive organization (i.e. constitute the streams). Then, stream separation is achieved using so-called ``time-frequency masks''. These masks are directly applied to the perceptual representation; they retain the ``target'' regions (mask = 1) and suppress the background (mask = 0). Those masks can be binary or real, see e.g. \cite{wanbro06,Zhao:2012a}. The target regions are then re-synthesized by applying the inverse transform to obtain the signal of interest. Noteworthy, a perfect reconstruction transform is of importance here. Furthermore, the linearity and stability of the transform allow a separation of the audio streams directly in the transform domain. Most gammatone filter banks implemented in CASA systems are only approximately invertible, though. This 
is due to the fact that such systems implement 
gammatone filters in the analysis stage and their time-reversed impulse responses in the synthesis stage. This setting implies that the frequency response of the gammatone filter bank has an all-pass characteristic and features no ripple (equivalently in the frame context, that the system is tight, see~\ref{ssec:frametheory}). In practice, however, gammatone filter banks usually consider only a limited range of frequencies (typically in the interval 0.1--4~kHz for speech processing) and the frequency response features ripples if the filters' density is not high enough. If a high density of filters is used, the audio quality of the reconstruction is rather good~\cite{Strahl:2009a,Zhao:2012a}. Still, the quality could be perfect by using frame theory \cite{nebahopr15}. For instance, one could render the gammatone system tight (see Proposition~\ref{prop:cantight}) or use its dual frame (see Sec.~\ref{sec:perfrecfram0}).

The use of binary masks in CASA is directly motivated by the phenomenon of auditory masking explained above. However, time-frequency masking is hardly considered in CASA systems. As a final remark, an analogy can be established between the (binary) masks used in CASA and the concept of frame multipliers defined in Sec.~\ref{fmult}. Specifically, the masks used in CASA systems correspond to the symbol $m$ in \eqref{mult}. This analogy is not considered in most CASA studies, though, and offers the possibility for some future research connecting acoustics and frame multipliers. 

\section{\label{sec:frameth}Frame theory}

What is an appropriate setting for the mathematical background of audio signal processing? 
Since real-world signals are usually considered to have finite energy and technically are represented as functions of some variable (e.g. time), it is natural to think about them as elements of the space $L^2(\RR)$. Roughly speaking, $L^2(\RR)$ contains all functions $x(t)$ with finite energy, 
i.e. with $\|x\|^2= \int_{-\infty}^{+\infty} |x(t)|^2{\rm dt} < \infty$. For working with sampled signals, the analogue appropriate space is $\ell^2(K)$ ($K$ denoting a countable index set) 
which consists of the sequences $c = (c_k)_{k\in K}$ with finite energy,
 i.e. $\|c\|^2 = \sum_{k\in K} |c_k|^2<\infty$. 
 
Both spaces $L^2(\RR)$ and $\ell^2(K)$ are Hilbert spaces  and one may use the rich theory 
ensured by the availability of an inner product, that serves as a measure of correlation, and is used to define orthogonality, of elements in the Hilbert space.
In particular, the inner product enables the representation of all functions in $\mathcal H$ in terms of their inner products with a set of reference functions:
A standard approach for such representations uses orthonormal bases (ONBs), see e.g. \cite{heilbook}.
Every separable Hilbert space $\Hil$ has an ONB $(e_k)_{k\in K}$ and every element $x\in\Hil$ can be written as 
\begin{equation}\label{repronb}
x=\sum_{k\in K}\<x,e_k\>e_k
\end{equation}
 with uniqueness of the coefficients $\<x,e_k\>$, $k\in K$. 
The convenience of this approach is that there is a clear (and efficient) way for calculating the coefficients in the representations using the same orthonormal sequence. 
Even more, the energy in the coefficient domain (i.e.,  \emph{ the square of the $\ell^2$-norm}) is exactly the energy of the element $x$: 
\begin{equation}
\sum_{k\in K} |\<x,e_k\>|^2=\|x\|^2. \tag{Parseval equality}
\end{equation} 
Furthermore,  the representation (\ref{repronb}) is stable - 
if the coefficients  $(\<x,e_k\>)_{k\in K}$ are slightly changed to $(a_k)_{k\in K}\in\ell^2$, one obtains an element $\widetilde{x}=\sum_{k\in K} a_k e_k$ close to the original one $x$. 

However, the use of ONBs has several disadvantages. Often the construction of orthonormal bases with some given side constraints is difficult or even impossible (see below). 
\lq\lq Small perturbation\rq\rq 
of the orthonormal basis' elements may destroy the orthonormal structure \cite{Young}. 
Finally, the uniqueness of the coefficients in (\ref{repronb}) leads to a lack of exact 
 reconstruction when some of these coefficients are lost or disturbed during transmission. \\

This naturally leads to the question how the concept of ONBs could be generalized to overcome those disadvantages. 
As an extension of the above-mentioned Parseval equality for ONBs, one could consider inequalities instead of an equality, i.e. boundedness from above and below  (see Def. \ref{framedef}). This leads to the concept of \emph{frames}, which was introduced by Duffin and Schaeffer \cite{dusc52} in 1952. It took several decades for scientists to realize the importance and applicability of frames. 
Popularized around the 90s in the wake of wavelet theory \cite{dagrme86,hewa89,da92}, frames have seen increasing interest and extensive investigation by many researchers ever since.
Frame theory is both a beautiful abstract mathematical theory and a concept applicable in many other disciplines like e.g. engineering, medicine, and psychoacoustics, see Sec.~\ref{sec:appli}. 

Via frames, one can avoid the restrictions of ONBs while keeping their important properties.
Frames still allow perfect and stable reconstruction of all the elements of the space, though the representation-formulas in general are not as simple as the ones via an ONB 
(see Sec.~\ref{sec:perfrecfram0}). Compared to orthonormal bases, the frame property itself is much more stable under perturbations (see, e.g.,  \cite[Sec. 15]{ole1}).
Also, in contrast to orthonormal bases, frames allow redundancy which  
is desirable e.g. in signal transmission, for reconstructing signals when some coefficients are lost, and for noise reduction. 
Via redundant frames one has multiple representations and this allows to choose appropriate coefficients fulfilling particular constraints, e.g. when aiming at sparse representations. 
Furthermore, frames can be easier and faster to construct than ONBs. 
Some advantageous side constraints can \emph{only} be fulfilled for frames. For example, Gabor frames provide convenient and efficient signal processing tools, but good localization in both time and frequency can never be achieved if the Gabor frame is an ONB or even a Riesz basis (cf. Balian-Low Theorem, see e.g. \cite[Theor. 4.1.1]{ole1}), while redundant Gabor frames for this purpose are easily constructed (for example using the Gaussian function). See Sec.~\ref{sec:tfmask0} on how good localization in time and frequency is important in masking experiments. 

Some of the main properties of frames were already obtained in the first paper \cite{dusc52}. For extensive presentation on frame theory, we refer to \cite{Casaz1,gr01,ole1,heilbook}.\\

In this section we collect 
the basics of frame theory relevant to the topic of the current paper. 
All the statements presented here are 
well known. 
 Proofs are given just to make the paper self-contained, for convenience of the readers, and to facilitate a better understanding of the mathematical concepts. 
They are mostly based on \cite{ole1,dusc52,gr01}. Throughout the rest of the section, $\Hil$ denotes a separable Hilbert space with inner product $\<\cdot,\cdot\>$, ${\rm Id}_\Hil$ - the identity operator on $\Hil$, 
$K$ - a countable index set, and $\Phi$ (resp. $\Psi$) - a sequence $(\phi_k)_{k\in K}$ (resp. $(\psi_k)_{k\in K}$) with elements from $\Hil$. The term \emph{operator} is used for a linear mapping. Readers not familiar with Hilbert space theory can simply assume $\mathcal H = \bd L^2(\RR)$ for the remainder of this section.

\subsection{Frames: A Mathematical viewpoint} 

The frame concept extends naturally the Parseval equality permitting inequalities, 
i.e., the ratio of the energy in the coefficient domain to the energy of the signal may be bounded from above and below instead of being necessarily one:

\begin{Def}\label{framedef} 
A countable sequence $\Phi = (\phi_k)_{k\in K}$ is called a \emph{frame} for the Hilbert space $\Hil$ if there exist positive constants $A$ and $B$  such that 
\begin{equation}\label{frdef}
 A \cdot \norm{\Hil}{x}^2 \le \sum \limits_{k\in K} \left| \left< x, \phi_k \right> \right|^2 \le B \cdot  \norm{\Hil}{x}^2 , \ \forall \ x \in \Hil.
\end{equation}
The constant $A$ (resp. $B$) is called a \emph{lower} (resp. \emph{upper}) \emph{frame bound} of $\Phi$. A frame is called \emph{tight} \emph{with frame bound $A$} if $A$ is both a lower and an upper frame bound. 
A tight frame with bound $1$ is called a \emph{Parseval frame}.
\end{Def}

Clearly, every ONB is a frame, but not vice-versa. 
Frames can naturally be split into two classes - the frames which 
 still fulfill a basis-property, and the ones that do not: 
\begin{Def}\label{defrb}
A frame $\Phi$ for $\Hil$ which is a Schauder basis\footnote{A sequence $\Phi$ is called a \emph{Schauder basis} for $\Hil$ if every element $x\in\Hil$ can be written as $x=\sum_{k\in K} c_k \phi_k$ with unique coefficients $(c_k)_{k\in K}$.} for $\Hil$ is called a \emph{Riesz basis} for $\Hil$. 
  A frame for $\Hil$ which is not a Schauder basis for $\Hil$ is called \emph{redundant} (also called \emph{overcomplete}). 
\end{Def} 

Note that Riesz bases were introduced by Bari \cite{Bari51} in different but equivalent ways. 
Riesz bases also extend ONBs, but contrary to frames, Riesz bases still have the disadvantages resulting from the basis-property, as they do not allow redundancy.
For more on Riesz bases, see e.g.  \cite{Young}. 
As an illustration of the concepts of ONBs, Riesz bases, and redundant frames in a simple setting, consider examples in the Euclidean plane, see Fig. \ref{onbrbfr}.

\begin{figure}
(a) ONB $(e_1, e_2)$ for $\RR^2$ \hspace{1.5cm} 
 (b)  unique representation of $x$ via $(e_1, e_2)$\\
\includegraphics[width=0.89\textwidth]{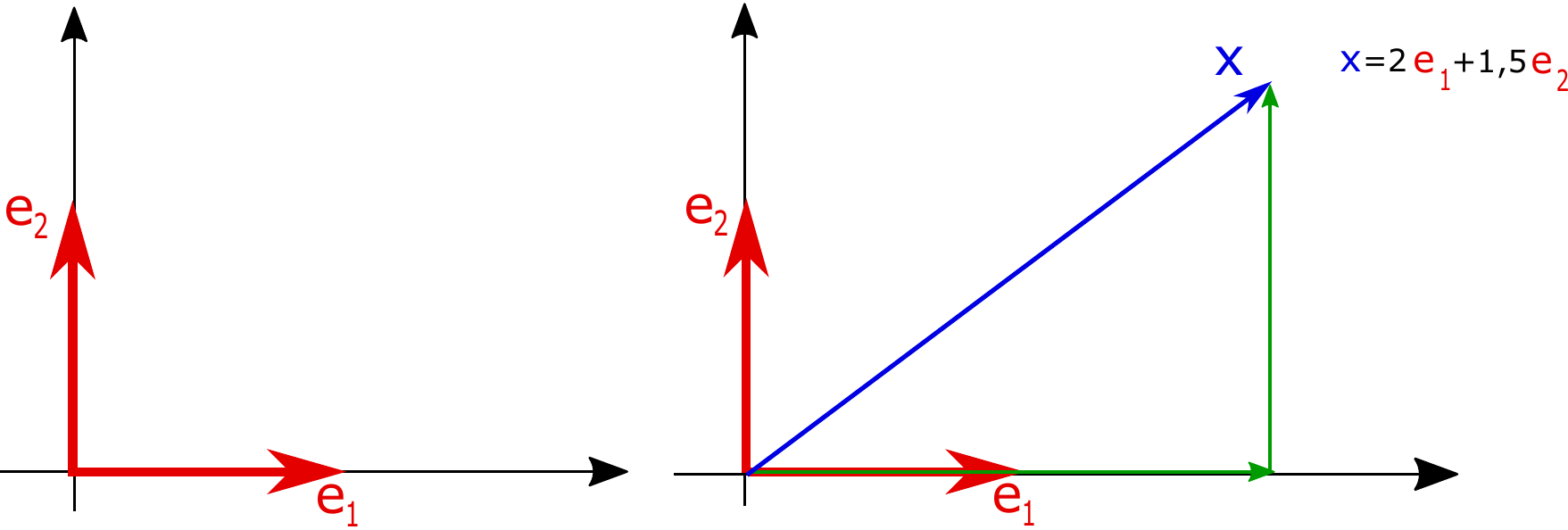} \\
(c) Riesz basis $(\phi_1,\phi_2)$ for $\RR^2$ \hspace{.8cm} (d) unique representation of $x$ via $(\phi_1,\phi_2)$\\
 \includegraphics[width=0.89\textwidth]{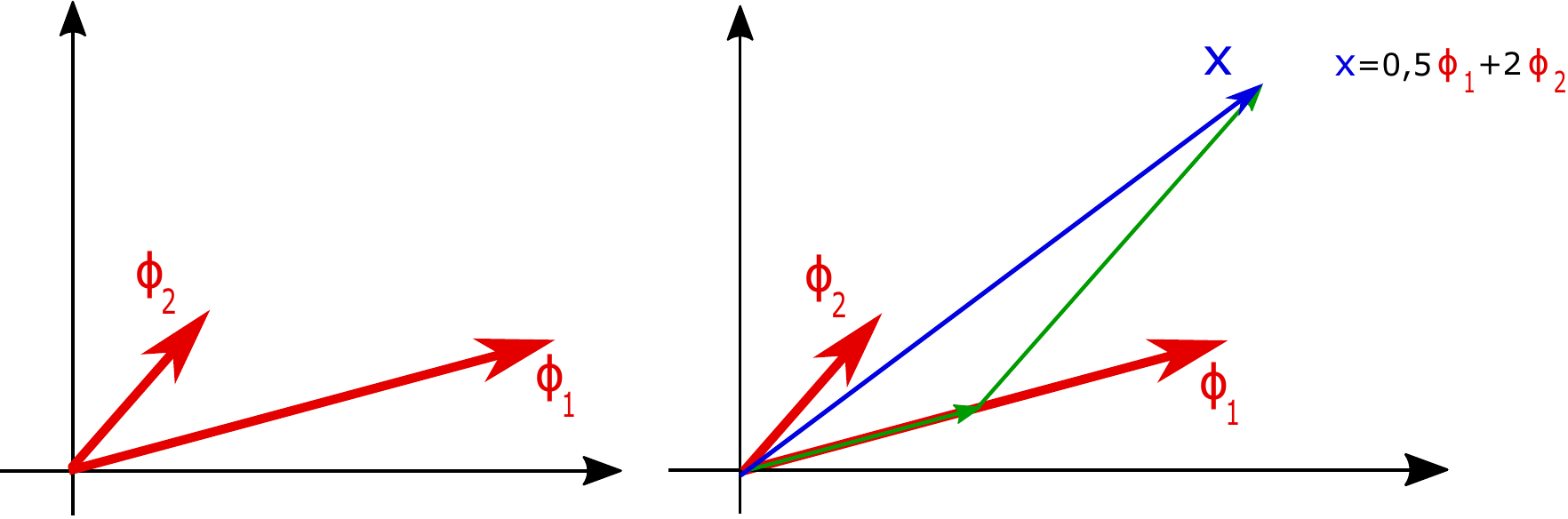}\\ 
 (e) frame $(\phi_1,\phi_2, \phi_3)$ for $\RR^2$ \hspace{1cm} (f) non-unique representation of $x$ via $(\phi_1,\phi_2, \phi_3)$\\
\includegraphics[width=1\textwidth]{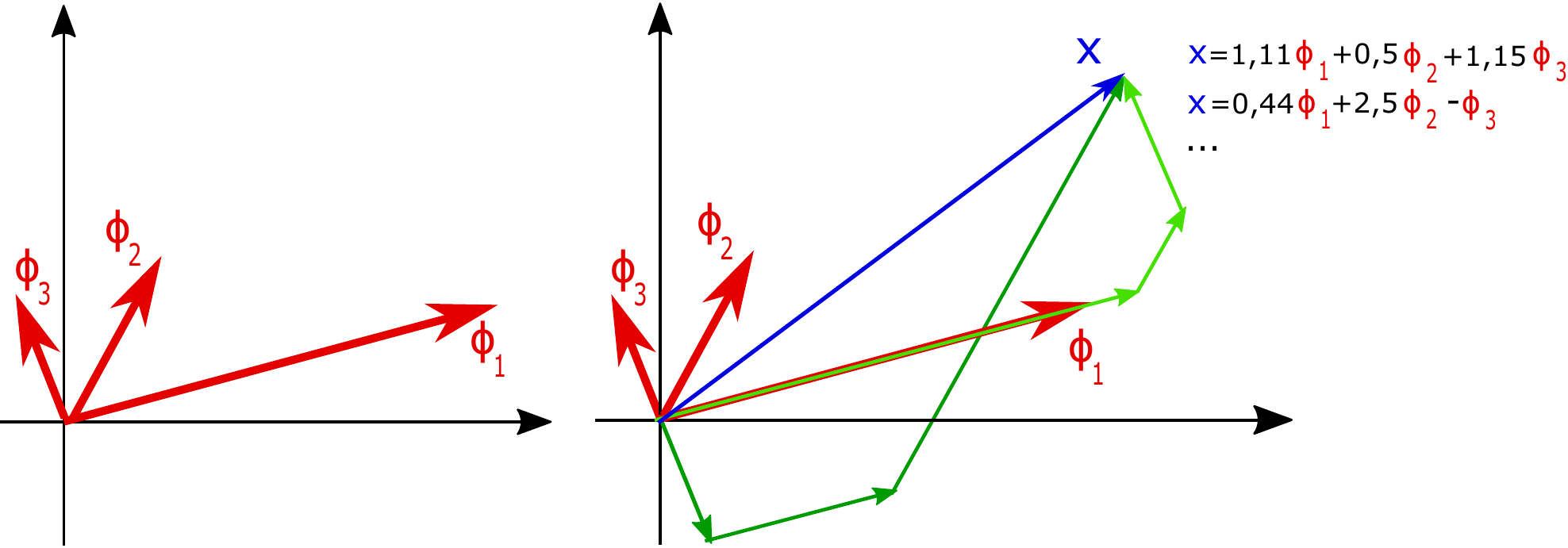} 
\caption{Examples in $\RR^2$: ONB (a,b), Riesz basis (c,d), frame (e,f)} \label{onbrbfr}
\end{figure}

Note that in a finite dimensional Hilbert space, considering only finite sequences, 
 frames are precisely the complete sequences  
(see, e.g., \cite[Sec. 1.1]{ole1}), i.e., the sequences which span the whole space. 
However, this is not the case in infinite-dimensional Hilbert spaces - every frame is complete, 
 but completeness is not sufficient to establish the frame property \cite{dusc52}. For results focused on  frames in finite dimensional spaces, refer to \cite{ba08-1,casazza2012finite}.

As non-trivial examples, let us mention a specific type of frames used often in signal processing applications, namely Gabor frames.
A Gabor system is comprised of atoms of the form
 \begin{equation} \label{Gabsystem}
  g_{\omega,\tau}(t)= e^{2\pi {\rm i} \omega t} g(t - \tau),
\end{equation}
with function $g\in L^2(\RR)$ called the \emph{(generating)}  \emph{window} and with time- and frequency-shift $\tau,\omega\in\RR$, respectively. 
To allow perfect and stable reconstruction, the Gabor system  $( g_{\omega,\tau} )_{\omega,\tau\in K(\subset \RR^2)}$ is assumed to have the frame-property and in this case is called a \emph{Gabor frame}. 
Note that the analysis operator of a Gabor frame corresponds to a \emph{sampled Short-Time-Fourier transform} (see, e.g., \cite{gr01}) also referred to as \emph{Gabor transform}.

Most commonly, \emph{regular Gabor frames} are used; these are frames of the form
$ ( g_{k,l} )_{k,l\in \ZZ} = \left(e^{2\pi {\rm i} kb\cdot} g(\cdot - la)\right)_{k,l\in \ZZ}$ for some positive $a$ and $b$ satisfying necessarily (but in general not sufficiently ) $ab\leq 1$. 
To mention a concrete example - for the Gaussian $g(t)=e^{-t^2}$, the respective regular Gabor system 
 $ ( g_{k,l} )_{k,l\in \ZZ}$ is a frame for $L^2(\RR)$ 
if and only if $ab<1$ (see, e.g., \cite[Sec. 7.5]{gr01} and references therein).

Other possibilities include using alternative sampling structures, on subgroups \cite{hosowi13} or irregular sets \cite{cach03}. If the window is allowed to change with time (or frequency) one obtains the non-stationary Gabor transform \cite{badohojave11}. There it becomes apparent that frames allow to create adaptive and adapted transforms \cite{badokoto13}, while still guaranteeing perfect reconstruction.

If not continuous  but sampled signals are considered, Gabor theory works similarly. 
\emph{Discrete Gabor frames} can be defined in an analogue way, namely, frames of the form $\left(e^{2\pi {\rm i}  k/M\cdot} h[\cdot - la]\right)_{l\in\ZZ, k=0,1,\ldots, M-1}$ for $h\in \ell^2(\ZZ)$ with $a,M\in\NN$, where $a/M\leq 1$ is necessary for the frame property. 
 For readers interested in the theory of Gabor frames on $\ell^2(\ZZ)$, see, e.g., \cite{vk}. For constructions of discrete Gabor frames from Gabor frames for $L^2(\RR)$ through sampling, refer to \cite{Jan97,sond05}.

\subsubsection{Frame-related operators} \label{froperators}
Given a frame $\Phi$ for $\Hil$, consider the following linear mappings:
\begin{eqnarray} 
 \mbox{\emph{Analysis operator}:\hspace{.1cm}} &  \bd C_\Phi : \Hil \rightarrow l^2(K), & \mbox{$\bd C_\Phi x := (\left< x, \phi_k \right>)_{k\in K}$}; \hfill \nonumber \\
  \mbox{\emph{Synthesis operator}:} &  \bd D_\Phi : l^2(K) \rightarrow \Hil, &  \mbox{$\bd D_\Phi (c_k)_{k\in K} := \sum_{k\in K} c_k \phi_k$}; \hfill \nonumber\\
  \mbox{\emph{Frame operator}:\hspace{.43cm}} &  \bd S_\Phi : \Hil \rightarrow \Hil, \hspace{.39cm} &  \mbox{$\bd S_\Phi x:= \bd D_\Phi\bd C_\Phi x= \sum_{k\in K}  \left<x , \phi_k \right> \phi_k$}. \hfill \label{eq:frameop} 
\end{eqnarray} 
These operators are tremendously important for the theoretical investigation of frames as well as for signal processing.
As one can observe, the analysis (resp. synthesis, frame) operator corresponds to analyzing (resp. synthesizing, analyzing and re-synthesizing) a signal. 
In the following statement the main properties of the frame-related operators are listed.

\begin{Thm}\label{froper} {\rm (e.g. \cite[Sec. 5]{ole1})} Let $\Phi$ be a frame for $\Hil$ with frame bounds $A$ and $B$ ($A\leq B$). 
Then the  following holds.
\begin{itemize}
\item[{\rm (a)}]  $\bd C_\Phi$ is a bounded injective 
operator with bound $\|\bd C_\Phi\|\leq \sqrt{B}$.
\item[{\rm (b)}]  $\bd D_\Phi$ is a bounded surjective
 operator with bound $\|\bd D_\Phi\|\leq \sqrt{B}$ and 
$\bd D_\Phi=\bd C_\Phi^*$.
\item[{\rm (c)}]   $\bd S_\Phi$ is a bounded bijective positive 
 self-adjoint operator with $\|\bd S_\Phi\|\leq B$. 
\item[{\rm (d)}]  $(\bd S_\Phi^{-1} \phi_k)_{k\in K}$ is a frame for $\Hil$ with frame bounds  $1/B, 1/A$.  
\end{itemize}
\end{Thm}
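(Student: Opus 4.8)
The plan is to reduce all four statements to a single operator identity together with a single operator inequality. Writing $\sum_{k\in K}\left|\langle x,\phi_k\rangle\right|^2 = \langle \bd C_\Phi x,\bd C_\Phi x\rangle_{\ell^2}$, the frame condition \eqref{frdef} says precisely that
\[
A\,\norm{\Hil}{x}^2 \;\le\; \langle \bd S_\Phi x, x\rangle \;\le\; B\,\norm{\Hil}{x}^2, \qquad \forall\, x\in\Hil,
\]
once it is known that $\bd C_\Phi$ is a well-defined bounded operator and that $\bd S_\Phi = \bd D_\Phi\bd C_\Phi = \bd C_\Phi^*\bd C_\Phi$; after that, (a)--(d) all follow from standard Hilbert-space facts about the positive self-adjoint operator $\bd S_\Phi$. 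Part (a) itself requires nothing beyond reading \eqref{frdef}: the upper inequality gives $\bd C_\Phi x\in\ell^2(K)$ with $\norm{}{\bd C_\Phi x}\le\sqrt B\,\norm{\Hil}{x}$, so $\bd C_\Phi$ is well-defined, linear and bounded by $\sqrt B$; and if $\bd C_\Phi x=0$ the lower inequality forces $A\norm{\Hil}{x}^2\le 0$, hence $x=0$.

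For the boundedness and adjoint parts of (b), I would first work with a finitely supported sequence $c$, where $\bd D_\Phi c=\sum_{k}c_k\phi_k$ is literally a finite sum. For any $x\in\Hil$ one computes $\langle \bd D_\Phi c, x\rangle_\Hil = \sum_k c_k\langle\phi_k,x\rangle = \langle c,\bd C_\Phi x\rangle_{\ell^2}$, whence $|\langle \bd D_\Phi c,x\rangle|\le\norm{}{c}\,\norm{}{\bd C_\Phi x}\le\sqrt B\,\norm{}{c}\,\norm{\Hil}{x}$ and therefore $\norm{\Hil}{\bd D_\Phi c}\le\sqrt B\,\norm{}{c}$. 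Since the finitely supported sequences are dense in $\ell^2(K)$, $\bd D_\Phi$ extends uniquely to a bounded operator on all of $\ell^2(K)$ with $\norm{}{\bd D_\Phi}\le\sqrt B$; in particular the series $\sum_k c_k\phi_k$ converges (unconditionally) for every $c\in\ell^2(K)$. Passing to the limit in the displayed duality gives $\langle\bd D_\Phi c,x\rangle=\langle c,\bd C_\Phi x\rangle$ for all $c\in\ell^2(K)$, $x\in\Hil$, i.e. $\bd D_\Phi^*=\bd C_\Phi$, equivalently $\bd D_\Phi=\bd C_\Phi^*$. Surjectivity of $\bd D_\Phi$ I would postpone until $\bd S_\Phi$ has been analysed.

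For (c), combining $\bd D_\Phi=\bd C_\Phi^*$ with $\bd S_\Phi=\bd D_\Phi\bd C_\Phi$ from \eqref{eq:frameop} gives $\bd S_\Phi=\bd C_\Phi^*\bd C_\Phi$, which is self-adjoint, satisfies $\langle\bd S_\Phi x,x\rangle=\norm{}{\bd C_\Phi x}^2\ge 0$ (positivity), and $\norm{\Hil}{\bd S_\Phi}\le\norm{}{\bd D_\Phi}\,\norm{}{\bd C_\Phi}\le B$. The lower bound in the reformulated inequality gives $\norm{\Hil}{\bd S_\Phi x}\,\norm{\Hil}{x}\ge\langle\bd S_\Phi x,x\rangle\ge A\norm{\Hil}{x}^2$, so $\bd S_\Phi$ is bounded below, hence injective with closed range; being self-adjoint, $\overline{\ran\,\bd S_\Phi}=(\ker\bd S_\Phi)^\perp=\Hil$, so $\ran\,\bd S_\Phi=\Hil$ and $\bd S_\Phi$ is bijective with bounded inverse. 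Finally $\Hil=\ran\,\bd S_\Phi=\ran(\bd D_\Phi\bd C_\Phi)\subseteq\ran\,\bd D_\Phi\subseteq\Hil$ shows that $\bd D_\Phi$ is surjective, completing (b).

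For (d), $\bd S_\Phi^{-1}$ is again bounded, positive and self-adjoint. Using self-adjointness of $\bd S_\Phi^{-1}$ and the identity $\sum_k|\langle y,\phi_k\rangle|^2=\langle\bd S_\Phi y,y\rangle$ with $y=\bd S_\Phi^{-1}x$ gives $\sum_k|\langle x,\bd S_\Phi^{-1}\phi_k\rangle|^2=\langle\bd S_\Phi^{-1}x,x\rangle$, so it only remains to show $\tfrac1B\norm{\Hil}{x}^2\le\langle\bd S_\Phi^{-1}x,x\rangle\le\tfrac1A\norm{\Hil}{x}^2$. Writing $z=\bd S_\Phi^{-1}x$, the upper bound follows from $A\norm{\Hil}{z}^2\le\langle\bd S_\Phi z,z\rangle=\langle x,z\rangle\le\norm{\Hil}{x}\norm{\Hil}{z}$, which gives $\norm{\Hil}{z}\le\tfrac1A\norm{\Hil}{x}$ and then $\langle\bd S_\Phi^{-1}x,x\rangle=\langle z,x\rangle\le\tfrac1A\norm{\Hil}{x}^2$; the lower bound follows from $\bd S_\Phi^2\le B\,\bd S_\Phi$ (a consequence of $0\le\bd S_\Phi\le B\,{\rm Id}_\Hil$), which yields $\norm{\Hil}{x}^2=\langle\bd S_\Phi z,\bd S_\Phi z\rangle\le B\langle\bd S_\Phi z,z\rangle=B\langle\bd S_\Phi^{-1}x,x\rangle$. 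Hence $(\bd S_\Phi^{-1}\phi_k)_{k\in K}$ satisfies \eqref{frdef} with bounds $1/B$ and $1/A$. The step I expect to be the main obstacle is the treatment of $\bd D_\Phi$ in (b): its defining formula only makes unambiguous sense on finitely supported sequences, so the convergence of $\sum_k c_k\phi_k$ for general $c\in\ell^2(K)$, the norm bound, and the relation $\bd D_\Phi=\bd C_\Phi^*$ all have to be obtained through the density-and-extension argument above rather than by direct manipulation; everything else is routine bookkeeping with the operator inequality $A\,{\rm Id}_\Hil\le\bd S_\Phi\le B\,{\rm Id}_\Hil$.
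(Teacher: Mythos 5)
Your proof is correct, but it follows a genuinely different route from the paper's at three points, and the comparison is instructive. For (b), you define $\bd D_\Phi$ on finitely supported sequences, bound it there via the duality $\langle \bd D_\Phi c,x\rangle=\langle c,\bd C_\Phi x\rangle$, and extend by density, whereas the paper proves convergence of $\sum_k c_k\phi_k$ directly by showing the partial sums are Cauchy and then reads off $\bd D_\Phi=\bd C_\Phi^*$; both are standard, yours packages convergence, boundedness and the adjoint relation into one argument. For surjectivity of $\bd D_\Phi$ and bijectivity of $\bd S_\Phi$, the paper invokes the Neumann-series criterion $\|{\rm Id}_\Hil-\tfrac1B\bd S_\Phi\|\le\tfrac{B-A}{B}<1$ and, separately, the fact that $\|\bd D_\Phi^*x\|\ge\sqrt A\,\|x\|$ forces $\bd D_\Phi$ onto (citing Rudin); you instead show $\bd S_\Phi$ is bounded below, hence injective with closed range, use self-adjointness to get $\overline{\ran\,\bd S_\Phi}=(\ker\bd S_\Phi)^\perp=\Hil$, and then obtain surjectivity of $\bd D_\Phi$ for free from $\Hil=\ran\,\bd S_\Phi\subseteq\ran\,\bd D_\Phi$ --- arguably more self-contained, at the cost of losing the explicit geometric-series formula for $\bd S_\Phi^{-1}$ that the Neumann argument provides (and which is exploited later in the paper's iterative reconstruction). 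For (d), the paper passes to the operator ordering $A\,{\rm Id}_\Hil\le\bd S_\Phi\le B\,{\rm Id}_\Hil$ and multiplies by the commuting positive operator $\bd S_\Phi^{-1}$ (citing Heuser), while you derive the two bounds on $\langle\bd S_\Phi^{-1}x,x\rangle$ by hand; your upper bound is complete, but the lower bound rests on the assertion $\bd S_\Phi^2\le B\,\bd S_\Phi$, which is true but not automatic from na\"ively squaring an operator inequality --- add one line, e.g. Cauchy--Schwarz for the semi-inner product $[u,v]=\langle\bd S_\Phi u,v\rangle$ gives $\|\bd S_\Phi z\|^4\le\langle\bd S_\Phi z,z\rangle\,\langle\bd S_\Phi(\bd S_\Phi z),\bd S_\Phi z\rangle\le\langle\bd S_\Phi z,z\rangle\,B\|\bd S_\Phi z\|^2$, or note that $\bd S_\Phi(B\,{\rm Id}_\Hil-\bd S_\Phi)$ is a product of commuting positive operators. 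With that one-line justification supplied (playing the same role as the paper's citation of Heuser), your argument is complete.
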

\begin{proof} 
(a) By the frame inequalities (\ref{frdef}) we have 
$\sqrt{A}\|x\|_\Hil \leq \|\bd C_\Phi x\|_{\ell^2}\leq \sqrt{B} \|x\|_{\Hil}$ for every $x\in\Hil$; the upper inequality implies the boundedness and the lower one - the injectivity, i.e. the operator is one-to-one. 

(b)  First show that  $\bd D_\Phi$ is well defined, i.e.,
 that $\sum_{k\in K} c_k \phi_k$ converges for every $(c_k)_{k\in K}\in\ell^2(K)$. Without loss of generality, for simplicity of the writing, we may denote $K$ as $\NN$. 
Fix arbitrary $(c_k)_{k\in\NN}\in\ell^2$. For every $p,q\in\NN$, $p>q$, 
\begin{eqnarray*}
 \|\sum_{k=1}^p c_k \phi_k - \sum_{k=1}^q c_k \phi_k\|_{\Hil}
&=& \sup_{x\in\Hil, \|x\|_{\Hil}=1} | \< \sum_{k=q+1}^p c_k \phi_k, x\> | \\
&\leq & \sup_{x\in\Hil, \|x\|_{\Hil}=1} (\sum_{k=q+1}^p |c_k|^2)^{1/2} (\sum_{k=q+1}^p |\<\phi_k, x\>|^2)^{1/2} \\
&\leq& \sqrt{B} \, (\sum_{k=q+1}^p |c_k|^2)^{1/2}\xrightarrow[p,q\to \infty]{}0,
\end{eqnarray*}
which implies that $\sum_{k=1}^p c_k \phi_k$ converges in $\Hil$ as $p\to\infty$.
Using the adjoint of $\bd C_\Phi$, for every $(c_k)_{k=1}^\infty\in\ell^2$ and every $y\in\Hil$, one has that
$$ \<\bd C^*_\Phi (c_k)_{k=1}^\infty, y\> = \<\bd (c_k)_{k=1}^\infty, \bd C_\Phi y\> 
= \sum_{k=1}^\infty c_k \overline{\<y,\phi_k\> } = \sum_{k=1}^\infty c_k \< \phi_k, y\> = \< \sum_{k=1}^\infty c_k \phi_k, y\>.$$
Therefore $\bd D_\Phi = \bd C_\Phi^*$, implying also the boundedness of $D_\Phi$. 

For every $x\in\Hil$, we have $\|\bd D_\Phi^* x\|_{\ell^2}=\|\bd C_\Phi x\|_{\ell^2} \geq \sqrt{A}\|x\|$, 
which implies (see, e.g., \cite[Theorem 4.15]{Rudin}) that $\bd D_\Phi$ is surjective, i.e. it maps onto the whole space $\Hil$. 

(c) The boundedness and self-adjointness of $\bd S_\Phi$ follow from (a) and (b). Since,
   $\<\bd S_\Phi x,x\> 
=  \sum_{k\in K} |\<x,\phi_k\>|^2$,  $\bd S_\Phi$ is positive and the frame inequalities (\ref{frdef}) 
mean that
\begin{equation} \label{sineq}
A\|x\|_\Hil^2 \leq \<\bd S_\Phi x,x\>  \leq B \|x\|_\Hil^2, \forall x\in\Hil,
\end{equation}
implying that
$0\leq \<({\rm Id}_\Hil - \frac{1}{B} \bd S_\Phi)x,x\> \leq \frac{B-A}{B} \|x\|^2_\Hil$ for all $x\in\Hil$. Then the norm of the bounded self-adjoint operator ${\rm Id}_\Hil - \frac{1}{B} \bd S_\Phi$ satisfies 
 $$ \|{\rm Id}_\Hil - \frac{1}{B} \bd S_\Phi\|
=
\sup_{x\in\Hil, \|x\|_\Hil=1} \<({\rm Id}_\Hil - \frac{1}{B} \bd S_\Phi)x,x\>
\leq
 \frac{B-A}{B} <1,$$
which by the Neumann theorem (see, e.g., \cite[Theor. 8.1]{Heuser})
implies that $\bd S_\Phi$ is bijective.

(d) As a consequence of (c), $S_\Phi^{-1}$ is bounded, self-adjoint, and positive.
In the language of partial ordering of self-adjoint operators (see, e.g., \cite[Sec. 68]{Heuser}), (\ref{sineq}) can be written as 
\begin{equation} \label{sineq2}
A \cdot {\rm Id}_\Hil  \leq \bd S_\Phi  \leq B\cdot {\rm Id}_\Hil. 
\end{equation}
Since $\bd S_\Phi^{-1}$ is positive and commutes with $\bd S_\Phi$ and ${\rm Id}_\Hil$, one can multiply the inequalities in (\ref{sineq2}) with $\bd S_\Phi^{-1}$ (see, e.g., \cite[Prop. 68.9]{Heuser})
and obtain 
\begin{equation*} \label{sineq3}
\frac{1}{B} {\rm Id}_\Hil \leq \bd S_\Phi^{-1}  \leq \frac{1}{A} {\rm Id}_\Hil,
\end{equation*}
which means that 
\begin{equation} \label{sineq4}
\frac{1}{B} \|x\|_\Hil^2 \leq \<\bd S_\Phi^{-1}x, x\>  \leq \frac{1}{A} \|x\|^2_\Hil, \ \forall x\in\Hil. 
\end{equation}
For every $x\in\Hil$, denote $y_x=\bd S_\Phi^{-1} x$ and use the fact that $\bd S_\Phi^{-1}$ is self-adjoint to obtain
$$\sum_{k\in K} |\<x, \bd S_\Phi^{-1} \phi_k\>|^2 
= \sum_{k\in K} |\<  y_x,  \phi_k\>|^2 
= \<y_x, \bd S_\Phi y_x\>= \<\bd S_\Phi^{-1} x,x\>.
$$
Now (\ref{sineq4}) completes the conclusion that $(\bd S_\Phi^{-1} \phi_k)_{k\in K}$ is a frame for $\Hil$ with frame bounds $1/B$, $1/A$.
\end{proof}

\subsubsection{Perfect reconstruction via frames} \label{sec:perfrecfram0}

Here we consider one of the most important properties of frames, namely, the possibility to have perfect reconstruction of all the elements in the space.

\begin{Thm}\label{frexp} {\rm (e.g. \cite[Corol. 5.1.3]{gr01})}
Let $\Phi$ be a frame for $\Hil$. Then there exists a frame $\Psi$ for $\Hil$ such that
\begin{equation}\label{frrepr}
x=\sum_{k\in K}\<x, \psi_k\>\phi_k = \sum_{k\in K}\<x, \phi_k\>\psi_k, \ \forall x\in\Hil.
\end{equation}
\end{Thm}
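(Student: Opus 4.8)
The plan is to exhibit $\Psi$ explicitly as the \emph{canonical dual frame}, namely $\psi_k := \bd S_\Phi^{-1}\phi_k$, and then to verify the two reconstruction identities by elementary manipulations using the structural facts about the frame operator collected in Theorem~\ref{froper}. First, Theorem~\ref{froper}(c) tells us that $\bd S_\Phi$ is a bounded, bijective, positive, self-adjoint operator, so $\bd S_\Phi^{-1}$ exists, is bounded, and is itself self-adjoint (and positive). Moreover Theorem~\ref{froper}(d) already states that $(\bd S_\Phi^{-1}\phi_k)_{k\in K}$ is a frame for $\Hil$ (with bounds $1/B$ and $1/A$), which disposes of the ``there exists a frame $\Psi$'' part of the claim; it remains only to check \eqref{frrepr}.

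For the first identity I would start from the tautology $x = \bd S_\Phi\bigl(\bd S_\Phi^{-1}x\bigr)$ and expand the outer $\bd S_\Phi$ via its defining formula in \eqref{eq:frameop}, obtaining $x = \sum_{k\in K}\<\bd S_\Phi^{-1}x,\phi_k\>\phi_k$. Using that $\bd S_\Phi^{-1}$ is self-adjoint, I rewrite each coefficient as $\<\bd S_\Phi^{-1}x,\phi_k\> = \<x,\bd S_\Phi^{-1}\phi_k\> = \<x,\psi_k\>$, which yields $x=\sum_{k\in K}\<x,\psi_k\>\phi_k$.

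For the second identity I would instead start from $x = \bd S_\Phi^{-1}\bigl(\bd S_\Phi x\bigr) = \bd S_\Phi^{-1}\bigl(\sum_{k\in K}\<x,\phi_k\>\phi_k\bigr)$ and then move $\bd S_\Phi^{-1}$ inside the series, getting $\sum_{k\in K}\<x,\phi_k\>\bd S_\Phi^{-1}\phi_k = \sum_{k\in K}\<x,\phi_k\>\psi_k$. This interchange is the one step that is not purely formal: it is justified because $\bd S_\Phi^{-1}$ is bounded, hence continuous, while the partial sums $\sum_{k=1}^{N}\<x,\phi_k\>\phi_k$ converge in $\Hil$ (this convergence is already part of Theorem~\ref{froper}, since $(\<x,\phi_k\>)_{k\in K}\in\ell^2$ by the upper frame bound and $\bd D_\Phi$ is well defined on all of $\ell^2(K)$); continuity then lets us pass $\bd S_\Phi^{-1}$ through the limit term by term.

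I expect the interchange of $\bd S_\Phi^{-1}$ with the infinite sum in the second identity to be the only point deserving a word of care; everything else is bookkeeping with the self-adjointness of $\bd S_\Phi^{-1}$ and the facts already proved about $\bd S_\Phi$ and $\bd D_\Phi$. As an optional remark one can note that $\bd S_\Psi = \bd S_\Phi^{-1}$, so $\Phi$ is itself the canonical dual of $\Psi$; from this viewpoint the two equalities in \eqref{frrepr} are one and the same statement applied to the pairs $(\Phi,\Psi)$ and $(\Psi,\Phi)$, but the direct computation above is shorter and self-contained.
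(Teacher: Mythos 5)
Your proposal is correct and follows essentially the same route as the paper's proof: both take $\Psi$ to be the canonical dual $(\bd S_\Phi^{-1}\phi_k)_{k\in K}$, invoke Theorem~\ref{froper}(d) for its frame property, and verify \eqref{frrepr} by pulling the bounded operator $\bd S_\Phi^{-1}$ through the series for one identity and using its self-adjointness for the other. The only difference is cosmetic (you run the chains of equalities from $x$ toward the sums rather than the reverse, and spell out the continuity justification the paper leaves implicit).
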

\begin{proof} 
By Theorem \ref{froper}(d), the sequence  $(\bd S_\Phi^{-1} \phi_k)_{k\in K}$ is a frame for $\Hil$. Take $\Psi:=(\bd S_\Phi^{-1} \phi_k)_{k\in K}$. 
Using the boundedness and the self-adjointness of $S_\Phi$, for every $x\in \Hil$, 
$$ \sum_{k\in K}\<x, \phi_k\>\psi_k = \sum_{k\in K}\<x, \phi_k\>S_\Phi^{-1} \phi_k = 
S_\Phi^{-1}\sum_{k\in K}\<x, \phi_k\>\phi_k =
 S_\Phi^{-1} S_\Phi x=x,
$$
$$ \sum_{k\in K}\<x, \psi_k\>\phi_k = \sum_{k\in K}\<x, S_\Phi^{-1}\phi_k\>\phi_k
= \sum_{k\in K}\<S_\Phi^{-1}x, \phi_k\>\phi_k= S_\Phi S_\Phi^{-1}x=x. 
$$
\end{proof}

Let $\Phi$ be a frame for $\Hil$. 
Any frame $\Psi$ for $\Hil$, which satisfies (\ref{frrepr}), is called a \emph{dual frame} of $\Phi$. By the above theorem, every frame has at least one dual frame, namely, the sequence
\begin{equation}\label{eq:candual}
(S_\Phi^{-1} \phi_k)_{k\in K},
\end{equation} called the \emph{canonical dual of $\Phi$}. 
When the frame is a Riesz basis, then the coefficient representation is unique and thus there is only one dual frame, the canonical dual. 
When the frame is redundant, then there are other dual frames different from the canonical dual (see, e.g., \cite[Lemma 5.6.1]{ole1}), even infinitely many. This provides multiple choices for the coefficients in the frame representations, which is desirable in some applications (see, e.g., \cite{badokoto13}).
The canonical dual has a minimizing property in the sense that the coefficients 
$(\<x, S_\Phi^{-1} \phi_k\>)_{k\in K}$ in the representation
$x=\sum_{k\in K} \<x, S_\Phi^{-1} \phi_k\> \phi_k$
have the minimal $\ell^2$-norm compared to the coefficients $(c_k)_{k\in K}$ in all other possible representations
$x=\sum_{k\in K} c_k \phi_k$. However, for certain applications other constraints are of interest - e.g. sparsity, efficient algorithms for representations or particular shape restrictions on the dual window \cite{elsuwe05,persampta13}. 
The canonical dual is not always efficient to calculate nor does it always have the desired structure; in such cases other dual frames are of interest \cite{chr06,hahule11,bole07}. The particular case of tight frames is very convenient for efficient reconstructions, because the canonical dual is simple and does not require operator-inversion:

\begin{Cor}\label{dualtight} {\rm (e.g. \cite[Sec. 5.7]{ole1})}
The canonical dual of a tight frame $(\phi_k)_{k\in K}$ with frame bound $A$ is the sequence $(\frac{1}{A}\phi_k)_{k\in K}$.
\end{Cor}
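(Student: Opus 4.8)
The plan is to show that for a tight frame with bound $A$ the frame operator is nothing but $A\cdot{\rm Id}_\Hil$; once that is established, the claim is immediate, since by Theorem~\ref{frexp} and the discussion following it the canonical dual is $(\bd S_\Phi^{-1}\phi_k)_{k\in K}$, and $\bd S_\Phi = A\cdot{\rm Id}_\Hil$ gives $\bd S_\Phi^{-1} = \frac1A{\rm Id}_\Hil$, hence $\bd S_\Phi^{-1}\phi_k = \frac1A\phi_k$ for every $k$.

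First I would unwind the definitions. Tightness with bound $A$ means that in Definition~\ref{framedef} both inequalities collapse to equalities, i.e. $\sum_{k\in K}|\<x,\phi_k\>|^2 = A\norm{\Hil}{x}^2$ for all $x\in\Hil$. On the other hand, as already computed in the proof of Theorem~\ref{froper}(c), $\<\bd S_\Phi x,x\> = \sum_{k\in K}|\<x,\phi_k\>|^2$. Combining the two yields $\<\bd S_\Phi x,x\> = A\norm{\Hil}{x}^2 = \<A\cdot{\rm Id}_\Hil\, x,x\>$, that is, $\<(\bd S_\Phi - A\cdot{\rm Id}_\Hil)x,x\> = 0$ for every $x\in\Hil$.

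The key step is then to conclude $\bd S_\Phi - A\cdot{\rm Id}_\Hil = 0$ from the vanishing of its quadratic form. Both $\bd S_\Phi$ (Theorem~\ref{froper}(c)) and $A\cdot{\rm Id}_\Hil$ are bounded self-adjoint operators, so $T := \bd S_\Phi - A\cdot{\rm Id}_\Hil$ is bounded and self-adjoint. For such a $T$ with $\<Tx,x\>=0$ for all $x$, a polarization argument forces $T=0$: expanding $\<T(x+y),x+y\> - \<T(x-y),x-y\>$ and using $T=T^*$ gives $\operatorname{Re}\<Tx,y\>=0$ for all $x,y\in\Hil$, and replacing $y$ by ${\rm i}y$ gives $\operatorname{Im}\<Tx,y\>=0$ as well; hence $\<Tx,y\>=0$ for all $x,y$, so $Tx=0$ for every $x$. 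Therefore $\bd S_\Phi = A\cdot{\rm Id}_\Hil$, and the conclusion follows as described. (Alternatively, one could bypass the polarization step entirely by invoking the partial-ordering reformulation used in the proof of Theorem~\ref{froper}(d): tightness gives $A\cdot{\rm Id}_\Hil \le \bd S_\Phi \le A\cdot{\rm Id}_\Hil$, which squeezes $\bd S_\Phi$ to equal $A\cdot{\rm Id}_\Hil$.)

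I expect the only point requiring any care to be this elementary operator-theoretic fact — that a bounded self-adjoint operator is determined by its quadratic form; everything else is bookkeeping with the definitions and operator identities already set up in the excerpt.
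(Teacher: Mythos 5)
Your proposal is correct and follows essentially the same route as the paper: the paper simply notes that the ordering $A\cdot{\rm Id}_\Hil \le \bd S_\Phi \le A\cdot{\rm Id}_\Hil$ from (\ref{sineq2}) forces $\bd S_\Phi = A\cdot{\rm Id}_\Hil$, which is exactly your parenthetical alternative, and then reads off the canonical dual. The only difference is that you additionally justify from scratch, via polarization, the elementary fact that a bounded self-adjoint operator is determined by its quadratic form, a step the paper leaves implicit in its appeal to the partial ordering of self-adjoint operators.
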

\begin{proof}
Let $\Phi$ be a tight frame for $\Hil$ with frame bound $A$. It follows from (\ref{sineq2}) that $\bd S_\Phi=A \cdot{\rm Id}_\Hil$ and thus the canonical dual of $\Phi$ is $(\bd S_\Phi^{-1}\phi_k)_{k\in K}=(\frac{1}{A}\phi_k)_{k\in K}$.
\end{proof}

In acoustic applications, it can be of big advantage to not be forced to distinguish between analysis and synthesis atoms.
So, one may aim to do analysis and synthesis with the same sequence as an analogue to the case with ONBs. 
However, such an analysis-synthesis strategy would perfectly reconstruct all the elements of the space if and only if this sequence is a Parseval frame: 

\begin{Pro} \label{sec:parseval1} {\rm (e.g. \cite[Lemma 5.7.1]{ole1})} The sequence $\Phi$ satisfies
\begin{equation}\label{parseq}
x=\sum_{k\in K}\<x,\phi_k\>\phi_k, \ \forall x\in\Hil,
\end{equation}
if and only it is a Parseval frame for $\Hil$.
\end{Pro}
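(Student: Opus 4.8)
The plan is to prove the two implications separately; the reverse one is almost immediate from the frame-operator identity already at hand.

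\textbf{($\Leftarrow$)} Assume $\Phi$ is a Parseval frame. Being a frame, it is in particular a Bessel sequence, so for every $x\in\Hil$ one has $(\<x,\phi_k\>)_{k\in K}\in\ell^2(K)$ and, by Theorem \ref{froper}, the series $\sum_{k\in K}\<x,\phi_k\>\phi_k = \bd D_\Phi\bd C_\Phi x = \bd S_\Phi x$ converges in $\Hil$. Applying (\ref{sineq2}) with $A=B=1$ forces both bounds to coincide with ${\rm Id}_\Hil$, hence $\bd S_\Phi = {\rm Id}_\Hil$ and $\sum_{k\in K}\<x,\phi_k\>\phi_k = x$. (Equivalently, one may quote Corollary \ref{dualtight} with frame bound $1$, which identifies $\Phi$ with its own canonical dual, and then apply Theorem \ref{frexp}.)

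\textbf{($\Rightarrow$)} Assume (\ref{parseq}) holds for all $x\in\Hil$. The key step is to pair both sides of (\ref{parseq}) with $x$ itself and use continuity of the inner product to pass it through the (norm-convergent) series:
\[
\|x\|_\Hil^2 = \<x,x\> = \left\langle \sum_{k\in K}\<x,\phi_k\>\phi_k,\; x\right\rangle = \sum_{k\in K}\<x,\phi_k\>\<\phi_k,x\> = \sum_{k\in K}|\<x,\phi_k\>|^2 .
\]
Since the non-negative series on the right equals the finite number $\|x\|_\Hil^2$, it converges, and the frame inequalities (\ref{frdef}) hold with $A=B=1$. Thus $\Phi$ is a Parseval frame.

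The only point deserving care in the write-up is the interchange of the inner product and the infinite sum in the ($\Rightarrow$) direction. It is legitimate because hypothesis (\ref{parseq}) says precisely that the partial sums of $\sum_{k\in K}\<x,\phi_k\>\phi_k$ converge to $x$ in $\Hil$, so the bounded linear functional $\<\,\cdot\,,x\>$ commutes with this limit. I would also stress that one need not assume beforehand that $\Phi$ is Bessel: the $\ell^2$-summability of $(\<x,\phi_k\>)_{k\in K}$, with the exact Parseval value, drops out of the computation itself, so no separate verification of the upper frame bound is needed.
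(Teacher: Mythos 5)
Your proof is correct and follows essentially the same route as the paper: the forward direction is the identical computation pairing (\ref{parseq}) with $x$, and your backward direction (via $\bd S_\Phi = {\rm Id}_\Hil$ from (\ref{sineq2}) with $A=B=1$) is just the content of Corollary \ref{dualtight}, which you also cite as the alternative and which the paper uses directly. Your explicit remark that the upper frame bound need not be assumed in advance is a nice touch, but the argument is the same as in the text.
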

\begin{proof} 
Let $\Phi$ be a Parseval frame for $\Hil$.  By Corollary \ref{dualtight}, the canonical dual of $\Phi$ is the same sequence $\Phi$, which implies that (\ref{parseq}) holds. 
Now assume that (\ref{parseq}) holds. Then for every $x\in\Hil$,
$$\|x\|^2=\<\sum_{k\in K}\<x,\phi_k\>\phi_k, x\>=\sum_{k\in K}\<x,\phi_k\>\<\phi_k, x\>
=\sum_{k\in K}|\<x,\phi_k\>|^2,$$
which means that $\Phi$ is a Parseval frame for $\Hil$.
\end{proof}

The above statement characterizes the sequences which provide reconstructions exactly like ONBs - these are precisely the Parseval frames. A trivial example of such a frame which is not an ONB is the sequence \sloppy 
$(e_1, e_2/\sqrt{2},  e_2/\sqrt{2},  e_3/\sqrt{3},  e_3/\sqrt{3},  e_3/\sqrt{3},  \ldots)$, where $(e_k)_{k=1}^\infty$ denotes an ONB for $\Hil$. Clearly, any tight frame with frame bound $A$ is easily converted into a Parseval frame by dividing the frame elements by the square root of $A$. Given any frame, one can always construct a Parseval frame as follows:

\begin{Pro}\label{prop:cantight} {\rm (e.g. \cite[Theor. 5.3.4]{ole1})}
Let $\Phi$ be a frame for $\Hil$. Then  $\bd S_\Phi^{-1}$ has a positive square root 
 and $(\bd S_\Phi^{-1/2} \phi_k)_{k\in K}$ forms a Parseval frame for $\Hil$.
\end{Pro}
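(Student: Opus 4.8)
The plan is to construct $\bd S_\Phi^{-1/2}$ by functional calculus and then verify the Parseval equality by a short direct computation. First I would recall from Theorem~\ref{froper}(c) that $\bd S_\Phi$ is a bounded, positive, self-adjoint bijection; hence $\bd S_\Phi^{-1}$ is again bounded, positive and self-adjoint. Invoking the standard fact that every bounded positive self-adjoint operator possesses a (unique) positive square root (see, e.g., \cite[Sec.~68]{Heuser}, or the continuous functional calculus), I obtain a bounded positive self-adjoint operator $\bd S_\Phi^{-1/2}$ with $(\bd S_\Phi^{-1/2})^2=\bd S_\Phi^{-1}$. Since $\bd S_\Phi^{-1/2}$ is a norm-limit of polynomials in $\bd S_\Phi^{-1}$, it commutes with $\bd S_\Phi^{-1}$, hence also with $\bd S_\Phi$.

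Next I would fix $x\in\Hil$, set $y_x:=\bd S_\Phi^{-1/2}x$, and use the identity $\sum_{k\in K}|\<z,\phi_k\>|^2=\<\bd S_\Phi z,z\>$ already obtained in the proof of Theorem~\ref{froper}(c). Applying it with $z=y_x$ and using that $\bd S_\Phi^{-1/2}$ is self-adjoint gives
\[
\sum_{k\in K}\bigl|\<x,\bd S_\Phi^{-1/2}\phi_k\>\bigr|^2
=\sum_{k\in K}\bigl|\<\bd S_\Phi^{-1/2}x,\phi_k\>\bigr|^2
=\<\bd S_\Phi\bd S_\Phi^{-1/2}x,\bd S_\Phi^{-1/2}x\>
=\<\bd S_\Phi^{-1/2}\bd S_\Phi\bd S_\Phi^{-1/2}x,x\>.
\]
The commutation of $\bd S_\Phi^{-1/2}$ with $\bd S_\Phi$, together with $(\bd S_\Phi^{-1/2})^2=\bd S_\Phi^{-1}$, collapses the operator in the last inner product to $\bd S_\Phi^{-1}\bd S_\Phi={\rm Id}_\Hil$, so the sum equals $\|x\|_\Hil^2$. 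In particular the sum is finite for every $x$, and since both the upper and the lower bound equal $1$, the sequence $(\bd S_\Phi^{-1/2}\phi_k)_{k\in K}$ is a Parseval frame for $\Hil$.

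The only genuinely non-routine ingredient is the existence (and the self-adjointness/commutation properties) of the positive square root $\bd S_\Phi^{-1/2}$; everything after that is bookkeeping. I expect this to be the point where one simply cites the spectral theorem rather than reproving it. If one wishes to stay elementary, one can exploit that $\|{\rm Id}_\Hil-\tfrac1B\bd S_\Phi\|<1$ (established in the proof of Theorem~\ref{froper}(c)): then $\tfrac1B\bd S_\Phi$ is a positive perturbation of the identity, its square root is given by the norm-convergent binomial series $\sum_{n\ge0}\binom{1/2}{n}\bigl({\rm Id}_\Hil-\tfrac1B\bd S_\Phi\bigr)^{n}$, whence $\bd S_\Phi^{1/2}$ and (by bijectivity) $\bd S_\Phi^{-1/2}$ arise as operators commuting with $\bd S_\Phi$ by construction. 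Either way, the remaining argument is exactly the short display above.
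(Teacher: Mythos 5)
Your proposal is correct and takes essentially the same route as the paper: both rely on the unique positive square root of $\bd S_\Phi^{-1}$, its self-adjointness and its commutation with $\bd S_\Phi$, and the collapse $\bd S_\Phi^{-1/2}\bd S_\Phi\bd S_\Phi^{-1/2}={\rm Id}_\Hil$; the only (harmless) difference is that you verify the Parseval equality $\sum_{k\in K}|\<x,\bd S_\Phi^{-1/2}\phi_k\>|^2=\|x\|_\Hil^2$ directly, in the spirit of the computation in Theorem~\ref{froper}(d), whereas the paper first derives the reproducing formula $x=\sum_{k\in K}\<x,\bd S_\Phi^{-1/2}\phi_k\>\bd S_\Phi^{-1/2}\phi_k$ and then invokes Proposition~\ref{sec:parseval1}. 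One small slip in your optional aside: with $T={\rm Id}_\Hil-\tfrac1B\bd S_\Phi$ the binomial series for $\bigl(\tfrac1B\bd S_\Phi\bigr)^{1/2}=({\rm Id}_\Hil-T)^{1/2}$ is $\sum_{n\ge0}\binom{1/2}{n}(-T)^n$, not $\sum_{n\ge0}\binom{1/2}{n}T^n$ as written, but this does not affect your main argument.
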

\begin{proof}  
Since $\SSphi^{-1}$ is a bounded positive self-adjoint operator, there is a unique bounded positive self-adjoint operator, 
which is denoted by  $\SSphi^{-1/2}$, with $\SSphi^{-1} = \SSphi^{-1/2} \SSphi^{-1/2}$. Furthermore,   $\SSphi^{-1/2}$ commutes with $\SSphi$.
For every $x \in \Hil$,
$$ \sum \limits_{k\in K} \< x, \SSphi^{-1/2} \phi_k \> \SSphi^{-1/2}\phi_k = \SSphi^{-1/2} \sum \limits_{k\in K} \< \SSphi^{-1/2} x, \phi_k \> \phi_k = \SSphi^{-1/2} \SSphi \SSphi^{-1/2} x = 
\SSphi^{-1} \SSphi  x =
x.  $$ 
By Proposition \ref{sec:parseval1} this means that $(\bd S_\Phi^{-1/2} \phi_k)_{k\in K}$ is a Parseval frame for $\Hil$.
\end{proof}

Finally, note that frames guarantee stability. 
Let $\Phi$ be a frame for $\Hil$ with frame bounds $A,B$. Then 
$\sqrt{A} \|x-y\|\leq  \|(\<x,\phi_k\>)-(\<y,\phi_k\>)_{k\in K} \|_{\ell^2}\leq \sqrt{B} \|x-y\|$ for $x,y\in\Hil$, which implies that 
close signals lead to close analysis coefficients and vice versa. Furthermore, the representations via $\Phi$ and a dual frame $\Psi$ is stable. If a signal $x$ is transmitted via the coefficients $(\<x,\psi_k\>)_{k\in K}$ but, during transmission, the coefficients are slightly disturbed (i.e. modified to a sequence $(a_k)_{k\in K}\in\ell^2$ with small $\ell^2$-difference), then by Theorem \ref{froper}(b) the \lq\lq reconstructed\rq\rq \, signal $y=\sum_{k\in K} a_k \phi_k$ will be close to $x$: $\|x-y\|=\|\sum_{k\in K} (\<x,\psi_k\>-a_k)_{k\in K}\phi_k\| \leq \sqrt{B} \|(\<x,\psi_k\>-a_k)_{k\in K} \|_{\ell^2}$. 

\subsection{Frame multipliers} \label{fmult}

Multipliers have been used implicitly for quite some time in applications, as time-variant filters, see e.g. \cite{hlma98}.
The first systematic theoretical development of Gabor multipliers appeared in \cite{feno03}. An extension of the multiplier concept to general frames in Hilbert spaces was done in \cite{ba07} and it can be derived as an easy consequence of Theorem \ref{froper}:

\begin{Pro} {\rm \cite{ba07}}
Let $\Phi$  and $\Psi$ be frames for $\Hil$ and let $m=(m_k)_{k\in K}$ be a complex scalar sequence in $\ell^\infty(K)$. Then the series
$\sum_{k\in K} m_k \<x, \psi_k\> \phi_k$
converges 
for every $x\in\Hil$ and determines a bounded operator on $\Hil$. 
\end{Pro}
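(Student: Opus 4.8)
The plan is to realize the map $x\mapsto\sum_{k\in K} m_k\langle x,\psi_k\rangle\phi_k$ as a composition of three bounded operators, which yields convergence of the series and boundedness of the resulting operator simultaneously. Write $A_\Phi,B_\Phi$ (resp.\ $A_\Psi,B_\Psi$) for frame bounds of $\Phi$ (resp.\ $\Psi$), and introduce the \emph{multiplication operator} $\bd M_m:\ell^2(K)\to\ell^2(K)$, $\bd M_m(c_k)_{k\in K}:=(m_kc_k)_{k\in K}$. The first, routine step is to verify that $\bd M_m$ is well defined and bounded on $\ell^2(K)$ with $\|\bd M_m\|\le\|m\|_{\ell^\infty}$: for any $(c_k)_{k\in K}\in\ell^2(K)$ one has $\sum_{k\in K}|m_kc_k|^2\le\|m\|_{\ell^\infty}^2\sum_{k\in K}|c_k|^2<\infty$.

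Next I would observe that the operator in the statement is precisely $\bd D_\Phi\bd M_m\bd C_\Psi$. Indeed, fix $x\in\Hil$; by Theorem \ref{froper}(a) applied to $\Psi$ we have $\bd C_\Psi x=(\langle x,\psi_k\rangle)_{k\in K}\in\ell^2(K)$, so by the previous step $\bd M_m\bd C_\Psi x=(m_k\langle x,\psi_k\rangle)_{k\in K}\in\ell^2(K)$. Then the convergence argument already carried out inside the proof of Theorem \ref{froper}(b) — namely that $\bd D_\Phi c=\sum_{k\in K}c_k\phi_k$ converges in $\Hil$ for every $c\in\ell^2(K)$, via a Cauchy estimate using only the upper frame bound $B_\Phi$ of $\Phi$ — shows that $\sum_{k\in K}m_k\langle x,\psi_k\rangle\phi_k$ converges in $\Hil$ and equals $\bd D_\Phi\bd M_m\bd C_\Psi x$. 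Being a composition of three bounded operators, $\bd D_\Phi\bd M_m\bd C_\Psi$ is a bounded operator on $\Hil$, with the explicit estimate $\|\bd D_\Phi\bd M_m\bd C_\Psi\|\le\|\bd D_\Phi\|\,\|\bd M_m\|\,\|\bd C_\Psi\|\le\sqrt{B_\Phi}\,\|m\|_{\ell^\infty}\,\sqrt{B_\Psi}$, which can be recorded as a bonus.

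There is no serious obstacle here; the one point to get right is not to re-prove convergence of the series from scratch but to invoke the well-definedness of $\bd D_\Phi$ on all of $\ell^2(K)$, already established in the proof of Theorem \ref{froper}(b). The same Cauchy estimate also shows the convergence is unconditional, so the series is meaningful without fixing an enumeration of $K$. Finally, it is worth noting that the lower frame bounds of $\Phi$ and $\Psi$ are never used: the statement rests only on the \emph{Bessel} (upper-bound) property of both sequences, i.e.\ on the boundedness of $\bd C_\Psi$ and $\bd D_\Phi$.
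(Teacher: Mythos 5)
Your proof is correct and is essentially the paper's own argument: the paper likewise writes the operator as $x\mapsto \bd D_\Phi\bigl((m_k\langle x,\psi_k\rangle)_{k\in K}\bigr)$, gets convergence from Theorem \ref{froper}(a),(b) (i.e.\ $\bd C_\Psi x\in\ell^2$, multiplication by $m\in\ell^\infty$ stays in $\ell^2$, and $\bd D_\Phi$ is defined on all of $\ell^2$), and bounds the norm by $\|\bd D_\Phi\|\,\|m\|_\infty\,\|\bd C_\Psi\|$. Your explicit factorization through the multiplication operator $\bd M_m$ and the remarks on unconditional convergence and on only the Bessel (upper-bound) property being needed are correct but merely make explicit what the paper's estimate already contains.
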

\begin{proof} 
For every $x\in\Hil$, Theorem \ref{froper}(a) implies that $(\<x, \psi_k\>)_{k\in K}\in\ell^2$ and thus $(m_k\<x, \psi_k\>)_{k\in K}\in\ell^2$, which by  Theorem \ref{froper}(b)  implies that the series $\sum_{k\in K} m_k \<x, \psi_k\> \phi_k$ converges.  
Thus, the mapping 
 $\bd M_{m,\Phi,\Psi}$ determined by $\bd M_{m,\Phi,\Psi}x:=\sum_{k\in K} m_k \<x, \psi_k\> \phi_k$ is well defined on $\Hil$ and furthermore linear. For every $x\in \Hil$, 
\begin{eqnarray*}
\|\bd M_{m,\Phi,\Psi}x\|_{\Hil} &= &\|\bd D_\Phi (m_k \<x, \psi_k\>)_{k\in K}\|_\Hil 
\leq \|\bd D_\Phi\| \cdot \|(m_k \<x, \psi_k\>)_{k\in K}\|_{\ell^2} \\
& \leq & \|\bd D_\Phi\| \cdot\|m\|_\infty \cdot\| \bd C_\Psi \|\cdot \|x\|_\Hil,
\end{eqnarray*}
implying the boundedness of $\bd M_{m,\Phi,\Psi}$. 
\end{proof}

Due to above proposition, frame multipliers can be defined as follows:

\begin{Def} \label{frmult}
Given frames $\Phi$  and $\Psi$  for $\Hil$ and given complex scalar sequence $m=(m_k)_{k\in K}\in\ell^\infty(K)$, the operator $M_{m,\Phi,\Psi}$ determined by
\begin{equation}\label{mult}
\bd M_{m,\Phi,\Psi}x:=\sum_{k\in K} m_k \<x, \psi_k\> \phi_k, \ x\in\Hil,
\end{equation}
is called a \emph{frame multiplier} with a \emph{symbol} $m$.
\end{Def}

Thus, frame multipliers extend the frame operator, allowing different frames for the analysis and synthesis step, and modification in between (for an illustration, see Figure \ref{fig:exampmult1}). However, in contrast to frame operators, multipliers in general loose the bijectivity (as well as self-adjointness and positivity). 
For some applications it might be necessary to invert multipliers, which brings the interest to bijective multipliers and formulas for their inverses  - for interested readers, we refer to 
\cite{bsreprinv2015,bast12,bast13,iwota11} for some investigation in this direction.

\begin{figure}[h!t] 
\includegraphics[width=1\textwidth]{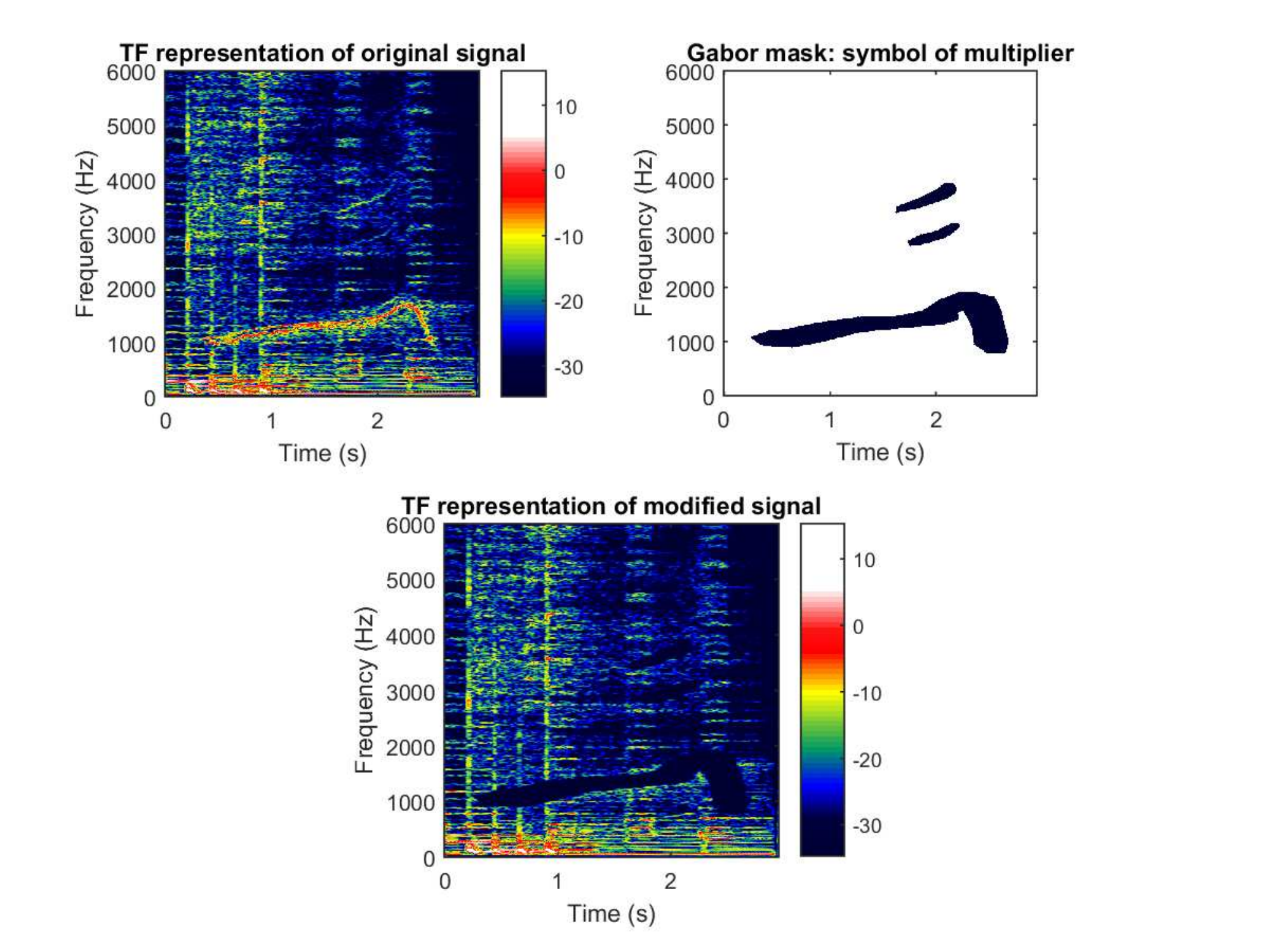} 
\caption{\label{fig:exampmult1} {\em An illustrative example to visualize a multiplier (taken from \cite{bsreprinv2015}).} 
(TOP LEFT) The time-frequency representation of 
 the music signal $f$. (TOP RIGHT) The symbol $m$, found by a (manual) estimation of the time-frequency region of the singer's voice. 
 (BOTTOM)  Time-frequency representation of $M_{m,\widetilde \Psi,\Psi}f$.} 
\end{figure}

In the language of signal processing,  Gabor filters \cite{hlawatgabfilt1} are a particular way to do time-variant filtering. In fact, Gabor filters are nothing but frame multipliers associated to a Gabor frame.
A signal $x$ is transformed to the time-frequency domain (with a Gabor frame $\Phi$), then modified there by point-wise multiplication with the symbol $m$, followed by re-synthesis via some Gabor frame $\Psi$ providing a modified signal. If some elements $m_k$ of the symbol $m$ are zero, the corresponding coefficients are removed, as sometimes used in applications like CASA or percerptual sparsity, see Secs.~\ref{sec:casa0} and~~\ref{sec:Irrel0}. 

\subsubsection{Implementation}
In the finite-dimensional case, frames lend themselves easily to implementation in computer codes \cite{ba08-1}. 
The Large Time-Frequency Analysis Toolbox (LTFAT) \cite{basoto12}, see \url{http://ltfat.github.io/}, is an open-source Matlab/Octave toolbox intended for time-frequency analysis, synthesis and processing, including multipliers. It provides robust and efficient implementations for a variety of frame-related operators for generic frames and several special types, e.g. Gabor and filter bank frames.  

In a recent release, reported in \cite{ltfatnote030}, a 'frames framework' was implemented, which models the abstract frame concept in an object-oriented approach. In this setting any algorithm can be designed to use a general frame. If a structured frame, e.g. of Gabor or wavelet type, is used, more efficient algorithms are automatically selected.

\section{\label{sec:erbfb} Filter bank frames: a signal processing viewpoint}

Linear time-invariant \emph{filter banks (FB)} are a classical signal analysis and processing tool. Their general, potentially non-uniform structure provides the natural setting for the design of flexible, frequency-adaptive time-frequency signal representations \cite{badokoto13}. In this section, we recall some basics of FB theory and consider the relation of perfect reconstruction FBs to certain frame systems. 

\subsection{Basics of filter banks}

In the following, we consider discrete signals with finite energy ($x\in\ell^2(\ZZ)$), interpreted as samples of a continuous signal, sampled at sampling frequency $\xi_s$, i.e. the signal was sampled every $1/\xi_s$ seconds. 
Bold italic letters indicate matrices (upper case), e.g. {$\bdi G$}, and vectors (lower case), e.g. {$\bdi h$}. We denote by $W_{N} = e^{2 i \pi /N}$ the $N$th root of unity and by $\delta_{k} = \delta_0[\cdot -k]$ the (discrete) Dirac symbol, with $\delta_k[n]=1$ for $n=k$ and $0$ otherwise. Observe that for $q = D/d$ we have 
\begin{equation}\label{eq:schah1}
	\sum_{l=0}^{q-1} W_{D}^{jld} = \sum_{l=0}^{q-1} e^{2\pi i jl/q} = \left\{%
	\begin{array}{ll}
		q & \text{if $j$ is a multiple of $q$}\\
		0 & \text{otherwise.}
	\end{array}\right.
\end{equation}

The \emph{$z$-transform} maps a \emph{(discrete-)time domain} signal $x$ to its \emph{frequency domain} representation $X$ by 
\[\mathcal{Z}: \, x[n] \mapsto X(z) = \sum_{n\in\ZZ} x[n]z^n \text{, for all } z\in\CC.\]
By setting $z = e^{2\pi i\xi}$ for $\xi \in \mathbb{T}$, the $z$-transform equals the discrete-time Fourier transform ($\mathrm{DTFT}$). Note that the $z$-transform is uniquely determined by its values on the complex unit circle~\cite{opsc89}. It is easy to see that, $\mathcal{Z} \left( \delta_{k} \right) = z^k$, a property that we will use later on.

The application of a filter to a signal $x$ is given by the convolution of $x$ with the time domain representation, or \emph{impulse response} $h\in\ell^2(\ZZ)$ of the filter 
\begin{equation}\label{eq:filtconv}
  y[n] = x\ast h[n] = \sum_{l\in\ZZ} x[l]h[n-l],\ \forall\ n\in\ZZ, 
\end{equation}
or equivalently by multiplication in the frequency domain $Y(z) = X(z)H(z)$, where $H(z)$ is the \emph{transfer function}, or frequency domain representation, of the filter.

Furthermore define the \emph{downsampling} and \emph{upsampling} operators $\downarrow_{d},\ \uparrow_{d}$ by 
\begin{equation}\label{eq:downupsample}
  \downarrow_{d}\left\{x\right\}[n] = x[d \cdot n]\quad \text{and}\quad \uparrow_{d}\left\{x\right\}[n] = \begin{cases}
    x[n/d] & \text{ if } n\in d\ZZ,\\
    0 & \text{ otherwise.}
    \end{cases}
\end{equation}
Here, $d\in\NN$ is called the \emph{downsampling} or \emph{upsampling factor}, respectively. In the frequency domain, the effect of down- and upsampling is the following~\cite{oppenheim1989discrete}:
\begin{equation}\label{eq:ZDUsample}
 \mathcal Z (\downarrow_{d}\left\{x\right\})(z) = d^{-1}\sum_{j=0}^{d-1} X(W_d^j z^{1/d})\quad \text{and}\quad \mathcal Z (\uparrow_{d}\left\{x\right\})(z) = X(z^d).
\end{equation}
In words, downsampling a signal by $d$ results in the dilation of its spectrum by $d$ and the addition of $(d-1)$ copies of the dilated spectrum. These copies of the spectrum (the terms $X(W_d^j z^{1/d})$ for $j \neq 0$ in the sum above) are called \textit{aliasing terms}. Conversely, upsampling a signal by $d$ results in the contraction of its spectrum by $d$.

An FB is a collection of analysis filters $H_{k}(z)$, synthesis filters $G_{k}(z)$, and downsampling and upsampling factors $d_k$, $k\in\{0,\ldots,K\}$, see Fig.~\ref{sfig:nonuniformFB}. 
An FB is called \emph{uniform}, if all filters have the same downsampling factor, i.e. $d_k = D$ for all $k$. 

\begin{figure}[!t]
	\begin{center}
	\includegraphics[width=0.89\textwidth]{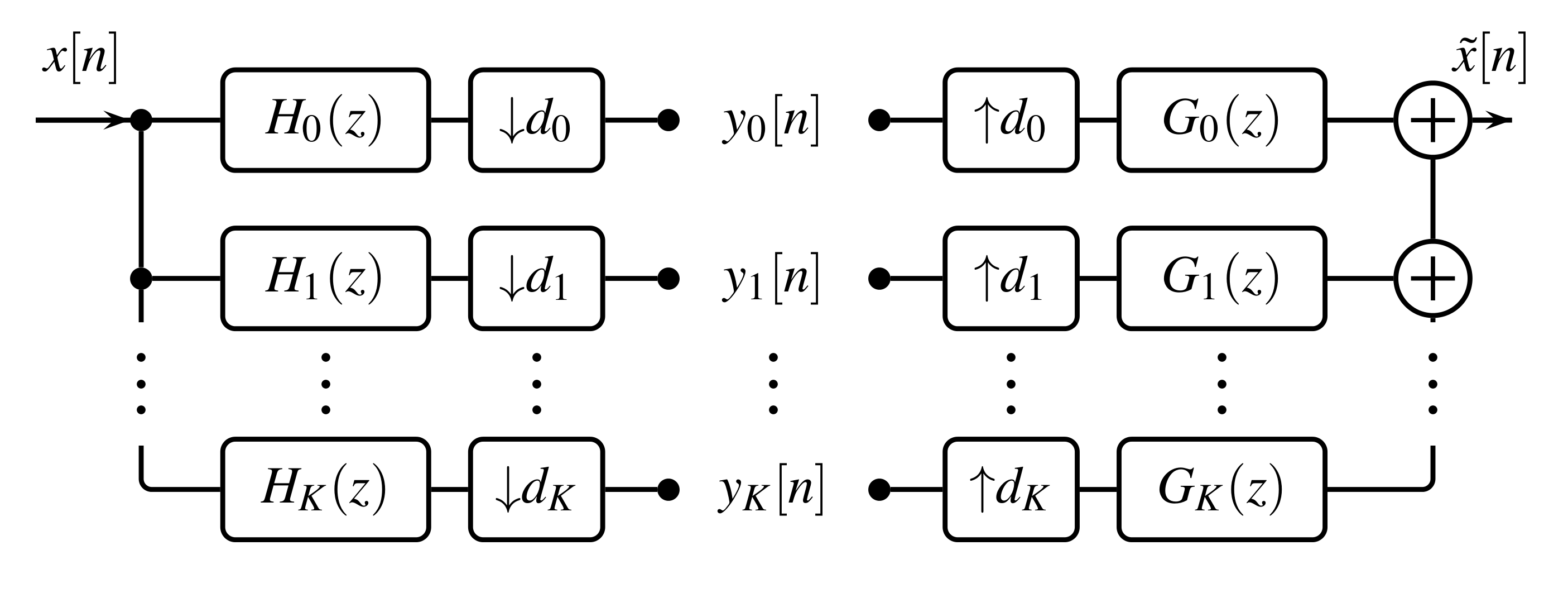}
	\end{center}
	\caption{General structure of a non-uniform analysis-synthesis FB.}
	\label{sfig:nonuniformFB}
\end{figure}

The sub-band components $y_{k}[n]$ of the system represented in Fig.~\ref{sfig:nonuniformFB} are given in the time domain by 
\begin{equation}\label{eq:yknonuniform}
	y_{k}[n] \, = \, \downarrow_{d_{k}}\left\{ h_{k}*x\right\}[n]
\end{equation}
The output signal is $\tilde{x}[n] \, =\, \sum_{k=0}^{K}\left(g_{k}*\uparrow_{d_{k}}\left\{y_{k}\right\} \right)[n]$. When analyzing the properties of a filter (bank), it is often useful to transform the expression for $\tilde{x}$ to the frequency domain. First, apply the z-transform to the output of a single analysis/synthesis branch, obtaining
\begin{equation} \label{eq:filterz0}
   \mathcal Z \left(g_{k}*\uparrow_{d_{k}}\left\{y_{k}\right\} \right)(z) = d_k^{-1}[X(W^0_{d_k}z),\ldots,X(W^{d_k-1}_{d_k}z)]\left[\begin{matrix}
                                                             H_k(W^0_{d_k}z)\\
                                                             \vdots\\
                                                             H_k(W^{d_k-1}_{d_k}z)
                                                           \end{matrix}\right] G_k(z),
\end{equation}
where the down- and upsampling properties of the z-transform were applied, see Eq. \eqref{eq:ZDUsample}. Now let $D = \lcm(d_0,\ldots,d_K)$, i.e. the least common multiple of the downsampling factors, and $D/d_k = q_k$. Then \eqref{eq:filterz0} gives 

\begin{equation}
   \mathcal Z \left(g_{k}*\uparrow_{d_{k}}\left\{y_{k}\right\} \right)(z) = D^{-1}[X(W^0_{D}z),\ldots,X(W^{D-1}_{D}z)]\bdi h_k(z) G_k(z)\label{eq:polyphaserep1channel},      \end{equation}
where , 
\[\bdi h_{k}(z) = q_k \cdot \Big[
	H_k(z), \underbrace{0 , \cdots , 0}_{q_{k}-1 \text{ zeros}}, 
	H_k(W_{D}^{q_k}z),  \underbrace{0 , \cdots , 0}_{q_{k}-1 \text{ zeros}},
	\cdots, 
	H_k(W_{D}^{(d_k-1)q_k}z), \underbrace{0 , \cdots , 0}_{q_{k}-1 \text{ zeros}}
	\Big]^T.\]
The relevance of this equality becomes clear if we use linearity of the z-transform to obtain a frequency domain representation of the full FB output, also called the \emph{alias domain representation} \cite{Vaidyanathan:1993a}
\begin{eqnarray}
    \tilde{X}(z) &=& \sum_{k=0}^K \mathcal Z \left(g_{k}*\uparrow_{d_{k}}\left\{y_{k}\right\} \right)(z)\nonumber\\
    &=& D^{-1}[X(W^0_{D}z),\ldots,X(W^{D-1}_D z)]\left[\bdi h_0(z),\ldots,\bdi h_K(z) \right]\left[\begin{matrix}
                                                                                            G_0(z)\\
                                                                                            \vdots\\
                                                                                            G_K(z)
                                                                                           \end{matrix}\right]\nonumber\\
    &=& D^{-1}[X(W^0_{D}z),\ldots,X(W^{D-1}_D z)]\bdi H(z) \bdi G(z)\label{eq:polyphaserep},
  \end{eqnarray}
where $\bdi H(z) = \left[\bdi h_0(z),\ldots,\bdi h_K(z) \right]$ is the $D\times (K+1)$ \emph{alias component matrix} \cite{Vaidyanathan:1993a} and $\bdi G(z) = \left[G_0(z),\ldots,G_K(z) \right]$.

An FB system is \emph{undersampled, critically sampled or oversampled}, if $R=\sum_{k=0}^K d_k^{-1}$ is smaller than, equal to or larger than $1$, respectively. Consequently, a uniform FB is critically sampled if it has exactly $D$ subbands. For a deeper treatment of FBs, see e.g. \cite{Kovacevic:1993a,Vaidyanathan:1993a}.

{\bf Perfect reconstruction FBs:} An FB is said to provide perfect reconstruction if $\tilde{x}[n] = x[n-l]$ for all $x\in\ell^2(\ZZ)$ and some fixed $l\in\ZZ$. In the case when $l\neq 0$, the FB output is \emph{delayed} by $l$. Using the alias domain representation of the FB, the \emph{perfect reconstruction condition} can be expressed as
\begin{equation}\label{eq:PRcondition}
	\bdi H(z) \, \bdi G(z)  = z^l\left[D \; 0 \cdots 0 \; \right]^{T},
\end{equation}
for some $l\in\ZZ$, as this condition is equivalent to $\tilde{X}(z) = z^l X(z) = \mathcal Z (x*\delta_k)(z)$.
From this vantage point the perfect reconstruction condition can be interpreted as all the alias components (i.e. from the $2$nd to $D+1$-th) in $\bdi H(z)$ being uniformly canceled over all $z\in\CC$ by the synthesis filters $\bdi G(z)$, while the first component of $\bdi H(z)$ remains constant over all $z\in\CC$ (up to a fixed power of $z$). The perfect reconstruction condition is of tremendous importance for determining whether an FB, including both analysis and synthesis steps, provides perfect reconstruction. However, given a fixed analysis FB, the alias domain representation may fail to provide straightforward or efficient ways to find 
suitable synthesis filters that provide perfect reconstruction. It can sometimes be used to determine whether such a system can exist, although the process is far from intuitive~\cite{Hoang:1989a}.
Consequently, non-uniform perfect reconstruction FBs are still not completely investigated, and thus frame theory may provide valuable new insights. However, for uniform FBs the perfect reconstruction conditions have been largely treated in the literature~\cite{Kovacevic:1993a,Vaidyanathan:1993a}. Therefore, before we indulge in the frame theory of FBs, we also show how a non-uniform FB can be decomposed into its equivalent uniform FB. Such a uniform equivalent of the FB always exists~\cite{Kovacevic:1993a,Akkarakaran:2003a} and can be obtained as shown in Fig.~\ref{sfig:equniformFB} and described below.

\subsection{The equivalent uniform filter bank}

\begin{figure}
	\begin{center}
	 \includegraphics[width=0.89\textwidth]{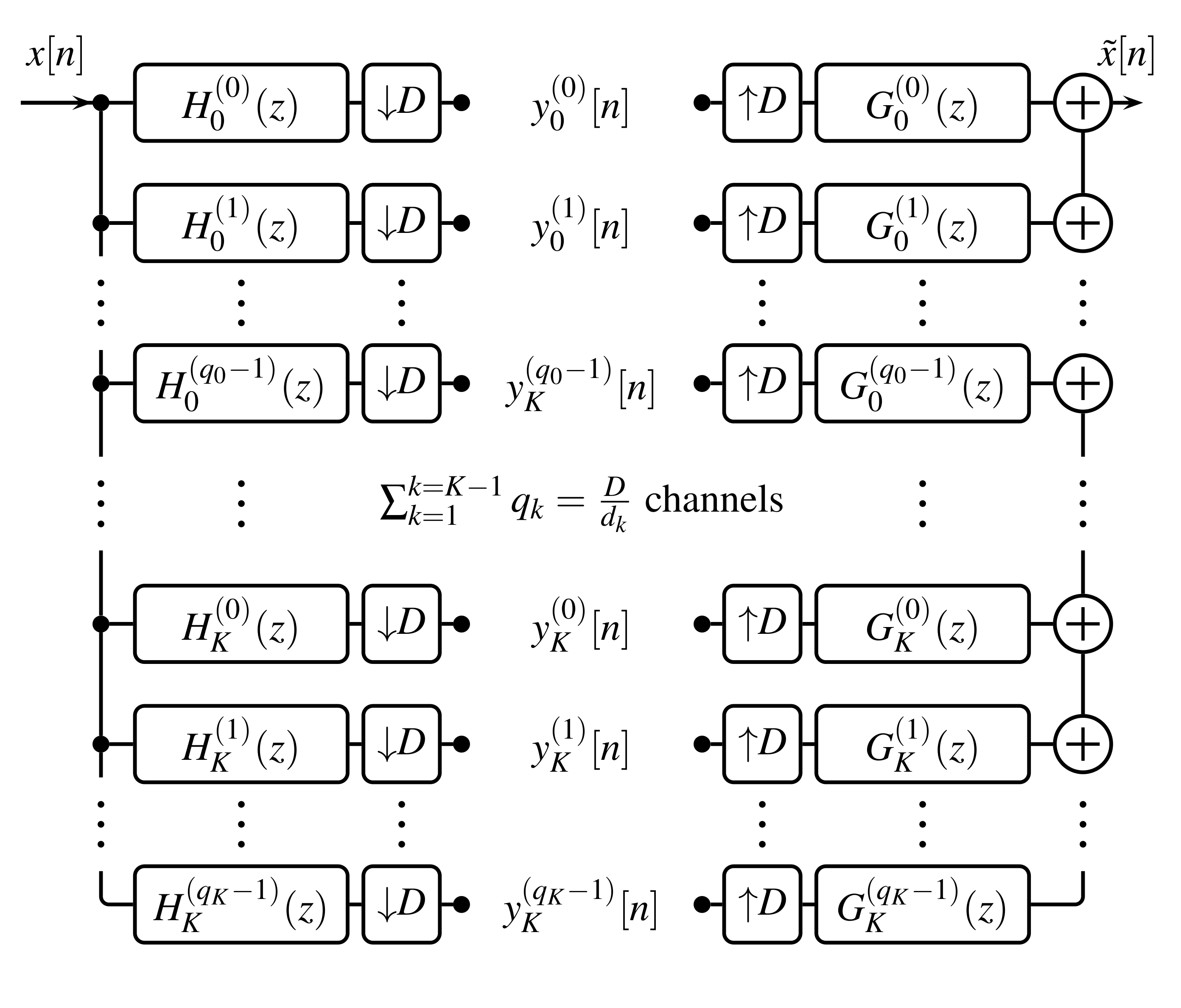}
	\end{center}
	\caption{The equivalent uniform FB~\cite{Akkarakaran:2003a} corresponding to the non-uniform FB in Fig.~\ref{sfig:nonuniformFB}. The terms $H_k^{(l)}$ and $G_k^{(l)}$ in~(b) correspond to the $z$-transforms of the terms $h_{k}^{(l)}$ and $g_{k}^{(l)}$ defined in~\eqref{eq:delayedfilts}.}
	\label{sfig:equniformFB}
\end{figure}

To construct the equivalent uniform FB to a general FB specified by analysis filters $H_{k}(z)$, synthesis filters $G_{k}(z)$, and downsampling and upsampling factors $d_k$, $k\in\{0,\ldots,K\}$, start by denoting again $D = \lcm(d_0,\ldots,d_K)$. We first construct the desired uniform FB, before showing that it is in fact equivalent to the given non-uniform FB. For every filter $h_k, g_k$ in the non-uniform FB, introduce $q_k = D/d_k$ filters, given by specific delayed versions of $h_k,g_k$:
\begin{equation}\label{eq:delayedfilts}
  h_k^{(l)}[n] = h_k\ast \delta_{ld_k} = h_k \left[ n - l d_k\right] \quad \text{and} \quad g_k^{(l)}[n] = g_k\ast \delta_{-ld_k} = g_k \left[ n + l d_k\right],
\end{equation}
for $l=0,\ldots,q_k-1$. It is easily seen that convolution with $\delta_k$ equals translation by $k$ samples by just checking the definition of the convolution operation \eqref{eq:filtconv}. Consequently, the sub-band components are 
\begin{equation}
	y_{k}^{(l)}[n] = y_{k}[n q_{k} - l] = \downarrow_{D}\{\underbrace{h_{k}*\delta_{ld_{k}}}_{:= h_{k}^{(l)}}* x\} [n],
\end{equation}
where $y_k$ is the $k$-th sub-band component with respect to the non-uniform FB. Thus, by grouping the corresponding $q_k$ sub-bands, we obtain
\[y_{k}[n] = \sum_{l=0}^{q_{k}-1}\uparrow_{q_{k}}\left\{y_{k}^{(l)}\right\} [n+l]. \] 

In the frequency domain, the filters $h_k^{(l)}, g_k^{(l)}$ are given by
\[
 H_k^{(l)}(z) = z^{ld_k}H_k(z) \quad \text{and} \quad G_k^{(l)}(z) = z^{-ld_k}G_k(z).
\]
Similar to before, the output of the FB can be written as
\begin{eqnarray}
	\widetilde{X}(z) &=& D^{-1}\sum_{k=0}^{K}\sum_{j=0}^{D-1}\sum_{l=0}^{q_{k}-1} G_{k}^{(l)}(z) H_{k}^{(l)}\left(W_{D}^{j}z\right)X\left(W_{D}^{j}z\right)\nonumber\\
	&=& D^{-1}\sum_{k=0}^{K}\sum_{j=0}^{D-1} G_{k}(z) H_{k}\left(W_{D}^{j}z\right)X\left(W_{D}^{j}z\right)\sum_{l=0}^{q_{k}-1} W_{D}^{jld_k}\label{eq:xtildez}
\end{eqnarray}
To obtain the second equality, we have used that $G_{k}^{(l)}(z) H_{k}^{(l)}\left(W_{D}^{j}z\right) = W_{D}^{jld_k}G_{k}(z)H_{k}\left(W_{D}^{jld_k}z\right)$.
Insert Eq. \eqref{eq:schah1} into \eqref{eq:xtildez} to obtain
\begin{eqnarray}
	\widetilde{X}(z) &=& D^{-1}\sum_{k=0}^{K}\sum_{j=0}^{d_k-1} q_k G_{k}(z) H_{k}\left(W_{D}^{jq_k}z\right)X\left(W_{D}^{jq_k}z\right)\nonumber\\
	&=& D^{-1}\sum_{k=0}^{K} [X(W_D^0 z),\ldots,X(W_D^{D_1}(z)]\bdi h_k(z)G_{k}(z)\nonumber\\
	&=& D^{-1}[X(W^0_{D}z),\ldots,X(W^{D-1}_D z)]\bdi H(z) \bdi G(z),	
\end{eqnarray}
which is exactly the output of the non-uniform FB specified by the $h_k$'s, $g_k$'s and $d_k$'s, see \eqref{eq:polyphaserep}. Therefore, we see that an equivalent uniform FB for every non-uniform FB is obtained by decomposing each $k$-th channel of the non-uniform system into $q_k$ channels. The uniform system then features $\sum_{k=0}^{K} q_k$ channels in total with the downsampling factor $D=\lcm(d_0,\ldots,d_K)$ in all channels. 

\subsection{\label{ssec:frametheory} Connection to Frame Theory}

We will now describe in detail the connection between non-uniform FBs and frame theory. The main difference to previous work in this direction, cf. \cite{Bolcskei:1998a,Cvetkovic:1998a,chai10,Fickus:2013a}, is that we do not restrict to the case of uniform FBs. The results in this section are not new, but this presentation is their first appearance in the context of non-uniform FBs. Besides using the equivalent uniform FB representation, see Fig. \ref{sfig:equniformFB}, we transfer results previously obtained for \emph{generalized shift-invariant systems}~\cite{rosh04,helawe02} and nonstationary Gabor systems~\cite{badohojave11,Holighaus:2013a,ho14-1} to the non-uniform FB setting. For that purpose, we consider frames over the Hilbert space $\mathcal{H} = \ell^2(\mathbb Z)$ of finite energy sequences. Moreover, we consider only FBs with a finite number $K+1\in\NN$ of channels, a setup naturally satisfied in every real-
world application. The central observation linking FBs to frames is that the convolution can be expressed as an inner product:
\[
 y_{k}[n] =  \downarrow_{d_{k}}\left\{ h_{k}*x\right\}[n] = \langle x, \overline{h_k[nd_k -\cdot]} \rangle
\]
where the bar denotes the complex conjugate. Hence, the sub-band components with respect to the filters $h_k$ and downsampling factors $d_k$ equal the frame coefficients of the system $\Phi = \left(\overline{h_k[nd_k -\cdot]}\right)_{k,n}$. Note that the upper frame inequality, see Eq. \eqref{frdef}, is equivalent to the $h_k$'s and $d_k$'s defining a system where bounded energy of the input implies bounded energy of the output.
We will investigate the frame properties of this system by transference to the Fourier domain \cite{bacahemo11};
we consider $\widehat{\Phi} =\left(\bd E_{-nd_k}\widehat{h_k}\right)_{k,n}$, where $\widehat{h_k}(\xi) = \overline{H_k(e^{2\pi i\xi})}$ denotes the Fourier transform of $\overline{h_k[-\cdot]}$ and the operator $\bd E_{\omega}$ denotes modulation, i.e. $\bd E_{-nd_k}\widehat{h_k}(\xi) = \widehat{h_k}(\xi)e^{-2\pi ind_k \xi}$. 

If $\Phi$ satisfies at least the upper frame inequality in Eq. \eqref{frdef}, then the frame operators $\bd S_\Phi$ and $\bd S_{\widehat{\Phi}}$ are related by the matrix Fourier transform~\cite{ba05-1}: 
\[
 \bd S_{\widehat{\Phi}} = \mathcal{F}_{DT}\,\bd S_\Phi\,\mathcal{F}_{DT}^{-1},
\]
where $\mathcal{F}_{DT}$ denotes the discrete-time Fourier transform. Since the matrix Fourier transform is a unitary operation, the study of the frame properties of $\Phi$ reduces to the study of the operator $\bd S_{\widehat{\Phi}}$. In the context of FBs, the frame operator can be expressed as the action of an FB with analysis filters $h_k$'s, downsampling and upsampling factors $d_k$'s, and synthesis filters $\overline{h_k[-\cdot]}$. That is, the synthesis filters are given by the time-reversed, conjugate impulse responses of the analysis filters. This is a very common approach to FB synthesis. But note that it only gives perfect reconstruction if the system constitutes a Parseval frame, see Prop. \ref{sec:parseval1}. The z-transform of a time-reversed, conjugated signal is given by $\mathcal{Z}(\overline{h[-\cdot]})(z) = \overline{\mathcal{Z}(h)(1/\overline{z})}$. Inserting this into the alias domain representation of the FB \eqref{eq:polyphaserep} yields
\begin{eqnarray}
  \bd S_{\widehat{\Phi}}X(z) & = \frac{1}{D}\left[X(W_{D}^{0}z) \cdots X(W_{D}^{D-1}z)\right] \bdi H(z) \left[%
      	\begin{array}{c}
        	\overline{H_0(1/\overline{z})}\\
        	\vdots \\
        	\overline{H_K(1/\overline{z})}
        \end{array}\right]
\end{eqnarray}
or, restricted to the Fourier domain
\begin{equation}\label{eq:poly2}
  \bd S_{\widehat{\Phi}}X(e^{2\pi i\xi}) = [X(e^{2\pi i(\xi+0/D)}) \cdots X(e^{2\pi i(\xi+(D-1)/D)})]\mathcal H(\xi), 
\end{equation}
with  
\begin{equation}\label{eq:poly3}
	\mathcal H(\xi) := [\mathcal{H}_0(\xi),\ldots,\mathcal{H}_{D-1}(\xi)]^T := \frac{1}{D}\bdi H(e^{2\pi i\xi})\left[\overline{H_0(e^{2\pi i\xi})},\ldots,\overline{H_K(e^{2\pi i\xi})}\right]^T,
\end{equation}
for $\xi\in\mathbb{T} = \RR/\ZZ$. Here, we used $\overline{1/e^{2\pi i \omega}} = e^{2\pi i\omega}$ for all $\omega\in\RR$. We call $\mathcal H_0$ the \emph{frequency response} and $\mathcal H_n$, $n=1,\cdot D-1$ the \emph{alias components} of the FB.

Another way to derive Eq. \eqref{eq:poly2} is by using the 
the Walnut representation of the frame operator for the nonstationary Gabor frame $\widehat{\Phi} = \left(\bd E_{-nd_k}\widehat{h_k}\right)_{k,n}$, first introduced in \cite{DBLP:journals/ijwmip/DorflerM14} for the continuous case setting.

\begin{Pro}\label{pro:walnut}
  Let $\widehat{\Phi} = \left(\bd E_{-nd_k}\widehat{h_k}\right)_{k\in\{0,\ldots,K\},n\in\ZZ}$, with $\widehat{h_k}\in L^2(\TT)$ being (essentially) bounded
   and $d_k\in\NN$. Then the frame operator $\bd S_{\widehat{\Phi}}$ admits the Walnut representation 
  \begin{equation}\label{eq:walnut}
    S_{\widehat{\Phi}}\widehat{x}(\xi) = \sum_{k=0}^K\sum_{n=0}^{d_k-1} d_k^{-1}\widehat{h_k}(\xi)\overline{\widehat{h_k}(\xi-nd_k^{-1})}\widehat{x}(\xi-nd_k^{-1}),
  \end{equation}
  for almost every $\xi\in\TT$ and all $\widehat{x}\in L^2(\TT)$.
\end{Pro}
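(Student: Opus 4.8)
The plan is to compute the frame operator $\bd S_{\widehat\Phi}$ directly from its defining formula $\bd S_{\widehat\Phi}\widehat x = \sum_{k,n}\<\widehat x,\bd E_{-nd_k}\widehat{h_k}\>\,\bd E_{-nd_k}\widehat{h_k}$ and then simplify the double sum over $n\in\ZZ$ by a periodization (Poisson-type) argument, exactly as in the classical Walnut representation for (nonstationary) Gabor frames. First I would fix $k$ and work with a single generator: writing out $\<\widehat x,\bd E_{-nd_k}\widehat{h_k}\>=\int_{\TT}\widehat x(\eta)\,\overline{\widehat{h_k}(\eta)}\,e^{2\pi i n d_k\eta}\,d\eta$, so that $\bd S_{\widehat\Phi}\widehat x(\xi)=\sum_k\widehat{h_k}(\xi)\sum_{n\in\ZZ}\Big(\int_{\TT}\widehat x(\eta)\overline{\widehat{h_k}(\eta)}e^{2\pi i n d_k\eta}\,d\eta\Big)e^{-2\pi i n d_k\xi}$. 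The inner sum over $n$ is a sum of the Fourier coefficients (sampled at multiples of $d_k$) of the $1$-periodic function $\eta\mapsto\widehat x(\eta)\overline{\widehat{h_k}(\eta)}$, evaluated against $e^{-2\pi i n d_k\xi}$; by the Poisson summation / Fourier-series identity $\sum_{n\in\ZZ}e^{2\pi i n d_k(\eta-\xi)}=d_k^{-1}\sum_{m=0}^{d_k-1}\delta(\eta-\xi+m d_k^{-1})$ (interpreted on $\TT$), this collapses the integral and produces $d_k^{-1}\sum_{m=0}^{d_k-1}\widehat x(\xi-md_k^{-1})\overline{\widehat{h_k}(\xi-md_k^{-1})}$. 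Multiplying back by $\widehat{h_k}(\xi)$ and summing over $k$ yields exactly \eqref{eq:walnut}.

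To make this rigorous rather than formal, I would instead argue by testing against $\widehat y\in L^2(\TT)$: compute $\<\bd S_{\widehat\Phi}\widehat x,\widehat y\>=\sum_{k}\sum_{n\in\ZZ}\<\widehat x,\bd E_{-nd_k}\widehat{h_k}\>\overline{\<\widehat y,\bd E_{-nd_k}\widehat{h_k}\>}$, recognize each inner sum over $n$ as $\sum_{n}\widehat{c_n^{(k)}}\,\overline{\widehat{e_n^{(k)}}}$ where $c^{(k)}=\widehat x\,\overline{\widehat{h_k}}$ and $e^{(k)}=\widehat y\,\overline{\widehat{h_k}}$ restricted to periodization on the $d_k^{-1}\ZZ$-coset structure, and apply Parseval on the finite group $\ZZ/d_k\ZZ$ after folding: the sum over $n\in\ZZ$ of the products of $d_k$-sampled Fourier coefficients equals $d_k^{-1}\int_{\TT}\big(\sum_{m=0}^{d_k-1}c^{(k)}(\eta-md_k^{-1})\big)\overline{\big(\cdots\big)}$ — more directly, one gets $\sum_{n}\<\widehat x,\bd E_{-nd_k}\widehat{h_k}\>\overline{\<\widehat y,\bd E_{-nd_k}\widehat{h_k}\>}=d_k^{-1}\int_{\TT}\sum_{m=0}^{d_k-1}\widehat x(\xi-md_k^{-1})\overline{\widehat{h_k}(\xi-md_k^{-1})}\,\widehat{h_k}(\xi)\,\overline{\widehat y(\xi)}\,d\xi$. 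Since this holds for all $\widehat y\in L^2(\TT)$, the stated pointwise (a.e.) formula for $\bd S_{\widehat\Phi}\widehat x$ follows, and a change of the summation index $n\mapsto -n$ together with periodicity of $\widehat{h_k}$ and $\widehat x$ recasts it in the form written in \eqref{eq:walnut}.

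The main obstacle is justifying the interchange of summation over $n\in\ZZ$ with integration over $\TT$, and the convergence of the resulting series — the naive Poisson-summation step is only formal. This is where the hypothesis that each $\widehat{h_k}\in L^\infty(\TT)$ and that $K$ is finite does the work: boundedness of the $\widehat{h_k}$ guarantees $\widehat x\,\overline{\widehat{h_k}}\in L^2(\TT)\subset L^1(\TT)$, so its Fourier coefficients are square-summable and the folded sum $\sum_{m=0}^{d_k-1}\widehat x(\xi-md_k^{-1})\overline{\widehat{h_k}(\xi-md_k^{-1})}$ defines an $L^1$ (indeed $L^2$) function on $\TT$; the finite outer sum over $k$ then causes no trouble. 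I would phrase the rigorous version using Parseval on $\ell^2(\ZZ)$ / $L^2(\TT)$ applied to the $d_k$-periodization operator, which is the standard device (as in \cite{DBLP:journals/ijwmip/DorflerM14} for the continuous case), and note that absolute convergence in the relevant $L^1/L^2$ sense legitimizes every exchange. With that, the computation is entirely routine.
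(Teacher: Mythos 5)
Your proposal is correct and follows essentially the same route as the paper's proof: write out $\bd S_{\widehat{\Phi}}$ from the definition, recognize the inner sum over $n$ as the Fourier coefficients of $\widehat{x}\,\overline{\widehat{h_k}}$ sampled at multiples of $d_k$, and collapse it by Poisson summation/periodization. Your additional weak-form argument (testing against $\widehat{y}$ and using Parseval for the $d_k^{-1}$-periodization, with $\widehat{h_k}\in L^\infty(\TT)$ and finite $K$ justifying the exchanges) is simply the rigorous version of the Poisson-summation step the paper invokes by citation, so it does not constitute a different approach.
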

\begin{proof}
  By the definition of the frame operator, see Eq. \eqref{eq:frameop},  we have
  $$\bd S_{\widehat{\Phi}} \widehat{x}(\xi) = \sum \limits_{k,n} \left< \widehat{x} , \widehat{h_k} e^{- 2 \pi i n d_k \xi} \right> \widehat{h_k}(\xi) e^{-2 \pi i n d_k \xi}.$$
  Note that 
  $$\sum_{n\in\ZZ} \langle \widehat{x},e^{-2\pi i\xi n d_k} \widehat{h_k}\rangle e^{-2\pi i\xi n d_k} = \sum_{n\in\ZZ} \mathcal{F}_{DT}^{-1}(\widehat{x}\overline{\widehat{h_k}})[nd_k] e^{-2\pi i\xi nd_k}. $$
  to get the result by applying Poisson's summation  formula, see e.g. \cite{gr01}.
\end{proof}

The sums in \eqref{eq:walnut} can be reordered to obtain
\[
  \sum_{n=0}^{D-1} \widehat{x}(\xi-nD^{-1})\sum_{k\in K_n} d_k^{-1}\widehat{h_k}(\xi)\overline{\widehat{h_k}(\xi-nD^{-1})}, 
\]
where $K_n = \{k \in \{0,\ldots,K\}~:~ nD^{-1}= jd_k^{-1} \text{ for some } j\in\NN \}$. Inserting $\widehat{h_k}(\xi) = \overline{H_k(e^{2\pi i\xi})}$ and comparing the definition of $\mathcal H_n$ in \eqref{eq:poly3}, we can see that 
\[ 
 \sum_{k\in K_n} \widehat{h_k}(\xi)\overline{\widehat{h_k}(\xi-nD^{-1})}
 = \sum_{k\in K_n} \overline{H_k(e^{2\pi i\xi})}H_k(e^{2\pi i(\xi-n/D^{-1})}) = \mathcal{H}_{n}(\xi)
\]
for almost every $\xi\in\TT$ and all $n=0,\ldots,D-1$. Hence, we recover the representation of the frame operator as per \eqref{eq:poly2}, as expected. What makes Proposition \ref{pro:walnut} so interesting, is that it facilitates the derivation of some important sufficient frame conditions. The first is a generalization of the theory of painless non-orthogonal expansions by Daubechies et al.~\cite{dagrme86}, see also~\cite{badohojave11} for a direct proof.

\begin{Cor}
  Let $\widehat{\Phi} = \left(\bd E_{-nd_k}\widehat{h_k}\right)_{k\in\{0,\ldots,K\},n\in\ZZ}$, with $\widehat{h_k}\in L^2(\TT)$ and $d_k\in\NN$. Assume for all $0 \leq k \leq K$, there is $I_k \subseteq \mathbb{T}$ with $|I_k|\leq d_k^{-1}$ and $\widehat{h_k}(\xi) = 0$ for almost every $\xi\in \mathbb{T}\setminus I_k$. Then $\widehat{\Phi}$ is a frame if and only if there are $A,B$ such that
  \begin{equation}\label{eq:maindiag}
    0 < A \leq \sum_{k=0}^K d_k^{-1}|\widehat{h_k}|^2 = \mathcal{H}_0 \leq B < \infty, \text{ a.e. }
  \end{equation}
  Moreover, a dual frame for $\widehat{\Phi}$ is given by $\widehat{\Psi} = \left(\bd E_{-nd_k}\widehat{g_k}\right)_{k\in\{0,\ldots,K\},n\in\ZZ}$, where
  \begin{equation}\label{eq:candualPL}
    \widehat{g_k}(\xi) = \frac{\widehat{h_k}(\xi)}{\mathcal{H}_0(\xi)} \text{ a.e.}
  \end{equation}
\end{Cor}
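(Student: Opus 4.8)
The plan is to carry everything over to the Fourier side and reduce the frame operator to a multiplication operator. Recall from the discussion preceding the corollary that $\Phi$ is a frame for $\ell^2(\ZZ)$ if and only if $\widehat\Phi=(\bd E_{-nd_k}\widehat{h_k})_{k,n}$ is a frame for $L^2(\TT)$, with the same bounds, and that it therefore suffices to study $\bd S_{\widehat\Phi}$. I would then invoke Proposition \ref{pro:walnut}: writing the Walnut representation \eqref{eq:walnut} in its reordered form, $\bd S_{\widehat\Phi}$ is governed by the functions $\mathcal{H}_n$ of \eqref{eq:poly3}, and for $n=1,\ldots,D-1$ every summand occurring in $\mathcal{H}_n$ has the shape $\widehat{h_k}(\xi)\,\overline{\widehat{h_k}(\xi-j d_k^{-1})}$ with $1\le j\le d_k-1$. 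The support hypothesis is precisely what makes such a product vanish almost everywhere — provided each set $I_k$ meets each of its translates $I_k+jd_k^{-1}$, $1\le j\le d_k-1$, only in a null set; this is automatic when $I_k$ is an interval of length $\le d_k^{-1}$, which is the classical painless situation that the bound $|I_k|\le d_k^{-1}$ is meant to evoke. Granting this, $\mathcal{H}_n=0$ a.e.\ for $n\neq 0$, so by \eqref{eq:poly2} the operator $\bd S_{\widehat\Phi}$ is multiplication by $\mathcal{H}_0=\sum_k d_k^{-1}|\widehat{h_k}|^2$.

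With $\bd S_{\widehat\Phi}$ identified as the Fourier multiplier by $\mathcal{H}_0$, the equivalence follows by a one-line argument: for every $\widehat x\in L^2(\TT)$ the identity $\langle \bd S_{\widehat\Phi}\widehat x,\widehat x\rangle=\sum_{k,n}|\langle\widehat x,\bd E_{-nd_k}\widehat{h_k}\rangle|^2$ (cf.\ the proof of Theorem \ref{froper}(c)) equals $\int_\TT \mathcal{H}_0\,|\widehat x|^2$, so comparing with \eqref{frdef} and \eqref{sineq2} shows that $\widehat\Phi$ is a frame with bounds $A,B$ exactly when $A\le\mathcal{H}_0\le B$ a.e., i.e.\ \eqref{eq:maindiag}. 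In the ``if'' direction $B<\infty$ makes each $\widehat{h_k}$ essentially bounded, so Proposition \ref{pro:walnut} genuinely applies and the series converge; in the ``only if'' direction the upper frame inequality alone already forces $\mathcal{H}_0\le B<\infty$ a.e.\ (the bounded-input/bounded-output equivalence noted before the corollary), after which the lower bound $A\,{\rm Id}\le \bd S_{\widehat\Phi}$ yields $A\le\mathcal{H}_0$ a.e. Transferring back through the unitary $\mathcal{F}_{DT}$ gives the statement for $\Phi$.

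For the dual frame I would use that $A\le\mathcal{H}_0$ a.e.\ makes $\bd S_{\widehat\Phi}^{-1}$ again a multiplication operator, by $1/\mathcal{H}_0\in L^\infty(\TT)$; since two Fourier multipliers commute, $\bd S_{\widehat\Phi}^{-1}(\bd E_{-nd_k}\widehat{h_k})=\bd E_{-nd_k}(\widehat{h_k}/\mathcal{H}_0)=\bd E_{-nd_k}\widehat{g_k}$ with $\widehat{g_k}$ as in \eqref{eq:candualPL} (well defined, and still supported in $I_k$). Hence $\widehat\Psi=(\bd S_{\widehat\Phi}^{-1}\bd E_{-nd_k}\widehat{h_k})_{k,n}$ is exactly the canonical dual of $\widehat\Phi$, which is a dual frame by Theorem \ref{frexp}; pulling back through $\mathcal{F}_{DT}$ finishes. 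I expect the main obstacle to be the very first step — verifying rigorously that the off-diagonal alias components $\mathcal{H}_n$, $n\neq 0$, vanish, since this is where the precise geometry of the supports $I_k$ enters (and why one really wants intervals, or at least sets folding injectively under $\xi\mapsto\xi\bmod d_k^{-1}$) — together with keeping the integrability and convergence hypotheses of Proposition \ref{pro:walnut} in force; everything downstream is a routine computation with a multiplication operator.
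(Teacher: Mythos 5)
Your argument is essentially the paper's own proof: reduce $\bd S_{\widehat{\Phi}}$ via the Walnut representation of Proposition \ref{pro:walnut} to the multiplication operator by $\mathcal{H}_0$, read the frame bounds off the resulting quadratic form, and obtain the dual by applying $\bd S_{\widehat{\Phi}}^{-1}$, i.e.\ multiplication by $1/\mathcal{H}_0$, which commutes with the modulations $\bd E_{-nd_k}$, exactly as in \eqref{eq:candualPL}. Your side remark that the vanishing of the alias terms requires $I_k$ to meet its translates by $jd_k^{-1}$ only in null sets (automatic when $I_k$ is an interval of length at most $d_k^{-1}$) is a correct reading of a hypothesis the paper leaves implicit, and does not change the fact that your route coincides with the paper's.
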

\begin{proof}
  First, note that the existence of the upper bound $B$ is equivalent to $\widehat{h_k}\in L^\infty(\TT)$, for all $k = 0,\ldots,K$. It is easy to see that under the assumptions given, Eq. \eqref{eq:walnut} equals
  \[
   \bd S_{\widehat{\Phi}}\widehat{x}(\xi) = \widehat{x}(\xi)\sum_{k=0}^K d_k^{-1}|\widehat{h_k}|^2(\xi) = \widehat{x}(\xi) \cdot \mathcal{H}_0(\xi).
  \]
  Hence, $\bd S_{\widehat{\Phi}}$ is invertible if and only if $\mathcal{H}_0$ is bounded above and below, proving the first part. Moreover, $\bd S_{\widehat{\Phi}}^{-1}$ is given by pointwise multiplication with $1/\mathcal{H}_0$ and therefore, the elements of the canonical dual frame for $\widehat{\Phi}$, defined in Eq. \eqref{eq:candual}, are given by
  \[
   \bd S_{\widehat{\Phi}}^{-1}\bd E_{-nd_k}\widehat{h_k} = \frac{\bd E_{-nd_k}\widehat{h_k}}{\mathcal{H}_0} = \bd E_{-nd_k}\frac{\widehat{h_k}}{\mathcal{H}_0} = \widehat{g_k}.
  \]
\end{proof}

In other words, recalling $\widehat{h_k}(\xi) = \overline{H_k(e^{2\pi i\xi})}$, if the filters $h_k$ are strictly band-limited, the downsampling factors $d_k$ are small and $0<A\leq \mathcal{H}_0\leq B< 0$ almost everywhere, then we obtain a perfect reconstruction system with synthesis filters $g_k$ defined by
\[
 G_k(e^{2\pi i\xi}) = \frac{\overline{H_k(e^{2\pi i\xi})}}{\mathcal{H}_0(\xi)}.
\]

The second, more general and more interesting condition can be likened to a diagonal dominance result, i.e. if the main term $\mathcal{H}_0$ is \emph{stronger} than the sum of the magnitude of alias components $\mathcal{H}_n$, $n= 1,\ldots,D-1$, then the FB analysis provided by the filters $h_k$ and downsampling factors $d_k$ is invertible. 

\begin{Pro}\label{pro:diagdom}
  Let $\widehat{\Phi} = \left(\bd E_{-nd_k}\widehat{h_k}\right)_{k\in\{0,\ldots,K\},n\in\ZZ}$, with $\widehat{h_k}\in L^2(\TT)$ and $d_k\in\NN$. If there are $0< A \leq B < \infty$ with 
  \begin{equation}\label{eq:diagdom1}
    A \leq \sum_{k=0}^K d_k^{-1} |\widehat{h_k}|^2(\xi) \pm
    \sum_{k=0}^K\sum_{n=1}^{d_k-1} d_k^{-1}\left|\widehat{h_k}(\xi)\widehat{h_k}(\xi-nd_k^{-1})\right|
    \leq B,
  \end{equation}
  for almost every $\xi\in\TT$, then $\widehat{\Phi}$ forms a frame with frame bounds $A,B$.
\end{Pro}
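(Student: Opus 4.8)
The plan is to read the hypothesis \eqref{eq:diagdom1} as a quantitative ``the diagonal part dominates the off-diagonal part'' statement for the Walnut representation \eqref{eq:walnut} of the frame operator. First I would note that \eqref{eq:diagdom1} already forces $\mathcal{H}_0 = \sum_{k=0}^K d_k^{-1}|\widehat{h_k}|^2 \le B$ a.e.\ (take the $+$ sign and discard the non-negative off-diagonal sum), so each $\widehat{h_k}\in L^\infty(\TT)$. Hence $\widehat{\Phi}$ is a Bessel sequence: for a fixed $k$ the numbers $\langle \widehat x,\bd E_{-nd_k}\widehat{h_k}\rangle$ are the Fourier coefficients of $\widehat x\,\overline{\widehat{h_k}}\in L^2(\TT)$ sampled along $d_k\ZZ$, so $\sum_{n\in\ZZ}|\langle \widehat x,\bd E_{-nd_k}\widehat{h_k}\rangle|^2 \le \|\widehat{h_k}\|_\infty^2\|\widehat x\|_2^2$, and there are only finitely many channels. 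Thus $\bd S_{\widehat{\Phi}}$ is a well-defined, bounded, positive, self-adjoint operator, Proposition \ref{pro:walnut} is applicable, and — since the matrix Fourier transform is unitary and $\langle \bd S_{\widehat{\Phi}}\widehat x,\widehat x\rangle = \sum_{k,n}|\langle \widehat x,\bd E_{-nd_k}\widehat{h_k}\rangle|^2$ — it suffices to prove $A\|\widehat x\|_2^2 \le \langle \bd S_{\widehat{\Phi}}\widehat x,\widehat x\rangle \le B\|\widehat x\|_2^2$ for all $\widehat x\in L^2(\TT)$.

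Next I would split the Walnut sum into its diagonal term ($n=0$) and remainder, $\bd S_{\widehat{\Phi}} = \bd S_0 + \bd R$. The diagonal term is multiplication by $\mathcal{H}_0$, so $\langle \bd S_0\widehat x,\widehat x\rangle = \int_\TT \mathcal{H}_0(\xi)|\widehat x(\xi)|^2\,d\xi$. For the remainder I would take absolute values inside the sum and integral and apply $|\widehat x(\xi - nd_k^{-1})|\,|\widehat x(\xi)| \le \tfrac12\bigl(|\widehat x(\xi - nd_k^{-1})|^2 + |\widehat x(\xi)|^2\bigr)$ to get
\[
  |\langle \bd R\widehat x,\widehat x\rangle| \le \sum_{k=0}^K\sum_{n=1}^{d_k-1}\frac{d_k^{-1}}{2}\int_\TT |\widehat{h_k}(\xi)|\,|\widehat{h_k}(\xi-nd_k^{-1})|\bigl(|\widehat x(\xi)|^2 + |\widehat x(\xi-nd_k^{-1})|^2\bigr)\,d\xi .
\]
In the half carrying $|\widehat x(\xi-nd_k^{-1})|^2$ I would substitute $\xi\mapsto\xi+nd_k^{-1}$, and then use that for fixed $k$ the maps $n\mapsto nd_k^{-1}$ and $n\mapsto -nd_k^{-1}$ give the same subset $\{d_k^{-1},\ldots,(d_k-1)d_k^{-1}\}$ of $\TT=\RR/\ZZ$; this reflection-symmetry makes the two halves equal and yields
\[
  |\langle \bd R\widehat x,\widehat x\rangle| \le \int_\TT \Gamma(\xi)\,|\widehat x(\xi)|^2\,d\xi, \qquad \Gamma(\xi) := \sum_{k=0}^K\sum_{n=1}^{d_k-1} d_k^{-1}\bigl|\widehat{h_k}(\xi)\widehat{h_k}(\xi-nd_k^{-1})\bigr| .
\]
(A double Cauchy--Schwarz / Schur-test argument would do the same; the AM--GM version above is the shortest.)

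Finally I would combine the two estimates: $\langle \bd S_{\widehat{\Phi}}\widehat x,\widehat x\rangle = \langle \bd S_0\widehat x,\widehat x\rangle + \langle \bd R\widehat x,\widehat x\rangle$ lies between $\int_\TT(\mathcal{H}_0 - \Gamma)|\widehat x|^2$ and $\int_\TT(\mathcal{H}_0 + \Gamma)|\widehat x|^2$. Since \eqref{eq:diagdom1} says precisely $A \le \mathcal{H}_0 \pm \Gamma \le B$ a.e., this gives $A\|\widehat x\|_2^2 \le \langle \bd S_{\widehat{\Phi}}\widehat x,\widehat x\rangle \le B\|\widehat x\|_2^2$, which is the frame inequality for $\widehat{\Phi}$ and hence, by the unitary equivalence recalled above, for $\Phi$ with frame bounds $A,B$. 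I expect the only genuinely delicate point to be the off-diagonal estimate, specifically the substitution together with the reflection-invariance of the shift set $\{nd_k^{-1}\}_{n=1}^{d_k-1}$ in $\TT$ that symmetrizes the two halves; the rest is bookkeeping around the Walnut formula and the unitarity of the Fourier transform already set up before Proposition \ref{pro:walnut}.
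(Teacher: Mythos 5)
Your argument is correct. Note that the paper itself does not prove Proposition~\ref{pro:diagdom}: it omits the proof, pointing to the analogous conditions for Gabor and wavelet frames in \cite{da92} and to \cite[Theorem 3.4]{chgo15-1}. What you have written out is precisely that classical diagonal-dominance argument, transported to the filter bank setting via the Walnut representation of Proposition~\ref{pro:walnut}: split $\bd S_{\widehat\Phi}$ into the multiplication operator by $\mathcal H_0$ plus the alias part, and absorb the alias part by Cauchy--Schwarz/AM--GM. Your preliminary step — deducing $\widehat{h_k}\in L^\infty(\TT)$ from the $+$ sign in \eqref{eq:diagdom1} and then the Bessel property by viewing the coefficients as subsampled Fourier coefficients of $\widehat{x}\,\overline{\widehat{h_k}}$ — is exactly what is needed to legitimize both Proposition~\ref{pro:walnut} and the identity $\langle \bd S_{\widehat\Phi}\widehat x,\widehat x\rangle=\sum_{k,n}|\langle\widehat x,\bd E_{-nd_k}\widehat{h_k}\rangle|^2$, and you supply it. The point you flag as delicate, the substitution $\xi\mapsto\xi+nd_k^{-1}$ combined with the reindexing $n\mapsto d_k-n$ and $1$-periodicity on $\TT$, is handled correctly and is indeed the only step requiring care; since the statement concerns $\widehat\Phi$ itself, the closing appeal to unitarity of the Fourier transform is not even necessary.
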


Note that \eqref{eq:diagdom1} impliest $\widehat{h_k}\in \bd L^\infty(\RR)$ for all $k\in\{0,\ldots,K\}$. Therefore, Proposition \ref{pro:walnut} applies for any FB that satisfies \eqref{eq:diagdom1}.
The proof of Proposition \ref{pro:diagdom} is somewhat lengthy and we omit it here. It is very similar to the  proof of the analogous conditions for Gabor and wavelet frames that can be found in \cite{da92} for the continuous case. It can also be seen as a corollary of \cite[Theorem 3.4]{chgo15-1}, covering a more general setting. A few things should be noted regarding Proposition \ref{pro:diagdom}. 

(a) As mentioned before, this is a sort of diagonal dominance result. While the sum $\sum_{k=0}^K d_k^{-1} |\widehat{h_k}|^2(\xi)$ corresponds to $\mathcal{H}_0$, we have 
\[
 \sum_{k=0}^K\sum_{n=1}^{d_k-1} d_k^{-1}\left|\widehat{h_k}(\xi)\widehat{h_k}(\xi-nd_k^{-1})\right| = \sum_{n=1}^{D-1} |\mathcal{H}_n|(\xi).
\]
Since, in fact, the finite number of channels guarantees the existence of $B$ if and only if $\widehat{h_k}\in L^\infty(\TT)$, for all $k = 0,\ldots,K$, the result implies that the FB analysis provided by $h_k$'s and $d_k$'s is invertible, whenever
\[
 \mathcal{H}_0 - \sum_{n=1}^{D-1} |\mathcal{H}_n| \geq A > 0, \text{ almost everywhere.}
\]

(b) No explicit dual frame is provided by Proposition \ref{pro:diagdom}. So, while we can determine invertibility quite easily, provided the Fourier transforms of the filters can be computed, the actual inversion process is still up in the air. In fact, it is unclear whether there are synthesis filters $g_k$ such that the $h_k$'s and $g_k$'s form a perfect reconstruction system with down-/upsampling factors $d_k$. We consider here two possible means of recovering the original signal $X$ from the sub-band components $Y_k$. 

First, the equivalent unform FB, comprised of the filters $h^{(l)}_{k}$, for $l\in\{0,\ldots,q_k-1\}$ and all $k\in \{0,\cdots,K\}$, with downsampling factor $D=\lcm(d_k~:~k\in\{0,\ldots,K\})$ can be constructed. Since the non-uniform FB forms a frame, so does its uniform equivalent and hence the existence of a dual FB $g^{(l)}_{k}$, for $l\in\{0,\ldots,q_k-1\}$ and all $k\in \{0,\cdots,K\}$, is guaranteed. Note that the $g^{(l)}_{k}$ are not necessarily delayed versions of $g^{(0)}_{k}$, as it is the case for $h^{(l)}_{k}$. Then, the structure of the alias domain representation in \eqref{eq:polyphaserep} with $g_k = \overline{h_k[-\cdot]}$ can be exploited~\cite{Bolcskei:1998a} to obtain perfect reconstruction synthesis. In the finite, discrete setting, i.e. when considering signals in $\RR^L$ ($\CC^L$), a dual FB can be computed explicitly and efficiently by a generalization of the methods presented by Strohmer~\cite{st98-1}, see also~\cite{ltfatnote038}. In practice, 
both the storage and time efficiency of computing the dual uniform FB rely crucially on $D=\lcm(d_k~:~k\
in\{0,\ldots,K\})$ 
being small, i.e. $\sum_k q_k$ not being much larger than $K+1$.

If that is not the case, the frame property of $\widehat{\Phi} = \left(\bd E_{-nd_k}\widehat{h_k}\right)_{k\in\{0,\ldots,K\},n\in\ZZ}$ guarantees the convergence of the Neumann series 
\begin{equation}
  \bd S_{\widehat{\Phi}}^{-1} = \frac{2}{A_0+B_0}\sum_{l=0}^\infty \left(\bd I - \frac{2}{A_0+B_0} \bd S_{\widehat{\Phi}}\right)^l,
\end{equation}
where $0<A_0\leq B_0<\infty$ are the optimal frame bounds of $\widehat{\Phi}$.
Instead of computing the elements of any dual frame explicitly, we can apply the inverse frame operator to the FB output
\begin{equation}\label{eq:CG}
	\tilde{X}(z) = \bd S_{\widehat{\Phi}}X(z) = \sum_{k=0}^{K} Y_k(z^{d_{k}}) H_k(z),
\end{equation}
obtaining $\bd S_{\widehat{\Phi}}^{-1}\tilde{X}=X$. This can be implemented with the \emph{frame algorithm}~\cite{dusc52,Grochenig:1993a}. However, any frame operator is positive definite and self-adjoint, allowing  for extremely efficient implementation via the \emph{conjugate gradients (CG)}~\cite{Grochenig:1993a,Trefethen:1997a} algorithm. In addition to a significant boost in efficiency compared to the frame algorithm, the conjugate gradients algorithm does not require an estimate of the optimal frame bounds $A_0,B_0$ and convergence speed depends solely on the condition number of $\bd S_{\widehat{\Phi}}$. It provides guaranteed, exact convergence in $L$ steps for signals in $\CC^L$, where every step essentially comprises one analysis and one synthesis step with the filters $h_k$ and $g_k = \overline{h_k[-\cdot]}$, respectively. If furthermore, $\mathcal{H}_0 \gg \sum_{n=1}^{D-1} |\mathcal{H}_n|$, then convergence speed can be further increased by preconditioning \cite{Balazs:2006a}, considering instead 
the operator defined by 
\[
 \widetilde{\bd S_{\widehat{\Phi}}}X(e^{2\pi i \xi}) = \mathcal{H}_0(\xi)^{-1}\bd S_{\widehat{\Phi}}X(e^{2\pi i \xi}).
\]
More specifically, the CG algorithm is employed to solve the system $\bd D_\Phi c = \bd S_\Phi x$ for $x$, given the coefficients $c$. 
Recall the analysis/synthesis operators $\bd C_\Phi,\bd D_\Phi$ (see Sec. \ref{froperators}), associated to
 a frame $\Phi$, which are equivalent to the analysis/synthesis stages of the FB. The preconditioned case can be implemented most efficiently, by precomputing an approximate dual FB, defined by $G_k(e^{2\pi i \xi}) = \mathcal{H}_0(\xi)^{-1}H_k(e^{2\pi i \xi})$ and solving instead
\[
 \bd D_\Psi c = \mathcal F^{-1} \mathcal{H}_0(\xi)^{-1}\bd S_{\widehat{\Phi}}\mathcal F x = \bd D_\Psi \bd C_{\Phi} x,\ \text{where } \Psi = \{\overline{g_k[nd_k-\cdot]}\}_{k,n},
\]
for $x$, given the coefficients $c$. Algorithm \ref{alg:nsgsyniter} shows a pseudo-code implementation of such a preconditioned CG scheme, available in the LTFAT Toolbox as the routine \texttt{ifilterbankiter}.

\begin{algorithm} \small
	\caption{Iterative synthesis: $\tilde{x} = \textbf{FBSYN}^{it}(c,(h_k,g_k,d_k)_k,\lambda)$}\label{alg:nsgsyniter}
	\begin{algorithmic}[1]
		\State Initialize $x_0 = 0$, $k = 0$
		\State $b \gets \bd D_\Psi c$
		\State $r_0\gets b$
		\State $h_0,p_0 \gets r_0$
		\Repeat
			\State $q_k = \bd D_\Psi (\bd C_\Phi p_0)$
			\State $\alpha_k\gets \frac{\langle r_k, h_k \rangle}{\langle p_k,q_k\rangle}$
			\State $x_{k+1}\gets x_k + \alpha_kp_k$
			\State $r_{k+1}\gets r_k + \alpha_kq_k$
			\State $h_{k+1}\gets r_{k+1}$
			\State $\beta_k\gets \frac{\langle r_{k+1}, h_{k+1} \rangle}{\langle r_k,h_k\rangle}$
			\State $p_{k+1}\gets h_{k+1} + \beta_kp_k$
			\State $k\gets k+1$
		\Until{$r_k \leq \lambda$}
		\State $\tilde{x}\gets x_k$
	\end{algorithmic}
\end{algorithm}

\section{\label{sec:appli}Frame Theory: Psychoacoustics-motivated Applications}

\subsection{A perfectly invertible, perceptually-motivated filter bank}\label{sec:audlet0}

The concept of auditory filters lends itself nicely to the implementation as a FB. 
As motivated in Sec.~\ref{Sec:intro0}, it can be expected that many audio signal processing applications greatly benefit from an invertible FB representation adapted to the auditory time-frequency resolution. Despite the auditory system showing significant nonlinear behavior, the results obtained through a linear representation are desirable for being much more predictable than when accounting for nonlinear effects. We call such a system \emph{perceptually-motivated FB}, to distinguish from \emph{auditory FBs} that attempt to mimic the nonlinearities in the auditory system. 
Note that, as mentioned in Section \ref{ssec:audfilters}, the first step in many auditory FBs is the computation of a perceptually-motivated FB, see e.g.~\cite{Irino:2006b}. The \emph{AUDlet FBs} we present here are a family of perceptually-motivated FBs that satisfy a perfect reconstruction property, offer flexible redundancy and enable efficient implementation. They 
were introduced in~\cite{bahoneso13,nebahopr15} and an implementation is available in the LTFAT Toolbox~\cite{basoto12}.

The AUDlet FB has a general non-uniform structure as
presented in Fig.~\ref{sfig:nonuniformFB} with analysis filters $H_{k}(z)$, synthesis filters $G_{k}(z)$, and downsampling and upsampling factors $d_k$.
Considering only real-valued signals allows us to deal with symmetric 
$\mathcal{F}_{DT}$s and  process only the positive-frequency range. Therefore let $K$ denote the number of filters in the frequency range $[f_{\mathrm{min}},f_{\mathrm{max}}]\cap [0,f_{s}/2[$, where $f_{\mathrm{min}} \geq 0 $ to $f_{\mathrm{max}} \leq f_{s}/2$ and $f_{s}/2$ is the Nyquist frequency, i.e. half the sampling frequency. If $f_{\mathrm{min}} > 0$, this range includes an additional filter at the zero frequency. Furthermore, another filter is always positioned at the Nyquist frequency to ensure that the full frequency range is covered. Thus, all FBs below feature $K+1$ filters in total and their redundancy is given by $R = d_{0}^{-
1} + 2 \sum_{k = 1}^{K-1} d_{k}^{-1} + d_{K}^{-1}$, since coefficients in the $1$st to $K-1$-th subbands are complex-valued.

\renewcommand{\arraystretch}{1.4}
\setlength{\tabcolsep}{.15cm}
\begin{table}[t]
\begin{center}
  \begin{tabular}{|c|c|c|}
    \hline
      Parameter & Role & Information \\
    \hline      
      $f_{\mathrm{min}}$ & minimum frequency in Hz & $f_{\mathrm{min}}\in [0,f_s/2[,\ f_{\mathrm{min}}<f_{\mathrm{max}}$\\
      $f_{\mathrm{max}}$ & maximum frequency in Hz & $f_{\mathrm{max}}\in ]0,f_s/2[,\ f_{\mathrm{max}}>f_{\mathrm{min}}$\\
      $f_k$ & center frequencies in Hz & $ F_\mathrm{AUD}^{-1}(F_\mathrm{AUD}(f_0)+k/V)$\\
      $K$ & (essential) number of channels & $K = V \left( F_\mathrm{AUD}(\xi_\mathrm{max}) - F_\mathrm{AUD}(f_\mathrm{min}) \right)+(1-\delta_{0,f_\mathrm{min}})$\\
      $V$ & channels per scale unit & $V = \left( F_\mathrm{AUD}(f_{k+1}) - F_\mathrm{AUD}(f_{k}) \right)^{-1}$, $k\in[1,K-2]$\\
      $w$ & frequency domain filter prototype & $w\in L^2(\TT)$ \\
      $\Gamma_k$ & dilation factors & $r_{bw}BW_{\mathrm{AUD}}(f_k)$, $r_{bw}>0$ (default $=1$)\\
      
      $H_k$ & filter transfer functions & $H_k(e^{2i \pi\xi}) = \Gamma_k^{-\frac{1}{2}} w \left( \frac{f_s\cdot\xi - f_k}{\Gamma_k} \right)$ \\
      $d_k$ & downsampling factors & $r_{d}BW_{\mathrm{AUD}}^{-1}(\xi_k)$, $r_{d}>0$ (default non-uniform = 1)\\
      $R$ & redundancy & $R = d_{0}^{-1} + 2 \sum_{k = 1}^{K-1} d_{k}^{-1} + d_{K}^{-1}$\\
    \hline   
  \end{tabular}
  \caption{Parameters of the perceptually-motivated AUDlet FB} 
  \end{center}\label{tab:1}
\end{table}

The AUDlet filters $H_k$'s, $k \in \{0,\ldots,K\}$ are constructed in the frequency domain by
\begin{equation}\label{eq:erbfilters}
	H_k(e^{2i \pi\xi}) = \Gamma_k^{-\frac{1}{2}} w \left( \frac{f_s\cdot\xi - f_k}{\Gamma_k} \right)
\end{equation}
where $w(\xi)$ is a prototype filter shape with bandwidth $1$ and center frequency $0$. Here, the shape factor $\Gamma_k$ controls the effective bandwidth of $H_k$ and $f_k$ determines its center frequency. The factor $\Gamma_k^{-1/2}$ ensures that all filters (i.e. for all $k$) have the same energy. To obtain filters equidistantly spaced on a perceptual frequency scale, the sets $\{f_k\}$ and $\{\Gamma_k\}$ are calculated using the corresponding $F_\mathrm{AUD}$ and $BW_{\mathrm{AUD}}$ formulas, see Tab. 1 
for more information on the AUDlet parameters and their relations. Since we emphasize inversion, the default analysis parameters are chosen such that the filters $H_k$ and downsampling factors $d_k$ form a frame. As an example, the AUDlet (a) and gammatone (b) analyses of a speech signal are represented in Fig.~\ref{fig:erbvsgfb_img} using AUD = ERB and $V$ = 6~filters per ERB. The filter prototype $w$ for the AUDlet was a Hann window. It can be seen that the two signal representations are very similar over the whole time-frequency plane. Since the gammatone filter is an acknowledged auditory filter model, this indicates that the time-frequency resolution of the AUDlet approximates well the auditory resolution.

\begin{figure*}[!t]
	\begin{center}
	 \includegraphics[width=0.49\columnwidth]{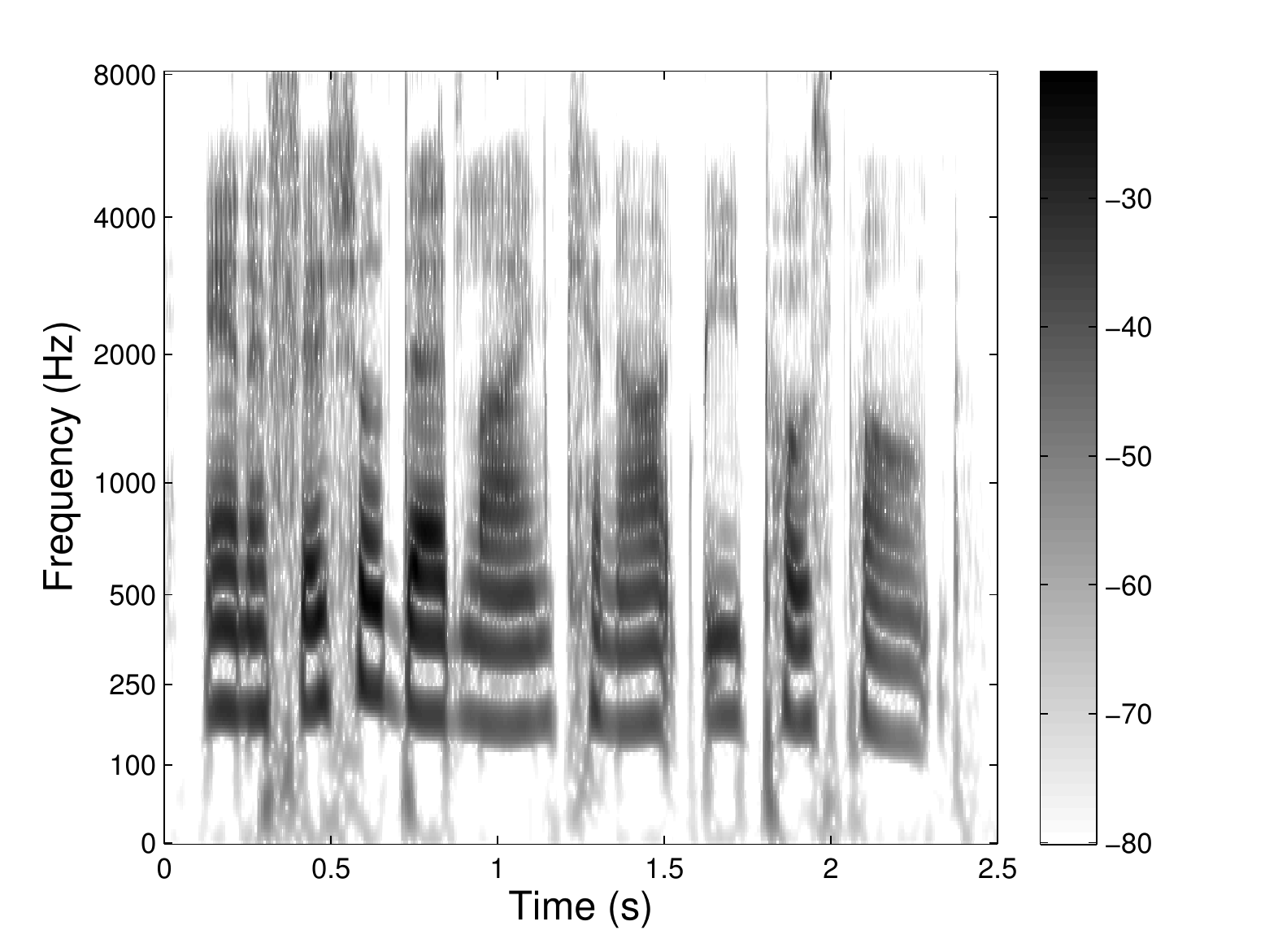}
	 \includegraphics[width=0.49\columnwidth]{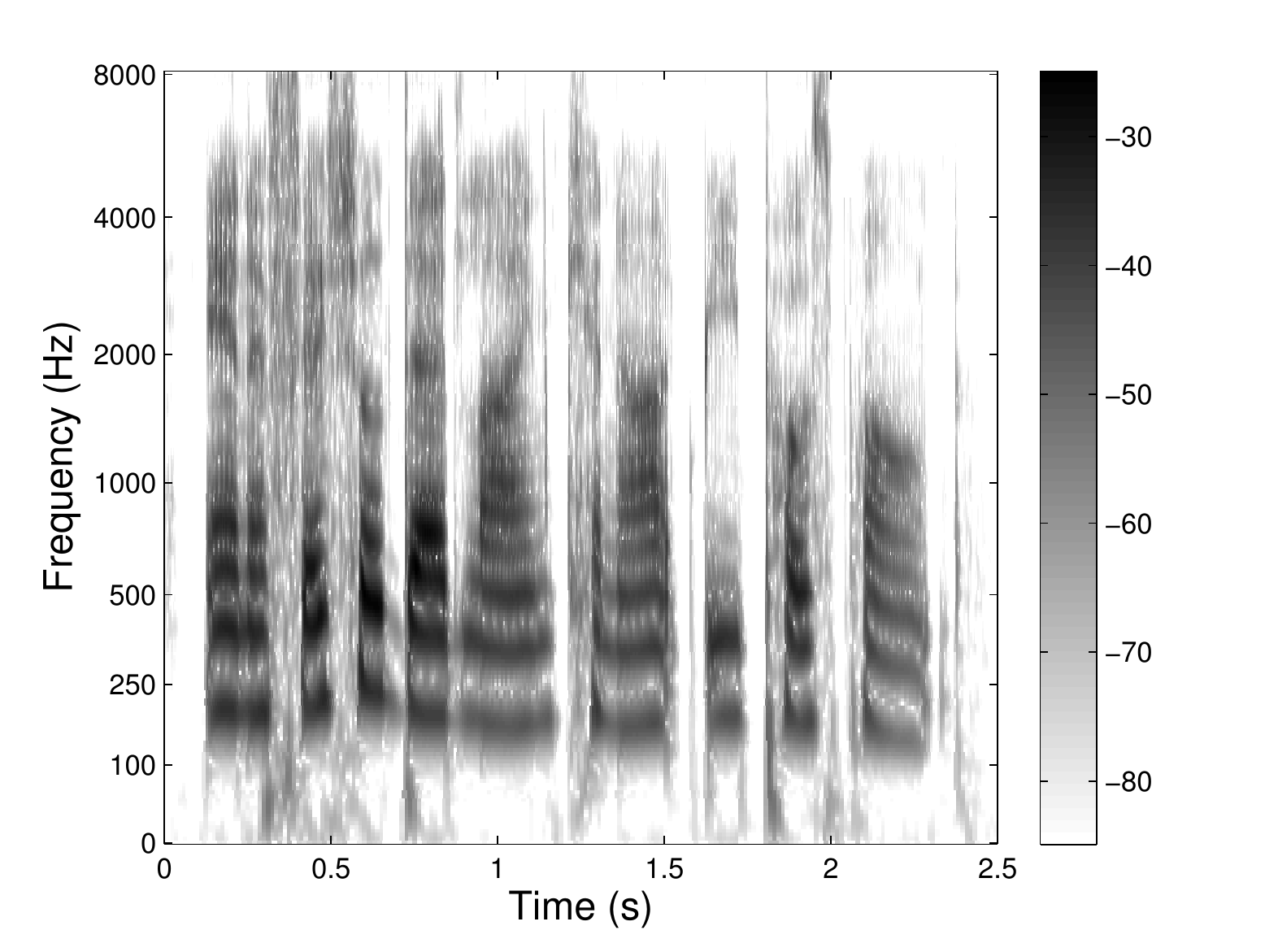}
	 \end{center}
	\caption{Analyses of a female speech signal taken from the TIMIT database \cite{Garofolo:1993a} by (a)~the AUDlet FB and~(b)~the gammatone FB using $V$ = 6~filters per ERB ($K$ = 201). It can be seen that
the two signal representations are very similar over the whole time-frequency plane.}
	\label{fig:erbvsgfb_img}  
\end{figure*}

\subsection{\label{sec:Irrel0}Perceptual Sparsity}

As discussed in Sec. \ref{ssec:masking} not all components of a sound perceived. This effect can be described by masking models and naturally leads to the following question: Given a time-frequency representation or any representation linked to audio, how can we apply that knowledge to only include audible coefficients in the synthesis? In an attempt to answer this question, efforts were made to combine frame theory and masking models into a concept called the \emph{Irrelevance Filter}. This concept is somehow linked to the currently very prominent sparsity and compressed sensing approach, see e.g. \cite{fo10-1,el10} for an overview. To reduce the amount of non-zero coefficients, the irrelevance filter uses a perceptual measure of sparsity, hence \emph{perceptual sparsity}. Perceptual and compressed sparsity can certainly be combined, see e.g. \cite{gineba14}. Similar to the methods used in compressed sensing, a redundant representation offers an advantage for perceptual sparsity, as well, as the same signal 
can 
be 
reconstructed from several sets of coefficients.

The concept of the irrelevance filter was first introduced in \cite{eck} and fully developed in \cite{Balazs:2010a}. It consists in removing the inaudible atoms in a Gabor transform while causing no audible difference to the original sound after re-synthesis. Precisely, an adaptive threshold function is calculated for each spectrum (i.e. at each time slice) of the Gabor transform using a simple model of spectral masking (see Sec. \ref{sec:SimMask0}), resulting in the so-called irrelevance threshold. Then, the amplitudes of all atoms falling below the irrelevance threshold are set to zero and the inverse transform is applied to the set of modified Gabor coefficients. This corresponds to an adaptive {\em Gabor frame multiplier} with coefficients in $\{0,1\}$. The application of the irrelevance filter to a musical signal sampled at 16~kHz is shown in Fig.~\ref{fig:irrelfilter}. A Matlab implementation of the algorithm proposed in \cite{Balazs:2010a} was used. All Gabor transform and filter 
parameters were identical to those mentioned in \cite{
Balazs:2010a}. Noteworthy, the offset parameter $o$ was set to -2.59~dB. In this particular example, about 48\% components were removed without causing any audible difference to the original sound after re-synthesis (as judged by informal listening by the authors). A formal listening test performed in \cite{Balazs:2010a} with 36 normal-hearing listeners and various musical and speech signals indicated that, on average, 36\% coefficients can be removed without causing any audible artifact in the re-synthesis.

\begin{figure}[!t]
	  \begin{center}
		 \subfloat[Original Gabor transform]{
		 	\includegraphics[width=0.46\textwidth]{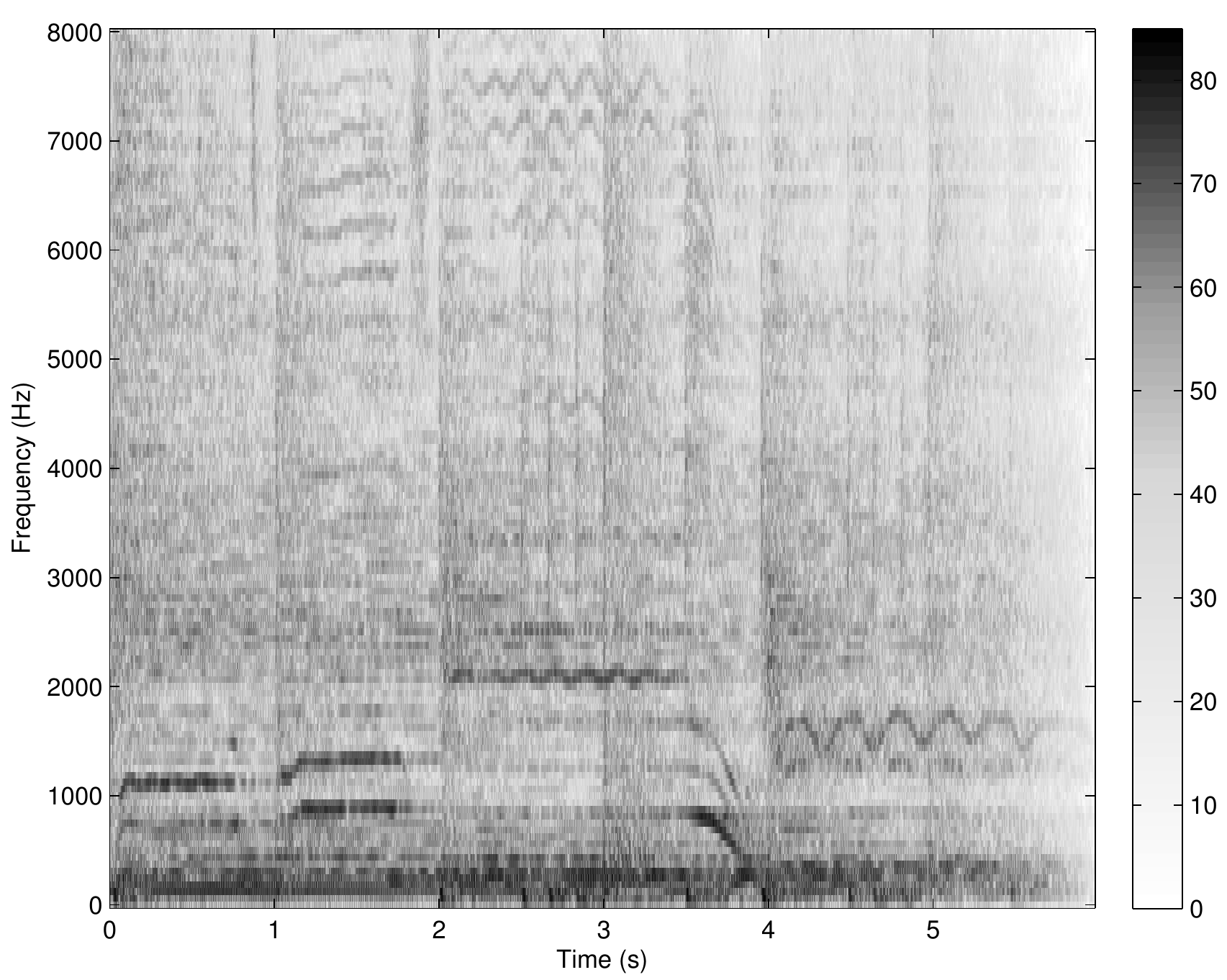}}
		 \subfloat[Binary mask]{
		 	\includegraphics[width=0.46\textwidth]{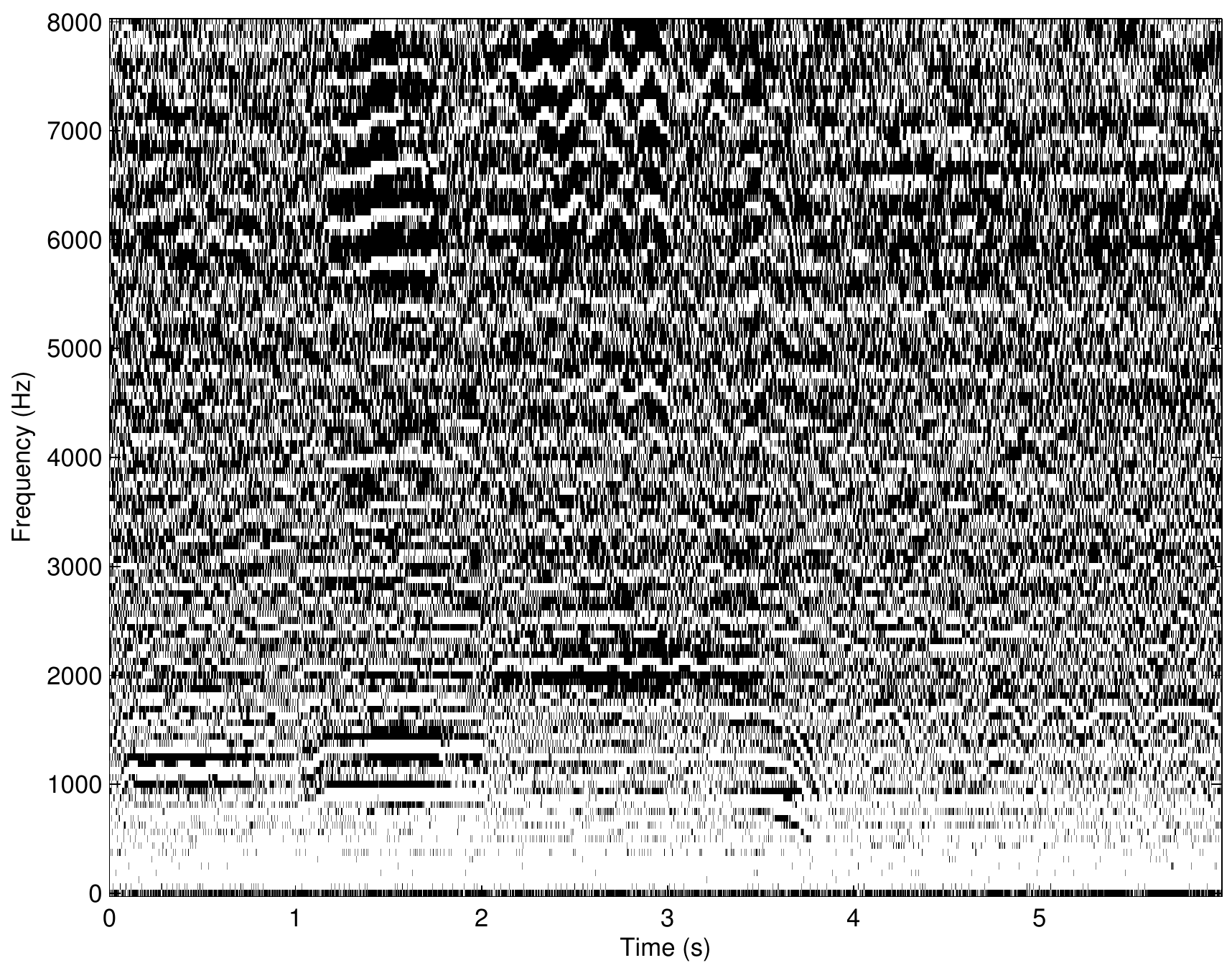}}\\[-8pt]
		 \subfloat[Masked Gabor transform]{
		 	\includegraphics[width=0.46\textwidth]{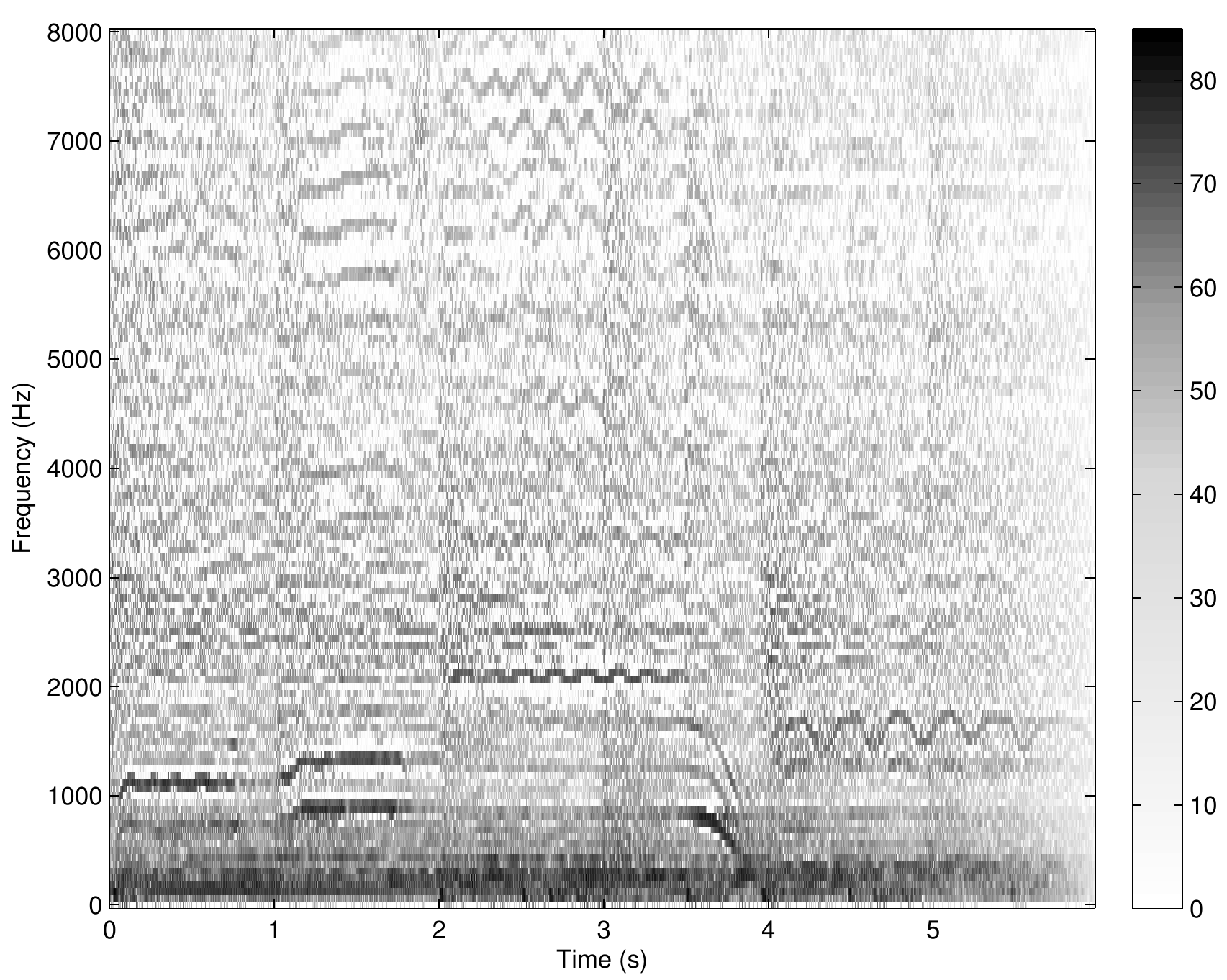}}
		 \subfloat[Irrelevance threshold]{
		 	\includegraphics[width=0.46\textwidth]{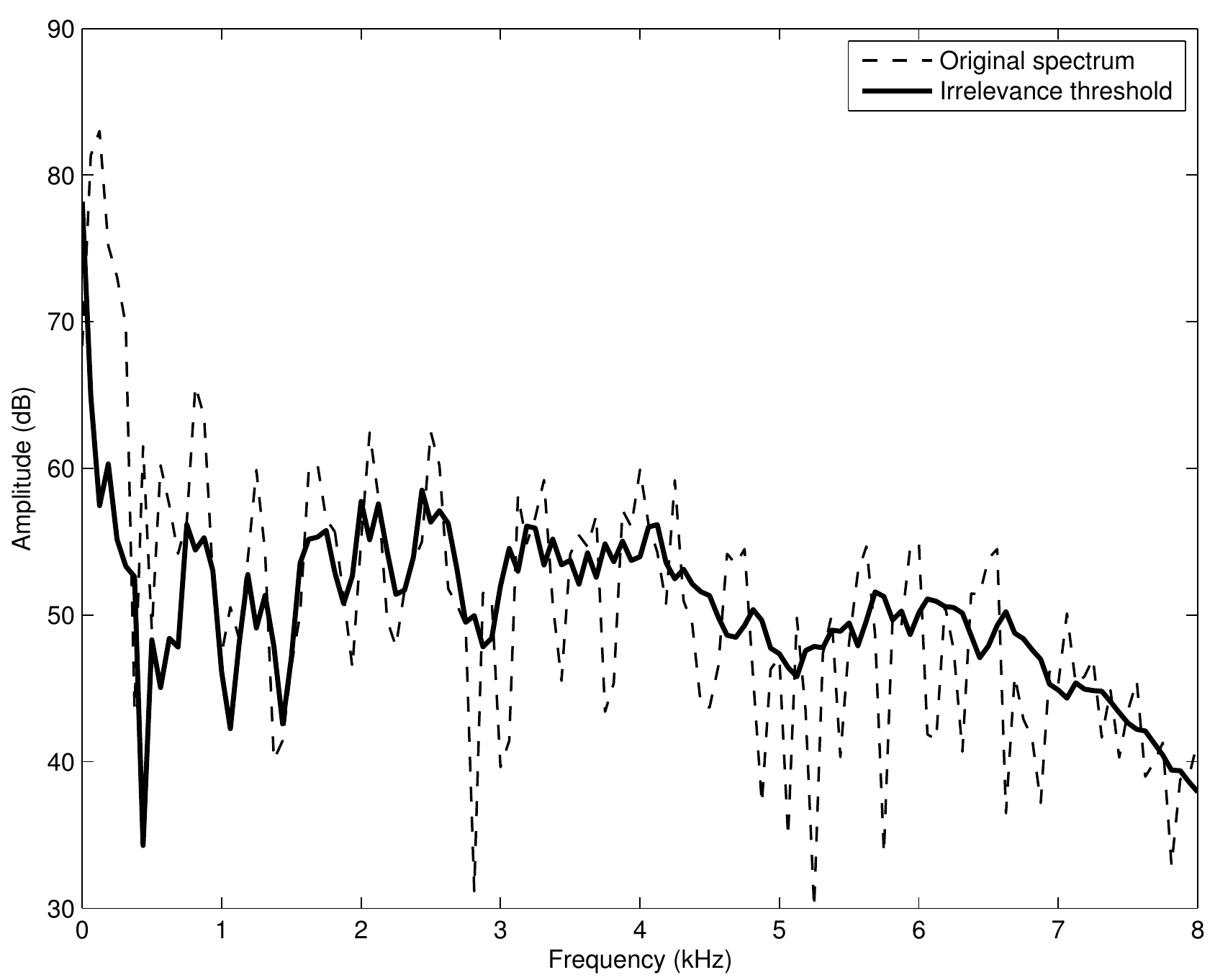}}
	  \end{center}
	  \caption{Example application of the irrelevance filter as implemented in \cite{Balazs:2010a} to a music signal (excerpt from the song ``Heart of Steel'' from Manowar). (a)~Squared magnitude of the Gabor transform (in~dB). (b)~Binary mask estimated from the irrelevance threshold. White = 1, black = 0. (c)~Squared magnitude (in~dB) of the masked Gabor transform, i.e. the result of the point-wise multiplication between the original transform and the binary mask. (d)~Amplitudes (in~dB) of the irrelevance threshold (bold straight line) and original spectrum (dashed line) at a given time slice.}
	  \label{fig:irrelfilter}  
\end{figure}

The irrelevance filter as depicted here has shown very promising results but the approach could be improved. Specifically, the main limitations of the algorithm are the fixed resolution in the Gabor transform and the use of a simple spectral masking model to predict masking in the time-frequency domain. Combining an invertible perceptually-motivated transform like the AUDlet FB (Sec.~\ref{sec:audlet0}) with a model of time-frequency masking (Sec.~\ref{sec:tfmask0}) is expected to improve performance of the filter. This is work in progress. Potential applications of perceptual sparsity include, for instance:
\begin{enumerate} 
\item Sound / Data Compression: 
 For applications where perception is relevant, there is no need to encode perceptually irrelevant information. Data that can not be heard should be simply omitted. A similar algorithm is for example used in the MP3 codec. If ``over-masking'' is used, i.e. the threshold is moved beyond the level of relevance, a higher compression rate can be reached \cite{Painter:2000a}. 
\item Sound Design: For the visualization of sounds the perceptually irrelevant part can be disregarded. This is for example used for car sound design \cite{Bezat:2007b}.  
\end{enumerate}

\section{Conclusion}
In this chapter, we have discussed some important concepts from hearing research and perceptual audio signal processing, such as auditory masking and auditory filter banks. Natural and important considerations served as a strong indicator that frame theory provides a solid foundation for the design of robust representations for perceptual signal analysis and processing. This connection was further reinforced by exposing the similarity between some concepts arising naturally in frame theory and signal processing, e.g. between frame multipliers and time-variant filters.
Finally, we have shown how frame theory can be used to analyze and implement invertible filter banks, in a quite general setting where previous synthesis methods might fail or be highly inefficient. The codes for Matlab/Octave to reproduce the results presented in Secs.~\ref{sec:frameth} and~\ref{sec:appli} in this chapter are available for download on the companion Webpage \url{https://www.kfs.oeaw.ac.at/frames_for_psychoacoustics}.

It is likely that readers of this contribution who are researchers in psychoacoustics or audio signal processing have already used frames without being aware of the fact. We hope that such readers will, to some extent, grasp the basic principles of the rich mathematical background provided by frame theory and its importance to fundamental issues of signal analysis and processing. With that knowledge, we believe, they will be able to better understand the signal analysis tools they use and might even be able to design new techniques that further elevate their research. 

On the other hand, researchers in applied mathematics or signal processing have been supplied with basic knowledge of some central psychoacoustics concepts. We hope that our short excursion piqued their interest and will serve as a starting point for applying their knowledge in the rich and various fields of psychoacoustics or perceptual signal processing.

\vspace{.1in}
{\bf Acknowledgments} The authors acknowledge support from the Austrian Science Fund (FWF) START-project FLAME ('Frames and Linear Operators for Acoustical Modeling and Parameter Estimation'; Y 551-N13) and the French-Austrian ANR-FWF project POTION (``Perceptual Optimization of Time-Frequency Representations and Audio Coding; I 1362-N30''). They thank B. Laback for discussions and W. Kreuzer for the help with a graphics software.

%
%
%


\end{document}